\documentclass[a4paper,10pt]{article}
\usepackage[english]{groreport}
\usepackage{graphicx}
\usepackage{amssymb}
\usepackage{amsmath}
\usepackage{mathrsfs}
\usepackage{mathabx} % a voir
\usepackage{shuffle}
\usepackage{algorithm}
\usepackage{algorithmic}
\usepackage{booktabs} % for tables
\usepackage{wrapfig}
\usepackage{listings} % to list code
\usepackage{xcolor}
\usepackage{color}
\usepackage{comment}

\usepackage{fancyhdr}
\usepackage[latin1]{inputenc} %pour avoir des beaux pdf
\usepackage{placeins} %utiliser avec \FloatBarrier
\usepackage{float}
\usepackage{floatflt}
\usepackage{subfloat}
\usepackage{theorem}
\usepackage{relsize}

\usepackage{caption}
\usepackage{subcaption}
\usepackage{epstopdf}
\usepackage{QFHeader}

\usepackage{bbm}

\usepackage{pgfplots}
\usepackage{tikz}
\usepackage{tikzscale}
\usetikzlibrary{shapes.multipart}
\usetikzlibrary{shapes.geometric, arrows, calc, automata, petri, shapes, decorations.pathmorphing, decorations.pathreplacing, positioning, fit, backgrounds}

\newtheorem{definition}{\textbf{Definition}}[section]
 
\newtheorem{theorem}{\textbf{Theorem}}[section]

\newtheorem{proposition}{\textsl{\textbf{Proposition}}}%[part]
\newtheorem{lemma}{\textbf{Lemma}}[section]
 
\newtheorem{remark}{\textsl{\textbf{Remark}}}[section]

\newtheorem{proof}{\textbf{proof}}[section]
\newcommand{\be}{\begin{equation}}
\newcommand{\en}{\end{equation}}
\newcommand{\beq}{\begin{equation}}
\newcommand{\eeq}{\end{equation}}
\newcommand{\bes}{\begin{eqnarray*}}
\newcommand{\ens}{\end{eqnarray*}}
\newcommand{\beqa}{\begin{eqnarray*}}
\newcommand{\eeqa}{\end{eqnarray*}}
\newcommand{\bea}{\begin{eqnarray}}
\newcommand{\ena}{\end{eqnarray}}

\newcommand{\xuparrow}[1]{%
  {\left\uparrow\vbox to #1{}\right.\kern-\nulldelimiterspace}
}

\newcommand{\xdownarrow}[1]{%
  {\left\downarrow\vbox to #1{}\right.\kern-\nulldelimiterspace}
}

\DeclareMathAlphabet{\mathsfit}{\encodingdefault}{\sfdefault}{m}{sl}
\SetMathAlphabet{\mathsfit}{bold}{\encodingdefault}{\sfdefault}{bx}{n}

\title{\textbf{The Geometry of Risk: Path-Dependent Regulation and Anticipatory Hedging via the SigSwap} 
} 

\author{Daniel \textsc{Bloch}  
\footnote{Visting Professor at the College of Engineering and Computer Science, VinUniversity, Hanoi.} \\
\textsc{University of Paris 6 \& VinUniversity} \\
Working Paper \\
\footnote{All mistakes are ours.} 
}
\date{19th of March 2026 \\
Version : 1.0.0}

\begin{document}
\QFHeader{\textbf{The Geometry of Risk: Path-Dependent Regulation and Anticipatory Hedging via the SigSwap}}{Daniel Bloch}{19th of March 2026}
\maketitle

\begin{abstract}
This paper introduces a transformative framework for managing path-dependent financial risk by shifting from traditional distribution-centric models to a geometry-based approach. We propose the SigSwap as a new regulatory instrument that allows market participants to decompose complex risk into terminal price law and the underlying texture of the price path. By utilising the mathematical properties of the path-signature, we demonstrate how previously unmodellable risks, such as lead-lag dynamics and flash-crash spiralling, can be converted into transparent and linear risk factors.
Central to this framework is the introduction of Signature Expected Shortfall, a risk metric designed to capture toxic path geometries that traditional methods often overlook. We also present a proactive monitoring system based on the Temporal Exposure Profile, which utilises anticipatory learning to detect potential liquidity traps and geometric decoupling before they manifest as realised volatility. 
The proposed methodology offers a rigorous alignment with global regulatory mandates, specifically the Fundamental Review of the Trading Book (FRTB), by providing a consistent bridge between physical stress-testing and risk-neutral hedging. Finally, we show that this algebraic approach significantly reduces computational complexity, enabling real-time, high-frequency risk reporting and capital optimisation for the modern financial ecosystem.
\end{abstract}

\medskip
\noindent\textbf{Keywords:} Path-Signature, SigSwap, Signature Expected Shortfall (S-ES), Anticipatory Reinforcement Learning (ARL), Geometric Risk Management, Fundamental Review of the Trading Book (FRTB), L\'evy Area, Temporal Exposure Profile (TEP), Measure Bridge, Algebraic Pricing Theory (APT), Residual Risk Add-on (RRAO), Lead-Lag Dynamics.

\section{Introduction}
\label{sec_introduction}
%------------------------------------------------------------------ 

%------------------------------------------------------------------
\subsection{High-level goal}
%------------------------------------------------------------------

The central objective of this work is to transition risk management from a \textit{distribution-centric} paradigm to a \textit{geometry-centric} one. Traditional frameworks, including those mandated by Basel III and FRTB, largely rely on the marginal distributions of terminal returns, treating the intricate "texture" of price paths as an unmodellable residual. Our goal is to formalise \textbf{Geometric Risk Management} by leveraging the path-signature to map complex, non-linear, and path-dependent market behaviours into a tractable algebraic space.

\medskip
\noindent
By unifying Anticipatory Reinforcement Learning (ARL) with the SigSwap basis, we aim to provide a self-consistent system where:
\begin{enumerate}
    \item \textbf{Path-dependency is linearised:} Turning "toxic" path-geometry, such as flash-crash spiralling and lead-lag decay, into transparent and hedgeable risk factors.
    \item \textbf{Monitoring becomes proactive:} Replacing reactive historical VaR with a continuous Temporal Exposure Profile (TEP) that anticipates intra-horizon liquidity traps.
    \item \textbf{Regulatory friction is minimised:} Providing a mathematically rigorous bridge between physical stress-testing and risk-neutral capital optimisation, ensuring that the "Residual Risk" of today becomes the "Modellable Basis" of tomorrow.
\end{enumerate}

\medskip
\noindent
Ultimately, this paper seeks to prove that by mastering the algebraic manifold of the signature, financial institutions can achieve a state of \textbf{Signature Neutrality}, a condition where a portfolio is immunised not just against where the market goes, but against the specific, potentially catastrophic \textit{way} it gets there.

%------------------------------------------------------------------
\subsection{Motivation and literature positioning}
%------------------------------------------------------------------

The motivation for this work stems from the growing disconnect between the increasing complexity of path-dependent financial products and the relatively static, distribution-centric nature of regulatory risk metrics (Danielsson et al. ~\cite{DanielssonEtAl04}, Cont ~\cite{Cont06}). Traditional models grounded in the Black-Scholes-Merton paradigm or stochastic/local volatility frameworks (Merton ~\cite{Merton76}, Heston ~\cite{Heston93}, Dupire ~\cite{Dupire94}, Hagan et al. ~\cite{HaganEtAl02}, Bergomi ~\cite{Bergomi15}) typically treat the "how" of price movement as a secondary byproduct of the "where." However, recent market phenomena, such as the 2010 Flash Crash, the 2020 liquidity gap, and the rise of high-frequency feedback loops, demonstrate that \textit{path-texture} (e.g., L\'evy Area and lead-lag decay) is often the primary driver of systemic risk and hedging failure.

\medskip
\noindent
In the context of the current literature, this paper positions itself at the intersection of four rapidly evolving fields:
\begin{enumerate}
    \item \textbf{Signature-Based Finance:} Building on the foundational work of Lyons et al. regarding the Rough Path Theory (Lyons et al. ~\cite{LyonsEtAl07,LyonsEtAl11}, Chevyrev et al. ~\cite{ChevyrevEtAl16}) and the Universal Pricing Theorem (Lyons et al. ~\cite{LyonsEtAl20}, Cuchiero et al. ~\cite{CuchieroEtAl23,CuchieroEtAl25}), we extend the use of signatures from purely generative or pricing tasks to a formal \textit{regulatory basis} for risk factor modellability.
    
    \item \textbf{Anticipatory Machine Learning:} While standard Reinforcement Learning (RL) in finance focuses on discrete delta-hedging or optimal execution, our approach utilises \textit{Anticipatory RL} (ARL) (Bloch ~\cite{Bloch26a}) to generate a continuous, self-consistent field of future signatures. This moves the literature beyond reactive "window-based" estimation toward a proactive, flow-based monitoring of the Temporal Exposure Profile (TEP).
    
    \item \textbf{Path-Space Measure Dynamics:} Traditional quantitative finance relies on terminal-point Radon-Nikodym derivatives or Esscher transforms to shift between physical ($\mathbb{P}$) and risk-neutral ($\mathbb{Q}$) measures. We extend this by utilising the \textit{Measure Bridge} (Bloch ~\cite{Bloch26b}) as a group-valued operator that preserves the algebraic integrity of the path-space. This allows for a self-consistent "Risk-Neutral Shift" of the entire signature manifold, bridging the gap between ARL-based stress-testing and the valuation of path-dependent exotics.
    
    \item \textbf{Regulatory Reform (FRTB):} Existing research on the Fundamental Review of the Trading Book often highlights the capital burden of the Residual Risk Add-on (RRAO) (ISDA, AFME, \& IIF ~\cite{ISDA18}, BCBS ~\cite{BCBS19}). We provide a mathematical solution to this burden by demonstrating that the signature provides the necessary span to linearise exotic payoffs, satisfying the "modellable risk factor" (MRF) requirements that traditional stochastic models cannot meet.
\end{enumerate}

\medskip
\noindent
By synthesising these domains, we address a critical gap in the literature: the lack of a unified algebraic framework that can simultaneously price, hedge, and regulate path-dependent risk with sub-microsecond latency. We move beyond the "texture-blind" limitations of Expected Shortfall, offering a geometrically rigorous alternative that is both computationally efficient and regulatorily compliant.

%------------------------------------------------------------------
\subsection{Main contributions}
%------------------------------------------------------------------

The primary contributions of this paper to the fields of quantitative finance and regulatory risk management are summarised as follows:

\begin{itemize}
    \item \textbf{The SigSwap Framework:} We introduce the SigSwap as a fundamental regulatory primitive. By leveraging the algebraic structure of the signature, we provide a formal mechanism to decompose market risk into symmetric (terminal law) and anti-symmetric (path-texture) components, effectively rendering "non-modellable" path-dependencies as linear risk factors.
    
    \item \textbf{Signature Expected Shortfall (S-ES):} We propose a novel risk metric, $S\text{-}ES_{\alpha}$, which lifts traditional tail-risk measures onto the signature manifold. We prove that $S\text{-}ES$ is a linear functional of the \textit{Tail-Conditional Expected Signature}, allowing risk managers to identify and quantify "bad geometry" and liquidity-trap scenarios that are invisible to terminal-value metrics.
    
    \item \textbf{The Anticipatory Measure Bridge:} we define the group-valued operator $\Lambda_{t,T}$ as a sufficient statistic for mapping physical tail-risk (learned via ARL) to risk-neutral hedging strategies. This establishes a consistent "Measure Bridge" that preserves the algebraic integrity of the path-space during stress-testing and valuation.
    
    \item \textbf{Proactive Monitoring via TEP:} We develop the \textit{Temporal Exposure Profile} (TEP) and the realised \textit{Anticipatory TD-Error}. These tools transform risk management from a reactive, window-based process into a proactive, continuous flow, providing a model-free Early Warning System (EWS) for geometric decoupling and subsequent volatility spikes.
    
    \item \textbf{Regulatory and Computational Optimisation:} We demonstrate that the Signature/APT framework aligns with FRTB mandates by linearising exotic risks, thereby reducing Residual Risk Add-ons (RRAO) and ensuring robust P\&L Attribution (PLA). Furthermore, we establish the $O(D)$ computational complexity of the geometric engine, enabling real-time, GPU-accelerated risk reporting for high-frequency environments.
\end{itemize}

%------------------------------------------------------------------
\subsection{Organisation of the paper}
%------------------------------------------------------------------

The remainder of the paper is organised as follows. Section (\ref{sec:foundations_regulatory_primitive}) establishes defines the SigSwap as a regulatory primitive that decomposes market risk into terminal law (symmetric) and path-dependent texture (anti-symmetric). It establishes the \textit{Measure Bridge} $\Lambda_{t,T}$ to formally map physical tail-risk to risk-neutral hedging instruments.
Section (\ref{sec:signature_expected_shortfall} introduces the \textit{Signature Expected Shortfall} (S-ES), a risk metric that lifts traditional expected shortfall to the signature manifold to capture "bad geometry" and path-dependent losses. It formalises the \textit{Tail-Conditional Expected Signature} as the central estimator for exotic risk and demonstrates how the L\'evy Area serves as a key indicator for gap-risk and liquidity crises.
Section (\ref{sec:anticipatory_risk_monitoring}) develops a proactive framework using the \textit{Temporal Exposure Profile} (TEP) to map portfolio value across continuous intermediate junctions rather than just terminal horizons. It introduces the \textit{Anticipatory TD-Error} as a model-free signal for "geometric decoupling" and demonstrates how achieving signature-neutrality can justify margin compression and optimised capital allocation for path-dependent exposures.
Section (\ref{sec:regulatory_alignment_frtb}) demonstrates how the SigSwap basis linearises exotic risk to meet stringent FRTB mandates. It shows that by spanning path-dependent payoffs within the signature space, banks can migrate "residual" risks into modellable categories, ensuring high-fidelity P\&L attribution (PLA) and utilising "Geometric Greeks" to nullify sensitivities and optimise regulatory capital charges.
Section (\ref{sec:stress_testing_measure_bridge}) redefines stress testing as the systematic perturbation of the \textit{Measure Bridge} operator $\Lambda_{t,T}$, allowing for the simulation of "geometrical" crises independent of terminal price shocks. It categorises market disruptions like flash crashes as anti-symmetric shocks to the signature manifold and introduces the \textit{SCF Resilience Threshold} to quantitatively identify when a market regime shift results in pure "path-chaos" that transcends normal modelling boundaries.
Section (\ref{sec:implementation_risk_engines}) details the transition from nested Monte Carlo simulations to the linear complexity of \textit{Algebraic Pricing Theory} (APT). By utilising Chen's Identity for incremental signature updates and leveraging GPU-accelerated tensor contractions, the framework enables sub-microsecond latency for real-time risk monitoring and the visualisation of "Geometric Trace" breaches for automated hedging.

%------------------------------------------------------------------
\section{Foundations: SigSwap as a regulatory primitive}
\label{sec:foundations_regulatory_primitive}
%------------------------------------------------------------------

This section establishes the mathematical and regulatory foundations of the SigSwap (Bloch ~\cite{Bloch26c}), positioning it as the fundamental instrument for capturing path-dependent risk factors that are typically treated as non-modellable under current frameworks.

%------------------------------------------------------------------
\subsection{Defining the SigSwap}
%------------------------------------------------------------------

%------------------------------------------------------------------
\subsubsection{The general instrument}
%------------------------------------------------------------------

Let the \textbf{time-extended path} $\mathbf{X} \in \mathbb{R}^{d+1}$ be a continuous semimartingale defined on a filtered probability space $(\Omega, \mathcal{A}, \{\mathcal{A}_t\}_{t \geq 0}, \mathbb{Q})$, where $\mathbb{Q}$ denotes the risk-neutral measure.

\begin{definition}[The SigSwap Payoff]
A SigSwap with maturity $T$ and strike tensor $\mathbf{K}_T \in T((\mathbb{R}^{d+1}))$ is a contract that, at maturity, pays the difference between the realised signature $S(\mathbf{X})_{0,T}$ and the strike:
\begin{equation}
    \mathcal{P}_T = S(\mathbf{X})_{0,T} - \mathbf{K}_T
\end{equation}
\end{definition}

%------------------------------------------------------------------
\subsubsection{The Level-2 SigSwap: Law vs. Texture}
%------------------------------------------------------------------

The SigSwap at level $r=2$ provides the minimal sufficient basis for decomposing market risk into terminal law and path-dependent geometry. We formalise this via the canonical decomposition of the rank-2 signature tensor.

\begin{definition}[Level-2 Signature Payoff]
For a $(d+1)$-dimensional time-extended price process $\mathbf{X}$, the level-2 SigSwap payoff $\mathcal{P}_T^{(2)}$ is defined as the difference between the realised iterated integral and the market-quoted strike tensor $\mathbf{K}_T^{(2)} \in \mathbb{R}^{(d+1) \times (d+1)}$:
\begin{equation}
    \mathcal{P}_T^{(2)} = \int_0^T (X_s - X_0) \otimes dX_s - \mathbf{K}_T^{(2)}.
\end{equation}
\end{definition}

\begin{proposition}[Geometric Risk Decomposition]
The risk associated with $\mathcal{P}_T^{(2)}$ is uniquely partitioned into two orthogonal subspaces:
\begin{enumerate}
    \item \textbf{The Symmetric Risk (The Law):} $\text{Sym}(\mathcal{P}_T^{(2)})$ captures the variance-covariance risk of the terminal returns. This represents the "magnitude" of market fluctuations.
    \item \textbf{The Anti-Symmetric Risk (The Texture):} $\text{Anti}(\mathcal{P}_T^{(2)})$ captures the L\'evy Area $\mathbb{A}_{0,T}$, representing the causal lead-lag structure and "rotational energy" of the price path.
\end{enumerate}
\end{proposition}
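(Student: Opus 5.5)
The plan is to prove the statement through the linear-algebraic decomposition $\mathbb{R}^{(d+1)\times(d+1)} = \text{Sym} \oplus \text{Anti}$. We make this decomposition orthogonal under the Frobenius inner product $\langle A,B\rangle = \mathrm{tr}(A^\top B)$, and then identify each component of the level-2 signature via Itô/Stratonovich calculus.

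\textbf{Step 1: uniqueness and orthogonality.} Define $\text{Sym}(M) = \tfrac12(M+M^\top)$ and $\text{Anti}(M) = \tfrac12(M-M^\top)$. These are complementary idempotent projections, so every $M$ splits uniquely as $M=\text{Sym}(M)+\text{Anti}(M)$. For $S$ symmetric and $A$ antisymmetric, $\mathrm{tr}(S^\top A) = \mathrm{tr}(AS) = \mathrm{tr}((AS)^\top) = -\mathrm{tr}(SA)$, which forces the trace to vanish. Applying this to $M=\mathcal{P}^{(2)}_T$ gives the asserted unique orthogonal partition. The same holds for the strike, and both projections are linear, so the partition passes to the payoff. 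At the level of random variables, any hedge or risk functional linear in $\mathcal{P}^{(2)}_T$ then splits additively. That is the sense in which the \emph{risk} is partitioned.

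\textbf{Step 2: the symmetric part is the law.} Interpret $\int (X_s-X_0)\otimes dX_s$ in the Stratonovich (geometric rough path) sense. The integration-by-parts formula (the shuffle identity at level 2) gives
\begin{equation}
\text{Sym}\Big(\int_0^T (X_s-X_0)\otimes\circ dX_s\Big) = \tfrac12 (X_T-X_0)\otimes(X_T-X_0).
\end{equation}
This depends only on the terminal increment, so its expectation is the second-moment (variance–covariance) matrix of terminal returns. If the integral is instead read in the Itô sense, as the displayed formula literally suggests, then the symmetric part picks up the correction $-\tfrac12[X,X]_T$. In that case I would state the claim with realised covariance included in the ``magnitude'' component. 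That is consistent with variance-swap intuition, since for the time-extended path the bracket is the realised covariance of the price coordinates.

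\textbf{Step 3: the antisymmetric part is the L\'evy area.} Compute $\text{Anti}$ entrywise as
\begin{equation}
\mathbb{A}^{ij}_{0,T} = \tfrac12\int_0^T \big((X^i_s-X^i_0)\,dX^j_s - (X^j_s-X^j_0)\,dX^i_s\big).
\end{equation}
This is precisely the L\'evy area. It is unaffected by the Itô/Stratonovich choice, because the bracket correction is symmetric. Its sign encodes which coordinate leads: it is the signed area enclosed by the 2D projection of the path and its chord. This justifies the ``lead-lag / rotational'' interpretation. Terminal increments do not determine it, which shows the two components carry genuinely distinct information.

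\textbf{Main obstacle.} The hard part is making ``orthogonal subspaces of risk'' precise beyond the Frobenius orthogonality of the tensor components. Statistical orthogonality, meaning zero covariance between symmetric and antisymmetric payoffs under $\mathbb{Q}$, does not hold in general. I would handle this by stating explicitly that orthogonality refers to the tensor-space inner product, which is what makes exposures hedgeable separately. I would then add a remark that under a reflection-symmetric law, for example Brownian motion with symmetric covariance, the two components are also uncorrelated. This follows from the invariance of the law under the reversal map, which flips the sign of the area while fixing the symmetric part.
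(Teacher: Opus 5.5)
Your proposal is correct and follows the same route the paper relies on. The paper gives no standalone proof of this proposition, but Step~1 of its appendix proofs of the L\'evy-area gap-risk lemma and of the flash-crash proposition uses exactly your argument: the $\text{Sym}/\text{Anti}$ split of the level-2 tensor, integration by parts giving $\text{Sym}(S^{(2)}_{0,T})=\tfrac12(X_T-X_0)(X_T-X_0)^\top$, and the identification of the antisymmetric part with the L\'evy area. Your explicit Frobenius-orthogonality check, the It\^o correction $-\tfrac12[X,X]_T$ to the symmetric part (which the paper avoids by implicitly using the Stratonovich lift), and the distinction between tensor orthogonality and statistical uncorrelatedness are genuine sharpenings of what the paper leaves implicit.
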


%------------------------------------------------------------------
\subsection{The SigSwap as a modellable risk factor (MRF)}
%------------------------------------------------------------------

Under the Fundamental Review of the Trading Book (FRTB), a risk factor is deemed "modellable" only if it can be mapped to real, liquid price observations. Historically, path-dependent features (e.g., forward-start volatility, correlation-gamma) have been relegated to the Non-Modellable Risk Factor (NMRF) category, incurring significant capital surcharges.

\begin{theorem}[Signature Observability]
Let $\hat{\Phi}^{\mathbb{Q}}_T$ be the implied expected signature recovered from the market prices of SigSwaps. If the market for SigSwaps is liquid, the signature coordinates $\{ \langle e_i \otimes e_j, \hat{\Phi}^{\mathbb{Q}}_T \rangle \}_{i,j}$ satisfy the BCBS requirements for Modellable Risk Factors (MRF). 
\end{theorem}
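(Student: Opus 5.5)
The argument is essentially a reduction: under the regulatory definition, a risk factor is modellable when it passes the BCBS Risk Factor Eligibility Test (RFET) and can be written as a function of observable, committed quotes. So I would first fix what ``liquid'' means for SigSwaps. I will take it to mean that, for each pair $(i,j)$ and each maturity bucket, there are enough real price observations of level-2 SigSwaps in any twelve-month window to satisfy the RFET: at least 24 observations with no 90-day period holding fewer than four, or at least 100 observations in total. The theorem then follows from two claims. First, the map from SigSwap quotes to the coordinates $\langle e_i\otimes e_j,\hat{\Phi}^{\mathbb{Q}}_T\rangle$ is explicit, linear, and needs no model-dependent calibration. Second, the RFET observation counts transfer from the instruments to the coordinates.

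For the first claim I would use the fair-strike argument. A SigSwap has zero value at inception, so under $\mathbb{Q}$ and with deterministic discounting we get $\mathbb{E}^{\mathbb{Q}}[\mathcal{P}^{(2)}_T]=0$. This gives $\mathbf{K}^{(2)}_T=\mathbb{E}^{\mathbb{Q}}\bigl[\int_0^T (X_s-X_0)\otimes dX_s\bigr]$, and pairing with $e_i\otimes e_j$ yields
\begin{equation*}
\langle e_i\otimes e_j,\hat{\Phi}^{\mathbb{Q}}_T\rangle=\langle e_i\otimes e_j,\mathbf{K}^{(2)}_T\rangle .
\end{equation*}
Each coordinate is therefore literally a quoted strike, with no inversion step. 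I would also use the Geometric Risk Decomposition proposition to split the coordinates into a symmetric part and an antisymmetric part. The symmetric coordinates are, by the time-extension and Itô's formula, linear combinations of terminal-return second moments and quadratic covariations, so they inherit observability from variance and covariance swaps as well. The antisymmetric (Lévy area) coordinates are observed only through SigSwaps, and this is where the liquidity hypothesis is actually used. The time components $e_0\otimes e_j$ are handled the same way as the others, as quoted strikes.

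For the second claim, each real SigSwap trade or committed quote at maturity $T'$ counts as an observation of the strike at $T'$. Maturities between quoted points would be handled with the FRTB bucketing rules: a coordinate at an unquoted $T$ is assigned to the bucket containing observed maturities, with interpolation allowed under the standard rules for curve risk factors. Because the recovery map is the identity on coordinates, the observation counts pass through unchanged, so the coordinates pass the RFET whenever the instruments do.

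I expect the main obstacle to be the second claim, specifically the status of individual antisymmetric coordinates. These are not separately traded unless SigSwaps exist per $(i,j)$ component, and quotes may be on linear combinations. In that case I would require the quoted combinations to span $\mathbb{R}^{(d+1)\times(d+1)}$ and argue that an invertible linear transform of RFET-passing factors is acceptable as a ``systematic'' data-derivation under the standard, provided the inversion is well-conditioned. If this cannot be made rigorous, I would state the theorem per coordinate, conditional on direct quotation.
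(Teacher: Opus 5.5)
Your proposal is correct and is essentially the paper's own argument made precise. The paper gives no appendix proof of this theorem; it rests only on the remark that exchange-cleared SigSwap strikes supply the FRTB ``real price'' observations, which is exactly your fair-strike identity $\langle e_i\otimes e_j,\hat{\Phi}^{\mathbb{Q}}_T\rangle=\langle e_i\otimes e_j,\mathbf{K}^{(2)}_T\rangle$ combined with the pass-through of RFET observation counts, and the obstacle you anticipate for quotes on linear combinations is already covered by the FRTB rule that risk factors derived solely from combinations of modellable risk factors are themselves modellable, so your spanning condition settles it.
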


\medskip
\noindent
By standardising the SigSwap, the regulatory "residual risk" is internalised into the model's primary risk factors. The "real price" requirement of FRTB is satisfied through the exchange-cleared SigSwap strikes, effectively converting geometric path-uncertainty into a modellable linear basis.

%------------------------------------------------------------------
\subsection{The measure bridge: Mapping physical tail-risk to hedges}
%------------------------------------------------------------------

To maintain consistency between historical risk assessment (required for internal models) and market-based hedging, we utilise the Risk-Neutral Shift $\Lambda_{t,T}$ as the sufficient statistic for measure transformation (Bloch ~\cite{Bloch26b}).

\begin{definition}[The Regulatory Measure Bridge]
Let $\mathbb{P}$ be the physical measure learned by the ARL generative engine. The risk-neutral manifold $\mathbb{Q}$ is obtained via the group-valued operator $\Lambda_{t,T} \in \mathcal{G}_{sig}$ such that:
\begin{equation}
    \hat{\Phi}^{\mathbb{Q}}_{T|t} = \hat{\Phi}^{\mathbb{P}}_{T|t} \otimes \Lambda_{t,T}.
\end{equation}
$\Lambda_{t,T}$ isolates the market price of path-risk, allowing the risk manager to project historical tail-scenarios from the ARL manifold into the risk-neutral pricing space.
\end{definition}

\begin{lemma}[Conservation of Geometric Risk]
\label{lem:conservation_geometric_risk}
The transformation $\Lambda_{t,T}$ preserves the algebraic structure of the signature. Consequently, a hedge constructed in the $\mathbb{Q}$-measure using SigSwaps is dual to the risk-minimising strategy in the $\mathbb{P}$-measure, provided the shift $\Lambda_{t,T}$ is locally stable with respect to the signature topology.
\end{lemma}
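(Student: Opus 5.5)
The plan is to make the two claims of Lemma~\ref{lem:conservation_geometric_risk} precise and then prove them in turn. The first claim is that $\Lambda_{t,T}$ preserves the algebraic structure. The second is that $\mathbb{Q}$-hedges built from SigSwaps are dual to $\mathbb{P}$-risk-minimising strategies.

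\textbf{Step 1: algebraic structure.} Since $\Lambda_{t,T} \in \mathcal{G}_{sig}$, I interpret it as a group-like element of the (truncated) tensor algebra $T((\mathbb{R}^{d+1}))$. Equivalently, $\Lambda_{t,T} = \exp(\lambda_{t,T})$ with $\lambda_{t,T}$ a Lie series.

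Right multiplication $R_\Lambda : a \mapsto a \otimes \Lambda_{t,T}$ is then an invertible linear map, with inverse $R_{\Lambda^{-1}}$. It maps group-like elements to group-like elements, because the coproduct $\Delta$ is an algebra morphism and $\Delta \Lambda = \Lambda \otimes \Lambda$.

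Expected signatures are not themselves group-like. They are, however, limits of convex combinations of group-like elements, and the set of such limits is preserved by a linear map that preserves group-likes. The shuffle identities $\langle u \shuffle v, a\rangle = \langle u,a\rangle\langle v,a\rangle$ for group-like $a$ are transported accordingly.

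At level $2$ the action is explicit through Chen's identity:
\[
(a \otimes \Lambda)^{(2)} = a^{(2)} + a^{(1)}\otimes \Lambda^{(1)} + \Lambda^{(2)},
\]
where the unit at level $0$ is preserved. Consequently the symmetric/anti-symmetric splitting of the Geometric Risk Decomposition proposition is respected up to an explicit affine correction. In particular, $\text{Sym}$ of a group-like element is $\tfrac12 (a^{(1)})^{\otimes 2}$, and this relation is preserved. This is the precise sense of ``conservation''.

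\textbf{Step 2: duality.} Take a hedge $h = \sum_w \theta_w\, e_w$ in the SigSwap basis, so that its $\mathbb{Q}$-value is the linear functional $\langle h, \hat{\Phi}^{\mathbb{Q}}_{T|t}\rangle$. Substituting the Measure Bridge definition gives
\[
\langle h, \hat{\Phi}^{\mathbb{P}}_{T|t}\otimes\Lambda_{t,T}\rangle = \langle R_\Lambda^{*} h, \hat{\Phi}^{\mathbb{P}}_{T|t}\rangle .
\]
Here $R_\Lambda^*$ is the adjoint on the dual (word) space. Concretely, $R_\Lambda^* e_w = \sum_{w=uv} \langle e_v,\Lambda\rangle\, e_u$, which is the deconcatenation action.

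The $\mathbb{P}$-risk-minimising problem is a quadratic minimisation of residual error over linear functionals in the SigSwap span, since polynomials of the signature are linear in signatures of higher order via the shuffle product. Its first-order conditions are linear in $\hat{\Phi}^{\mathbb{P}}$ of order up to $2N$. Because $R_\Lambda$ is an isomorphism, the transported problem is equivalent. Hence the $\mathbb{Q}$-optimal coefficients $\theta^{\mathbb{Q}}$ and the $\mathbb{P}$-optimal ones $\theta^{\mathbb{P}}$ are related by $\theta^{\mathbb{P}} = R^*_\Lambda \theta^{\mathbb{Q}}$. This relation is the claimed duality.

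\textbf{Main obstacle.} The main obstacle is the quadratic step. A variance involves the expected signature at doubled level, and for the duality to close I need
\[
\langle u \shuffle v, \Phi^{\mathbb{Q}} \rangle = \langle R_\Lambda^* u \shuffle R_\Lambda^* v, \Phi^{\mathbb{P}}\rangle.
\]
This holds because $R_\Lambda^*$ is a shuffle-algebra morphism, which is the dual of $\Lambda$ being group-like. The property fails for a general tensor, so I would isolate it as the key algebraic fact.

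The hypothesis ``locally stable'' is then used to control truncation. I would assume $\Lambda_{t,T}$ is continuous in the $p$-variation or signature topology, with factorially decaying levels. This ensures that truncating at level $N$ commutes with $R_\Lambda$ up to an error that vanishes as $N\to\infty$, so the finite-level duality passes to the limit.
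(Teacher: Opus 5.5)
\textbf{The gap: the relation $\theta^{\mathbb{P}}=R^*_\Lambda\theta^{\mathbb{Q}}$.} Your Step 1 and your key algebraic fact are sound. Because $\Lambda_{t,T}$ is group-like, $R_\Lambda$ intertwines the deshuffle coproduct. So $R^*_\Lambda$ is a shuffle-algebra morphism, and $\mathrm{Var}^{\mathbb{Q}}\langle \ell,S\rangle=\mathrm{Var}^{\mathbb{P}}\langle R^*_\Lambda\ell,S\rangle$ for every word-functional $\ell$, given the bridge up to level $2N$. You are also more careful than the paper, which asserts that $\hat{\Phi}^{\mathbb{P}}_{T|t}$ itself lies in $\mathcal{G}_{sig}$.

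The gap is the last inference of Step 2. Apply your identity to the residual $\ell=\mathbf{w}_\Pi-\theta$. It converts the $\mathbb{Q}$-problem for the claim $\mathbf{w}_\Pi$ into the $\mathbb{P}$-problem for a \emph{different} claim, $R^*_\Lambda\mathbf{w}_\Pi$, hedged with pulled-back instruments $R^*_\Lambda e_v$. It says nothing about $\min_\theta\mathrm{Var}^{\mathbb{P}}\langle\mathbf{w}_\Pi-\theta,S\rangle$, which is the $\mathbb{P}$-risk-minimising problem for $\Pi$. A SigSwap pays the same realised coordinate $\langle e_v,S\rangle$ under both measures; only the law changes.

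The relation is in fact false. Suppose $\mathbf{w}_\Pi$ lies in the span of all words of length at most $N$. Pathwise replication then gives $\theta^{\mathbb{P}}=\theta^{\mathbb{Q}}=\mathbf{w}_\Pi$ up to constants. But, with $e_\emptyset$ the empty word, $R^*_\Lambda(e_i\otimes e_j)=e_i\otimes e_j+\langle e_j,\Lambda_{t,T}\rangle e_i+\langle e_i\otimes e_j,\Lambda_{t,T}\rangle e_\emptyset$. This differs from $e_i\otimes e_j$ by a random term whenever $i$ is a price coordinate and $\langle e_j,\Lambda_{t,T}\rangle\neq 0$. In the unspanned case the $L^2(\mathbb{P})$- and $L^2(\mathbb{Q})$-projections are not related by $R^*_\Lambda$ either. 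Moreover, $R^*_\Lambda$ does not even preserve the hedge span unless the chosen set of words is prefix-closed.

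\textbf{How the paper's route avoids it, and what you can prove.} The paper works at first order. $\mathbf{h}=\mathbf{w}_\Pi$ kills the $\mathbb{Q}$ Geometric Delta, and $V_t=\langle R^*_\Lambda(\mathbf{w}_\Pi-\mathbf{h}),\hat{\Phi}^{\mathbb{P}}_{T|t}\rangle$ with $R^*_\Lambda$ a linear isomorphism. Hence $R^*_\Lambda(\mathbf{w}_\Pi-\mathbf{h})=0$ iff $\mathbf{h}=\mathbf{w}_\Pi$, so the \emph{same} weights neutralise the $\mathbb{P}$-exposure.

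Your second-order machinery gives the same answer once used correctly. $R^*_\Lambda$ is unitriangular and fixes $e_\emptyset$. Assuming the law of $S$ is non-degenerate, $\mathrm{Var}^{\mathbb{Q}}\langle\mathbf{w}_\Pi-\theta,S\rangle=0$ iff $R^*_\Lambda(\mathbf{w}_\Pi-\theta)\in\mathbb{R}e_\emptyset$. That holds iff $\mathbf{w}_\Pi-\theta\in\mathbb{R}e_\emptyset$, which in turn holds iff $\mathrm{Var}^{\mathbb{P}}\langle\mathbf{w}_\Pi-\theta,S\rangle=0$.

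So what you can actually prove is $\theta^{\mathbb{P}}=\theta^{\mathbb{Q}}$ for spanned claims. Beyond that, you only get a correspondence between the $\mathbb{Q}$-hedge of $\mathbf{w}_\Pi$ and the $\mathbb{P}$-hedge of $R^*_\Lambda\mathbf{w}_\Pi$. The relation $\theta^{\mathbb{P}}=R^*_\Lambda\theta^{\mathbb{Q}}$ should be dropped.
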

See proof in Appendix (\ref{app:proof_conservation_geometric_risk}).

%------------------------------------------------------------------
\section{Redefining market risk: Signature expected shortfall (S-ES)}
\label{sec:signature_expected_shortfall}
%------------------------------------------------------------------

Traditional risk metrics, such as Value-at-Risk (VaR) and Expected Shortfall (ES), are typically defined over the marginal distribution of terminal returns. In the presence of path-dependent exotics, these metrics fail to account for the "Texture" of the path, specifically, the sequence and geometry of price movements that trigger barriers or consume gamma. We resolve this by lifting the risk measure to the signature manifold.

%------------------------------------------------------------------
\subsection{Identifying "bad geometry" via signature clusters}
%------------------------------------------------------------------

In the Geometric Pricing Theory framework, a loss is not merely a consequence of a price level, but of a specific path trajectory. We define "Bad Geometry" as a subset of the signature space that maximises portfolio loss.

\begin{definition}[Signature Loss Functional]
\label{def:signature_loss_functional}
Let $\Pi$ be a portfolio of path-dependent derivatives. By the Universal Pricing Theorem, the value of $\Pi$ at time $T$ is a linear functional of the path signature:
\begin{equation}
    V_T(\Pi) = \langle \mathbf{w}_{\Pi}, S(\mathbf{X})_{0,T} \rangle,
\end{equation}
where $\mathbf{w}_{\Pi}$ is the signature-weight vector representing the portfolio's sensitivity to various iterated integrals. The loss functional is defined as $L(S) = V_0(\Pi) - \langle \mathbf{w}_{\Pi}, S \rangle$.
\end{definition}

\begin{definition}[Geometric Tail-Set]
For a confidence level $\alpha \in (0,1)$, the Geometric Tail-Set $\mathcal{T}_{\alpha} \subset \mathcal{S}_{sig}$ is the set of signatures whose associated losses exceed the Signature VaR ($S\text{-}VaR_{\alpha}$):
\begin{equation}
    \mathcal{T}_{\alpha} = \{ S \in \mathcal{S}_{sig} : L(S) \geq \inf \{ l : \mathbb{P}(L(S(\mathbf{X})_{0,T}) \leq l) \geq \alpha \} \}.
\end{equation}
\end{definition}

\medskip
\noindent
Unlike traditional tail-sets defined by a scalar price $X_T \leq K$, $\mathcal{T}_{\alpha}$ identifies clusters in the signature space, for instance, paths with high negative returns coupled with high L\'evy Area (directional spiralling).

%------------------------------------------------------------------
\subsection{Formal Definition of S-ES}
%------------------------------------------------------------------

We now introduce the Signature Expected Shortfall (S-ES), which computes the average loss over the Geometric Tail-Set.

\begin{definition}[Signature Expected Shortfall]
\label{def:signature_expected_shortfall}
The Signature Expected Shortfall at level $\alpha$ is the conditional expectation of the loss functional given that the signature falls within the geometric tail-set under the physical measure $\mathbb{P}$:
\begin{equation}
    S\text{-}ES_{\alpha} = \mathbb{E}_{\mathbb{P}} \left[ L(S(\mathbf{X})_{0,T}) \mid S(\mathbf{X})_{0,T} \in \mathcal{T}_{\alpha} \right].
\end{equation}
\end{definition}

\begin{proposition}[Linearity and Decomposition of S-ES]
\label{pro:linearity_decomposition_ses}
Because the loss $L(S)$ is linear in the signature coordinates, the $S\text{-}ES_{\alpha}$ can be expressed as the inner product of the portfolio weights and the \textbf{Tail-Conditional Expected Signature}:
\begin{equation}
    S\text{-}ES_{\alpha} = V_0(\Pi) - \langle \mathbf{w}_{\Pi}, \hat{\Phi}^{\mathbb{P}}_{T \mid \mathcal{T}_{\alpha}} \rangle,
\end{equation}
where $\hat{\Phi}^{\mathbb{P}}_{T \mid \mathcal{T}_{\alpha}} = \mathbb{E}_{\mathbb{P}} [ S(\mathbf{X})_{0,T} \mid S(\mathbf{X})_{0,T} \in \mathcal{T}_{\alpha} ]$.
\end{proposition}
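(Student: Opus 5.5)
The argument is linearity of conditional expectation applied to the affine loss of Definition~\ref{def:signature_loss_functional}. Before that, I need to make the two objects in the statement well-defined.

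First, the conditioning event. Write $L_\alpha := \inf\{ l : \mathbb{P}(L(S(\mathbf{X})_{0,T}) \leq l) \geq \alpha\}$. Then $\{S(\mathbf{X})_{0,T} \in \mathcal{T}_{\alpha}\} = \{L(S(\mathbf{X})_{0,T}) \geq L_\alpha\}$. This is an event in $\mathcal{A}_T$ because $S(\mathbf{X})_{0,T}$ is $\mathcal{A}_T$-measurable and $L$ is affine, hence continuous. Its probability is at least $1-\alpha>0$ by right-continuity of the distribution function, so the conditional expectation is an elementary one, $\mathbb{E}_{\mathbb{P}}[Y\mathbbm{1}_{\mathcal{T}_\alpha}]/\mathbb{P}(\mathcal{T}_\alpha)$. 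No regular conditional law is needed.

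Second, integrability. In the truncated setting at level $N$, the pairing $\langle \mathbf{w}_{\Pi}, S\rangle$ is a finite sum $\sum_{|I|\le N} w_I S^I$. I will assume the signature coordinates up to level $N$ are in $L^1(\mathbb{P})$; for a continuous semimartingale this follows from BDG-type moment bounds on iterated integrals. Since the indicator is bounded, each $\mathbb{E}_{\mathbb{P}}[S^I \mathbbm{1}_{\mathcal{T}_\alpha}]$ is finite. So $\hat{\Phi}^{\mathbb{P}}_{T\mid\mathcal{T}_\alpha}$ is defined coordinatewise as an element of the truncated tensor algebra.

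With these in place, the identity is a short computation. Using $L(S)=V_0(\Pi)-\sum_I w_I S^I$, linearity of the conditional expectation gives
\begin{equation*}
S\text{-}ES_\alpha = V_0(\Pi) - \sum_{I} w_I\, \mathbb{E}_{\mathbb{P}}\bigl[S(\mathbf{X})^I_{0,T} \mid S(\mathbf{X})_{0,T}\in\mathcal{T}_\alpha\bigr] = V_0(\Pi) - \langle \mathbf{w}_{\Pi}, \hat{\Phi}^{\mathbb{P}}_{T\mid\mathcal{T}_\alpha}\rangle .
\end{equation*}
This uses that $V_0(\Pi)$ is deterministic, being $\mathcal{A}_0$-measurable and fixed at inception. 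The tail set is a fixed event and does not depend on which coordinate is being integrated, so pulling the weights out is legitimate.

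The main obstacle is the untruncated case, where $\mathbf{w}_{\Pi}$ is an infinite series in $T((\mathbb{R}^{d+1}))$. There, exchanging the expectation with the infinite sum requires dominated convergence: $\sum_I |w_I|\,\mathbb{E}_{\mathbb{P}}|S^I|<\infty$. I would first try the factorial decay $\|S^{(n)}\|\le \|\mathbf{X}\|_{1\text{-var}}^n/n!$ in the bounded-variation (discretised) case. In the semimartingale case I would instead use the analogous $p$-variation estimates, together with exponential moments of the $p$-variation norm, which bound $\mathbb{E}\|S^{(n)}\|$ by a quantity that is summable against $\mathbf{w}_{\Pi}$ whenever $\mathbf{w}_{\Pi}$ grows at most geometrically. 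I would state this summability as the standing hypothesis under which the Universal Pricing Theorem representation of Definition~\ref{def:signature_loss_functional} holds. I would also remark that the tail set is chosen through $L$ itself, so no further measurability or atom issues arise beyond the probability bound noted above.
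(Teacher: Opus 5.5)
Your proposal is correct and follows essentially the paper's argument: substitute the affine loss into the conditional expectation, pull out the deterministic $V_0(\Pi)$, and commute $\langle \mathbf{w}_{\Pi},\cdot\rangle$ with the conditional expectation. The paper does this last step in one move via Bochner integrability in $\mathcal{H}_{sig}$, while you do it coordinatewise and add an explicit summability hypothesis for the untruncated case. Your preliminary checks that $\mathbb{P}(\mathcal{T}_\alpha)\ge 1-\alpha>0$ and that the coordinates are integrable usefully make explicit what the paper leaves implicit, with two small corrections: that lower bound comes from $\mathbb{P}(L\le l)<\alpha$ for all $l<L_\alpha$ (the definition of the infimum) rather than from right-continuity, and integrability must be kept as a hypothesis, since BDG does not give it for an arbitrary continuous semimartingale without moment conditions.
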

See proof in Appendix (\ref{app:proof_linearity_decomposition_ses}). \\

\medskip
\noindent
This result is mathematically significant: it implies that to compute the tail risk of any exotic portfolio, one only needs to estimate the single "Tail Signature" $\hat{\Phi}^{\mathbb{P}}_{T \mid \mathcal{T}_{\alpha}}$. 

%------------------------------------------------------------------
\subsection{The Role of the L\'evy Area in Liquidity and Gap Risk}
%------------------------------------------------------------------

The anti-symmetric component of the tail signature provides a direct measure of "toxic" path geometry during stress events.

\begin{lemma}[L\'evy Area as a Gap-Risk Indicator]
\label{lem:levy_area_gap_risk_indicator}
Let $\mathbb{A}_{0,T} = \text{Anti}(S^{(2)})_{0,T}$. During a liquidity crisis or a "Flash Crash," the discrepancy between the symmetric variance and the anti-symmetric L\'evy Area increases. The $S\text{-}ES_{\alpha}$ captures this as an increase in the sensitivity to the lead-lag coordinates of $\hat{\Phi}^{\mathbb{P}}_{T \mid \mathcal{T}_{\alpha}}$.
\end{lemma}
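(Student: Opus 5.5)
The statement is qualitative, so the first step is to pin it down as something provable. I would work at level two of the signature and split the tail signature as
\[
\hat{\Phi}^{\mathbb{P},(2)}_{T \mid \mathcal{T}_{\alpha}} = \mathbb{E}_{\mathbb{P}}\bigl[\text{Sym}(S^{(2)})_{0,T} \mid \mathcal{T}_{\alpha}\bigr] + \mathbb{E}_{\mathbb{P}}\bigl[\mathbb{A}_{0,T} \mid \mathcal{T}_{\alpha}\bigr].
\]
This splitting is legitimate by the Geometric Risk Decomposition proposition (orthogonality of the two subspaces) and by the linearity of conditional expectation. For a geometric (Stratonovich) lift, shuffle/Chen gives
\[
\text{Sym}(S^{(2)})_{0,T} = \tfrac12 (X_T - X_0)^{\otimes 2}.
\]
So the symmetric part depends only on the terminal increment, while $\mathbb{A}_{0,T}$ carries all remaining level-2 path information. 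I would then model a ``liquidity crisis / flash crash'' as a change of the conditional law of the path on $\mathcal{T}_{\alpha}$ that keeps the law of $X_T - X_0$ fixed but raises $\mathbb{E}_{\mathbb{P}}[\|\mathbb{A}_{0,T}\| \mid \mathcal{T}_{\alpha}]$. The intended picture is a stressed lead-lag, for instance a delayed response of one coordinate, or spiralling. Under this model, the ``discrepancy'' in the lemma is the ratio $\|\mathbb{E}[\mathbb{A}\mid\mathcal{T}_\alpha]\| / \|\mathbb{E}[\text{Sym}\mid\mathcal{T}_\alpha]\|$, and the claim becomes that this ratio increases.

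The second step is the sensitivity statement. By Proposition~\ref{pro:linearity_decomposition_ses},
\[
S\text{-}ES_{\alpha} = V_0(\Pi) - \langle \mathbf{w}_{\Pi}, \hat{\Phi}^{\mathbb{P}}_{T \mid \mathcal{T}_{\alpha}} \rangle .
\]
Write $\mathbf{w}_{\Pi}^{(2)} = \mathbf{w}^{\text{sym}} + \mathbf{w}^{\text{anti}}$. Orthogonality then gives
\[
\frac{\partial\, S\text{-}ES_{\alpha}}{\partial\, \langle e_i\otimes e_j - e_j \otimes e_i, \hat{\Phi}\rangle} = -\tfrac12\bigl(w_{ij}-w_{ji}\bigr).
\]
So the first-order change in $S\text{-}ES_\alpha$ under the stress is $-\langle \mathbf{w}^{\text{anti}}, \Delta \mathbb{E}[\mathbb{A}\mid\mathcal{T}_\alpha]\rangle$. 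This change is nonzero exactly when the portfolio has antisymmetric weight, which is the case for lead-lag-sensitive exotics and barrier/gap products.

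To make the wording ``captures this as an increase'' precise, I would argue through the tail-set itself. Because $\mathcal{T}_{\alpha}$ is defined by $L(S)$, which loads on $\mathbb{A}$ through $\mathbf{w}^{\text{anti}}$, paths with large $-\langle\mathbf{w}^{\text{anti}},\mathbb{A}\rangle$ are selected into the tail. Hence $\mathbb{E}[\mathbb{A}\mid\mathcal{T}_\alpha]$ is aligned with $-\mathbf{w}^{\text{anti}}$. An inflation of the dispersion of $\mathbb{A}$ (with the terminal law held fixed) then raises $S\text{-}ES_\alpha$ monotonically. I would get this from a comparison argument: ES of a linear loss $a + b$, where $a$ has a fixed law and $b$ is scaled by $\lambda$, is nondecreasing in $\lambda\ge 1$ when $b$ is independent of $a$ (conditionally) and centred. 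This follows from the convex-order monotonicity of ES, since $\lambda b$ dominates $b$ in convex order.

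The main obstacle is that independence assumption. In a genuine flash crash, the area and the terminal return are correlated, so the convex-order argument does not apply directly. I would first try conditioning on $X_T - X_0$ and applying the convex-order result to the conditional law of $\mathbb{A}$ given the increment. This needs the stress to act as a mean-preserving spread of that conditional law, and I would state that as the formal hypothesis defining a ``liquidity crisis''.
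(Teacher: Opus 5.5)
Your first two steps are the paper's proof. It also splits $S^{(2)}_{0,T}$ into $\tfrac12 (X_T-X_0)(X_T-X_0)^\top$ plus the L\'evy area, invokes Proposition~\ref{pro:linearity_decomposition_ses}, and reads off the sensitivity $-\mathbf{w}_{\Pi}^{(2,A)}$ of $S\text{-}ES_{\alpha}$ to the anti-symmetric coordinates of $\hat{\Phi}^{\mathbb{P}}_{T \mid \mathcal{T}_{\alpha}}$. You are also right to insist on a geometric lift: under an It\^o reading of the paper's level-2 payoff, the symmetric part carries an extra $-\tfrac12 [X,X]_T$.

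The routes part at the word ``increase''. The paper defines the discrepancy pathwise as $\|\text{Anti}(S^{(2)}_{0,T})\|_F^2 - \|\text{Sym}(S^{(2)}_{0,T})\|_F^2$. It then asserts that this grows in a crisis and that the tail signature ``shifts its mass toward the anti-symmetric coordinates''. From this it concludes that $S\text{-}ES_{\alpha}$ grows in proportion to the expected L\'evy area of crash paths, but nothing beyond linearity is actually derived.

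You instead state the crisis as an explicit hypothesis: a mean-preserving spread of the area, conditional on the rest of the loss, whose law is held fixed. You then prove monotonicity of $S\text{-}ES_{\alpha}$ from the convex-order monotonicity of ES. Your conditional extension does go through. If the law of $a$ is unchanged and, for each value of $a$, the law of $b'$ given $a$ dominates that of $b$ given $a$ in convex order, then $\mathbb{E}f(a+b') \geq \mathbb{E}f(a+b)$ for every convex $f$.

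The paper's version buys a first-order ``proportional'' reading. Yours buys an actual theorem. It also makes clear that the constant gradient $-\mathbf{w}^{\text{anti}}$ cannot itself increase, so the lemma can only mean that the area channel's contribution to $S\text{-}ES_{\alpha}$ grows. In both versions the lemma's first sentence remains a modelling postulate.

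To make your version watertight, add three things.
\begin{enumerate}
\item The paper's $S\text{-}ES_{\alpha}$ is the tail conditional expectation $\mathbb{E}_{\mathbb{P}}[L \mid L \geq \mathrm{VaR}_{\alpha}]$, which is not convex-order monotone when $L$ has atoms. It coincides with the quantile-average ES when $\mathbb{P}(L \geq \mathrm{VaR}_{\alpha}) = 1-\alpha$, e.g.\ for atomless loss laws. So you must assume this both before and after the stress; the paper itself assumes it in the proof of Proposition~\ref{pro:convergence_ses}.
\item If $\mathbf{w}_{\Pi}$ loads on signature levels $\geq 3$, spreading the area generally moves those coordinates too. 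Either truncate $\mathbf{w}_{\Pi}$ at level two, or condition on the whole non-area part $a$ of the loss rather than on $X_T - X_0$ alone.
\item Your step-two formula $-\langle \mathbf{w}^{\text{anti}}, \Delta\mathbb{E}[\mathbb{A} \mid \mathcal{T}_{\alpha}]\rangle$ for the first-order change ignores that the stress re-selects $\mathcal{T}_{\alpha}$. As a result, $\mathbb{E}[\text{Sym}(S^{(2)}) \mid \mathcal{T}_{\alpha}]$ moves even when the unconditional terminal law is fixed; the paper shares this looseness. Your step-three argument, phrased in unconditional laws, does not need that formula, so use it as the single definition of the stress.
\end{enumerate}
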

See proof in Appendix (\ref{app:proof_levy_area_gap_risk_indicator}). \\

\begin{remark}
Traditional ES is "texture-blind." Two paths with the same terminal price $X_T$ but different winding (L\'evy Area) contribute identically to traditional ES. In contrast, $S\text{-}ES_{\alpha}$ penalises paths that exhibit high "rotational energy," which typically characterises disorderly markets and gap-risk events where hedges (like Delta-hedging) fail due to path-convexity.
\end{remark}

%------------------------------------------------------------------
\subsection{Estimation of the Tail-Conditional Expected Signature}
%------------------------------------------------------------------

To operationalise $S\text{-}ES_{\alpha}$, we require an estimator for the tail-conditional expected signature $\hat{\Phi}^{\mathbb{P}}_{T \mid \mathcal{T}_{\alpha}}$. Given the grounded state $\hat{\Phi}_{t|t}$, the ARL engine acts as a generative model for the future path-law.

\begin{definition}[Generative Signature Ensemble]
Let $\mathcal{S}^{(N)}_{t,T} = \{ \bar{S}^{(i)}_{t,T} \}_{i=1}^N$ be an ensemble of $N$ synthetic signatures generated by the ANJD (Anticipatory Non-Linear Joint Dynamics) flow, conditioned on the current market state. Each $\bar{S}^{(i)}_{t,T}$ represents a physically consistent realisation of the path-geometry under $\mathbb{P}$.
\end{definition}

\begin{lemma}[Tail-Signature Estimator]
\label{lem:tail_signature_estimator}
The empirical estimator for the tail-conditional expected signature is the sample mean of the subset of the ensemble that minimises the portfolio value. Let $\pi: \{1, \dots, N\} \to \{1, \dots, N\}$ be a permutation such that the realised losses are ordered: $L(\bar{S}^{(\pi(1))}) \geq L(\bar{S}^{(\pi(2))}) \geq \dots \geq L(\bar{S}^{(\pi(N))})$. The estimator is given by:
\begin{equation}
    \hat{\Phi}^{\mathbb{P}}_{T \mid \mathcal{T}_{\alpha}} \approx \frac{1}{\lfloor N(1-\alpha) \rfloor} \sum_{k=1}^{\lfloor N(1-\alpha) \rfloor} \bar{S}^{(\pi(k))}_{t,T}.
\end{equation}
\end{lemma}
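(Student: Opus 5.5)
The estimator should be read as the plug-in (empirical-quantile) version of the tail-conditional expectation in Proposition~\ref{pro:linearity_decomposition_ses}, and I would prove consistency: the right-hand side converges to $\hat{\Phi}^{\mathbb{P}}_{T \mid \mathcal{T}_{\alpha}}$ as $N\to\infty$, assuming the ensemble $\{\bar S^{(i)}_{t,T}\}$ is i.i.d.\ from the ANJD law, i.e.\ from the law of $S(\mathbf{X})_{0,T}$ under $\mathbb{P}$. I also assume the truncated signature has finite first moment, so that $\mathbb{E}_{\mathbb{P}}\|S\|<\infty$. Since $L$ is affine in $S$ by Definition~\ref{def:signature_loss_functional}, the scalar losses $L^{(i)}=L(\bar S^{(i)})$ are i.i.d.\ copies of $L(S(\mathbf{X})_{0,T})$. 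The ordering $\pi$ is therefore just the order statistics of $L^{(i)}$, and the $k_N=\lfloor N(1-\alpha)\rfloor$ selected indices are exactly the samples whose loss exceeds the empirical $\alpha$-quantile $\hat\ell_N=L^{(\pi(k_N))}$.

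The proof would proceed in three steps. First, I would write the estimator as
\[
\frac{N}{k_N}\cdot\frac1N\sum_{i=1}^N \bar S^{(i)}\,\mathbbm{1}\{L^{(i)}\ge \hat\ell_N\}.
\]
Second, I would use the standard quantile consistency result: if the distribution function of $L$ is continuous and strictly increasing at $\ell_\alpha=S\text{-}VaR_\alpha$, then $\hat\ell_N\to\ell_\alpha$ almost surely. Third, I would replace $\hat\ell_N$ by $\ell_\alpha$ inside the indicator. The strong law of large numbers then gives
\[
\frac1N\sum_i \bar S^{(i)}\mathbbm{1}\{L^{(i)}\ge\ell_\alpha\}\to \mathbb{E}_{\mathbb{P}}\!\left[S\,\mathbbm{1}\{S\in\mathcal{T}_\alpha\}\right],
\]
and since $k_N/N\to 1-\alpha=\mathbb{P}(\mathcal{T}_\alpha)$, the ratio converges to the conditional expectation defining $\hat{\Phi}^{\mathbb{P}}_{T \mid \mathcal{T}_{\alpha}}$.

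The main obstacle is the third step, i.e.\ controlling the error made by the random threshold. I would bound
\[
\Big\|\frac1N\sum_i \bar S^{(i)}\big(\mathbbm{1}\{L^{(i)}\ge\hat\ell_N\}-\mathbbm{1}\{L^{(i)}\ge\ell_\alpha\}\big)\Big\|\le \frac1N\sum_i\|\bar S^{(i)}\|\,\mathbbm{1}\{L^{(i)}\in[\ell_\alpha-\varepsilon,\ell_\alpha+\varepsilon]\}
\]
on the event $|\hat\ell_N-\ell_\alpha|<\varepsilon$. By the strong law of large numbers, the right-hand side tends to $\mathbb{E}\big[\|S\|\mathbbm{1}\{|L-\ell_\alpha|\le\varepsilon\}\big]$. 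This vanishes as $\varepsilon\to0$ by dominated convergence, using the integrability of $\|S\|$ and the continuity of the distribution of $L$ at $\ell_\alpha$. Uniform integrability of $\|S\|$ is the key hypothesis here.

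There are two points to state explicitly. One is the off-by-one effect of the floor function: the difference between $k_N$ and the exact count above $\hat\ell_N$ is $O(1)$ terms, each $O(\max_i\|\bar S^{(i)}\|)$, which is negligible after division by $N$ given a second moment. The other is that the lemma's ``$\approx$'' should be understood as exactly this asymptotic equality, with the usual $O(N^{-1/2})$ rate available under a bounded density of $L$ near $\ell_\alpha$. Finally, combining with Proposition~\ref{pro:linearity_decomposition_ses}, $V_0(\Pi)$ minus the pairing of $\mathbf{w}_\Pi$ with the estimator recovers the usual empirical expected shortfall of the losses, which serves as a sanity check.
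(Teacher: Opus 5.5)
Your proposal is correct and follows the same plug-in route as the paper's proof. Both read the top-$\lfloor N(1-\alpha)\rfloor$ average as the empirical counterpart of $\mathbb{E}_{\mathbb{P}}[S\,\mathbb{I}_{\{S\in\mathcal{T}_\alpha\}}]/\mathbb{P}(S\in\mathcal{T}_\alpha)$, take the order statistic $L^{(\pi(\lfloor N(1-\alpha)\rfloor))}$ as the empirical threshold, and conclude with the strong law of large numbers. The only difference is in your favour: the paper applies the strong law directly to the selected sample mean, whose summands are not i.i.d.\ because the selection depends on the whole sample, whereas your quantile-consistency step plus the $\varepsilon$-band bound and dominated convergence is exactly what makes that limit rigorous.
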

See proof in Appendix (\ref{app:proof_tail_signature_estimator}). \\

\begin{proposition}[Convergence of S-ES]
\label{pro:convergence_ses}
As $N \to \infty$, the estimator $\hat{\Phi}^{\mathbb{P}}_{T \mid \mathcal{T}_{\alpha}}$ converges in the signature topology to the centroid of the geometric tail-set. The resulting risk metric, $S\text{-}ES_{\alpha} \approx V_0(\Pi) - \langle \mathbf{w}_{\Pi}, \hat{\Phi}^{\mathbb{P}}_{T \mid \mathcal{T}_{\alpha}} \rangle$, provides a simulation-consistent measure of the expected loss during path-wise stress events.
\end{proposition}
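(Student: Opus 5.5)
The plan is to read the statement as a consistency result for the empirical tail-mean (an empirical CVaR) applied to the signature-valued samples. The linearity in Proposition~\ref{pro:linearity_decomposition_ses} then reduces everything to a scalar CVaR convergence problem.

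I will work in a truncated tensor algebra $T^{(\leq r)}(\mathbb{R}^{d+1})$. This is a finite-dimensional space, so the signature topology is equivalent to the Euclidean norm on coordinates. I assume:
\begin{itemize}
\item the ensemble $\{\bar S^{(i)}_{t,T}\}$ is i.i.d.\ with law equal to the $\mathbb{P}$-law of $S(\mathbf{X})_{t,T}$;
\item $\mathbb{E}_{\mathbb{P}}\|S(\mathbf{X})_{t,T}\|<\infty$, which holds for continuous semimartingales with moments, since each level is a polynomial in iterated integrals;
\item the scalar loss $L(S)=V_0(\Pi)-\langle\mathbf{w}_\Pi,S\rangle$ has a continuous distribution at its $\alpha$-quantile $q_\alpha=S\text{-}VaR_\alpha$, with positive density there.
\end{itemize}
Under the last assumption $\mathbb{P}(S\in\mathcal{T}_\alpha)=1-\alpha$. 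The target centroid is then
\[
\hat{\Phi}^{\mathbb{P}}_{T\mid\mathcal{T}_\alpha}=\frac{1}{1-\alpha}\,\mathbb{E}_{\mathbb{P}}\bigl[S\,\mathbf{1}_{\{L(S)\geq q_\alpha\}}\bigr].
\]

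The key steps, in order:
\begin{enumerate}
\item Let $k_N=\lfloor N(1-\alpha)\rfloor$ and let $\hat q_N=L(\bar S^{(\pi(k_N))})$ be the empirical upper order statistic. Then the estimator of Lemma~\ref{lem:tail_signature_estimator} equals
\[
\frac{1}{k_N}\sum_{i=1}^N \bar S^{(i)}\,\mathbf{1}_{\{L(\bar S^{(i)})\geq \hat q_N\}},
\]
up to ties, which occur with probability zero.
\item Show $\hat q_N\to q_\alpha$ almost surely. This is standard convergence of sample quantiles, using Glivenko--Cantelli together with uniqueness of the quantile.
\item Decompose the estimator as
\[
\frac{N}{k_N}\left[\frac1N\sum_i \bar S^{(i)}\mathbf{1}_{\{L(\bar S^{(i)})\geq q_\alpha\}}\;+\;\frac1N\sum_i \bar S^{(i)}\bigl(\mathbf{1}_{\{L(\bar S^{(i)})\geq \hat q_N\}}-\mathbf{1}_{\{L(\bar S^{(i)})\geq q_\alpha\}}\bigr)\right].
\]
\item The first term converges almost surely to $\mathbb{E}[S\mathbf{1}_{\{L\geq q_\alpha\}}]$ by the strong law of large numbers, applied coordinatewise. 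Since $N/k_N\to 1/(1-\alpha)$, this gives the centroid.
\item The second term is the main obstacle, because the threshold is data-dependent. I would bound its norm by
\[
\frac1N\sum_i\|\bar S^{(i)}\|\,\mathbf{1}_{\{|L(\bar S^{(i)})-q_\alpha|\leq|\hat q_N-q_\alpha|\}}.
\]
For any fixed $\varepsilon>0$, on the event $|\hat q_N-q_\alpha|<\varepsilon$ this is dominated by $\frac1N\sum_i\|\bar S^{(i)}\|\mathbf{1}_{\{|L-q_\alpha|\leq\varepsilon\}}$. By the SLLN that quantity tends to $\mathbb{E}[\|S\|\mathbf{1}_{\{|L-q_\alpha|\leq\varepsilon\}}]$. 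Letting $\varepsilon\downarrow0$ and using dominated convergence (integrability of $\|S\|$ plus no atom at $q_\alpha$) sends this bound to zero.
\item Transfer the result to the risk metric. The map $\Phi\mapsto V_0(\Pi)-\langle\mathbf{w}_\Pi,\Phi\rangle$ is continuous and linear on the truncated space. Convergence of $S\text{-}ES_\alpha$ to its definition in Definition~\ref{def:signature_expected_shortfall} therefore follows from the first steps combined with Proposition~\ref{pro:linearity_decomposition_ses}.
\end{enumerate}

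Two caveats should be stated explicitly. First, if the loss law has an atom at $q_\alpha$, the standard CVaR tie-splitting correction is needed; the estimator then converges to the Acerbi--Tasche tail mean rather than the naive conditional expectation. Second, the argument covers only sampling error. The phrase ``simulation-consistent'' presumes the ANJD ensemble is faithful to $\mathbb{P}$; any bias in the generative flow propagates linearly into $S\text{-}ES_\alpha$ through $\mathbf{w}_\Pi$. That propagation is bounded by $\|\mathbf{w}_\Pi\|$ times the discrepancy in tail-expected signatures, which I would note as a remark rather than prove.
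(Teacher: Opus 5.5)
Your proposal is correct and follows the same route as the paper. Both arguments rest on three ingredients: consistency of the empirical quantile under an atomless or continuity-at-$q_\alpha$ assumption on the loss, a strong law of large numbers for the tail average of the signature samples, and continuity of the linear map $\Phi\mapsto V_0(\Pi)-\langle\mathbf{w}_\Pi,\Phi\rangle$ to pass to $S\text{-}ES_\alpha$.

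Your step 5 is a genuine improvement. The paper's Step 3 applies the SLLN directly to the order-statistic average $\frac{1}{\lfloor N(1-\alpha)\rfloor}\sum_{k}\bar S^{(\pi(k))}$. Its summands are selected by the data-dependent threshold $\hat q_N$ and so are not i.i.d., and the paper's pointwise-indicator claim and appeal to Varadarajan's theorem do not by themselves handle an unbounded integrand with a moving threshold. Your bound via $|L-q_\alpha|\le|\hat q_N-q_\alpha|$, followed by dominated convergence as $\varepsilon\downarrow 0$ along a countable sequence, closes exactly that gap.
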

See proof in Appendix (\ref{app:proof_convergence_ses}). \\

\medskip
\noindent
This estimation procedure identifies the "Mean Toxic Path" for a specific portfolio. By examining the anti-symmetric components of this tail-signature, the risk manager can observe whether the projected losses are driven by directional moves (Symmetric part) or by adverse lead-lag dynamics and spiralling (Anti-symmetric part).

%------------------------------------------------------------------
\section{Anticipatory risk monitoring: The temporal exposure profile}
\label{sec:anticipatory_risk_monitoring}
%------------------------------------------------------------------

While traditional risk management is a reactive process calibrated to historical windowing, the integration of Anticipatory Reinforcement Learning (ARL) enables a proactive monitoring framework. By generating a continuous predictive flow of the signature manifold, we can monitor risk not just at the terminal horizon $T$, but across every intermediate temporal junction $s \in [t, T]$.

%------------------------------------------------------------------
\subsection{The predictive flow: Intra-horizon exposure mapping}
%------------------------------------------------------------------

The ARL engine provides a rolling sequence of anticipatory value functions $\{\hat{V}_{s|t}\}_{s=t}^T$, which represent the model's internal projection of the portfolio's path-dependent value.

\begin{definition}[Temporal Exposure Profile]
The Temporal Exposure Profile (TEP) at time $t$ is the sequence of expected values of the portfolio $\Pi$ along the generated signature-path manifold:
\begin{equation}
    \mathcal{E}_t = \{ \hat{V}_{s|t} \}_{s=t}^T, \quad \text{where } \hat{V}_{s|t} = \langle \mathbf{w}_{\Pi}, \hat{\Phi}_{s|t} \rangle.
\end{equation}
The TEP captures the "breathing" of the portfolio value as it traverses the expected path-geometry, allowing for the detection of intra-period breaches of risk limits.
\end{definition}

\begin{proposition}[Path-Wise Solvency Constraint]
\label{pro:path_wise_solvency_constraint}
A portfolio is path-wise solvent at level $\epsilon$ if the TEP remains above the threshold $Z$ for all $s \in [t, T]$ with probability $1-\epsilon$ under the generative measure $\mathbb{P}_{\theta}$:
\begin{equation}
    \mathbb{P}_{\theta} \left( \min_{s \in [t, T]} \hat{V}_{s|t} \geq Z \right) \geq 1 - \epsilon.
\end{equation}
This allows the risk manager to preemptively identify "liquidity traps" where a portfolio might be solvent at $T$ but insolvent at some $s < T$ due to unfavorable path-winding.
\end{proposition}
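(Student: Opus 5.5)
The proposition is essentially a definition of path-wise solvency together with an interpretive claim. What needs proof is that the constraint is well-posed, that it is strictly stronger than terminal solvency, and that it is non-vacuous: there exist portfolios solvent at $T$ but not path-wise. I would proceed in three steps.

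\emph{Step 1: well-posedness.} The event $\{\min_{s\in[t,T]}\hat V_{s|t}\ge Z\}$ must be measurable under $\mathbb{P}_\theta$. By the Temporal Exposure Profile definition, $\hat V_{s|t}=\langle \mathbf{w}_\Pi,\hat\Phi_{s|t}\rangle$ is a finite linear functional of the (truncated) signature flow. The signature of a continuous semimartingale is continuous in $s$ by Chen's identity and the continuity of iterated Stratonovich integrals. So $s\mapsto \hat V_{s|t}$ is continuous on the compact interval $[t,T]$, and the minimum is attained. It also coincides with the infimum over rational $s$, hence it is a random variable, and the probability in the statement is well defined.

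\emph{Step 2: comparison with terminal solvency.} Since $\min_{s}\hat V_{s|t}\le \hat V_{T|t}$, we have the event inclusion
\begin{equation*}
\Bigl\{\min_{s\in[t,T]}\hat V_{s|t}\ge Z\Bigr\}\subseteq\{\hat V_{T|t}\ge Z\}.
\end{equation*}
Therefore path-wise solvency at level $\epsilon$ implies terminal solvency at level $\epsilon$, and the path-wise constraint is the stronger (more conservative) requirement.

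\emph{Step 3: strictness and the liquidity trap.} To show the inclusion can be strict with positive probability, I would build a level-2 example. Take $\mathbf{w}_\Pi$ with a nonzero anti-symmetric component, so that $\hat V_{s|t}$ loads on the L\'evy area $\mathbb{A}_{t,s}$. Then choose paths under $\mathbb{P}_\theta$ that share the same endpoint increment but wind with an area that is large at an intermediate time and unwinds by $T$; for example, a loop traversed and then reversed, using $S(\gamma\ast\gamma^{-1})=\mathbf{1}$. On this set of positive probability, $\hat V_{s|t}<Z$ at some $s<T$ while $\hat V_{T|t}\ge Z$. One can then pick $\epsilon$ lying between the two probabilities of solvency.

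The main obstacle is Step 3. It requires $\mathbb{P}_\theta$ to charge such winding paths with positive probability. I would justify this with a support-theorem argument in the Stroock--Varadhan style, assuming the ARL generative diffusion is non-degenerate. Failing that, I would simply state it as an assumption on the generative measure.
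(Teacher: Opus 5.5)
Your Step 1 is, in substance, the paper's entire proof. The paper fixes $(\Omega,\mathcal{F},\mathbb{P}_\theta)$ and reads $\hat\Phi_{s|t}(\omega)$ as a realised signature flow. You do the same implicitly; if $\hat\Phi_{s|t}$ were the $\mathcal{A}_t$-conditional expectation, the event would be deterministic. The paper then asserts that $s\mapsto\hat V_{s|t}$ is regular enough for the minimum over $[t,T]$ to be attained, defines $E_{solvent}=\{\inf_{s\in[t,T]}\hat V_{s|t}\ge Z\}$, and simply applies $\mathbb{P}_\theta$ to it. The liquidity-trap sentence is backed only by a closing heuristic that toxic winding paths are ``appropriately penalised''.

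You differ in what you establish rather than in technique. The inclusion $\{\min_s\hat V_{s|t}\ge Z\}\subseteq\{\hat V_{T|t}\ge Z\}$ makes precise that the constraint is more conservative than terminal solvency. Your Step 3 supplies the only genuine content behind ``solvent at $T$ but insolvent at some $s<T$'', namely that this inclusion can be strict with positive probability. The paper's route needs no assumptions but proves little beyond well-posedness. Yours substantiates the interpretive claim at the cost of a support assumption on $\mathbb{P}_\theta$. Your Step 1 is also the more careful one. The paper admits c\`adl\`ag paths yet asserts the minimum is attained, which can fail there because the infimum may only be a left limit. You stay in the continuous-semimartingale setting in which the paper defines the SigSwap, and you get measurability from rational times.

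Two details of Step 3 need repair. First, $\mathbf{X}$ is time-extended, so no admissible path has the form $\gamma\ast\gamma^{-1}$. Time is strictly increasing, which is exactly what the paper uses to exclude tree-like pieces in the signature-span lemma. Hence $S(\gamma\ast\gamma^{-1})=\mathbf{1}$ cannot be invoked for the full signature. It suffices that the spatial projection retraces a loop while time runs forward, with $\mathbf{w}_\Pi$ loading only on a purely spatial area coordinate $\mathbb{A}^{ij}$, $i,j\ge 1$, which depends only on the projected path.

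Second, the classical Stroock--Varadhan theorem lives in the uniform topology, where L\'evy area is not continuous. It therefore gives no lower bound on the probability that the area is close to that of the loop. You need the support theorem in the rough-path ($p$-variation or H\"older) topology. Alternatively, for a Brownian driver, a direct argument works. Use Chen's relation $\mathbb{A}^{12}_{t,T}=\mathbb{A}^{12}_{t,m}+\mathbb{A}^{12}_{m,T}+\tfrac12\bigl(a^1b^2-a^2b^1\bigr)$ with $a=X_m-X_t$ and $b=X_T-X_m$. Combine it with independence of increments and the full support of the law of $(b,\mathbb{A}^{12}_{m,T})$. Together these give $\mathbb{P}_\theta(\mathbb{A}^{12}_{t,m}<Z\le\mathbb{A}^{12}_{t,T})>0$ for every $Z<0$.
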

See proof in Appendix (\ref{app:proof_path_wise_solvency_constraint}). \\

%------------------------------------------------------------------
\subsection{The manifold jump as a geometric anomaly detector}
%------------------------------------------------------------------

The discrepancy between the realised market trajectory and the pre-computed anticipatory manifold provides a high-frequency signal for regime shifts. This signal arises from the "Manifold Jump" that occurs when the global predictive field is re-anchored to a path-increment that violates the learned law.

\begin{definition}[Geometric Divergence]
At each time step $t$, upon observing the realised increment $dX_t$, we define the Geometric Divergence $\mathcal{D}_t^{sig}$ as the norm of the discrepancy between the updated terminal projection and the prior conditional expectation:
\begin{equation}
    \mathcal{D}_t^{sig} = \left\| \hat{\Phi}_{T|t} - \left( \hat{\Phi}_{t|t-1}^{-1} \otimes \hat{\Phi}_{T|t-1} \right) \right\|_{\mathcal{H}_{sig}},
\end{equation}
where $\hat{\Phi}_{T|t-1}$ is the terminal signature expected at $t-1$, and $\hat{\Phi}_{t|t-1}^{-1} \otimes \hat{\Phi}_{T|t-1}$ represents its parallel transport to the grounding point at $t$.
\end{definition}

\begin{definition}[Realised Anticipatory TD-Error]
The risk-weighted manifestation of this divergence is the Realised Anticipatory TD-Error $\delta_{t}^R$, which scales the geometric shift by the portfolio's specific sensitivities $\mathbf{w}_G$:
\begin{equation}
    \delta_{t}^R = R(X_{t-1:t}) + \left\langle \mathbf{w}_G, \gamma \hat{\Phi}_{T|t} - \left( \hat{\Phi}_{t|t-1}^{-1} \otimes \hat{\Phi}_{T|t-1} \right) \right\rangle.
\end{equation}
\end{definition}

\begin{lemma}[Geometric Anomaly Signal] 
\label{lem:geometric_anomaly_signal}
A spike in the magnitude of $\mathcal{D}_t^{sig}$ (or $\delta_{t}^R$) signifies a "geometric decoupling." This occurs when the market's realised path-texture (e.g., a sudden increase in the L\'evy Area) forces the expected signature $\hat{\Phi}_{T|t}$ to jump outside the local curvature of the manifold $\mathcal{S}_{sig}$ spanned by the ARL. Such signals act as a model-free early warning system (EWS) for impending volatility.
\end{lemma}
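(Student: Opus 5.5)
The plan is to make the informal statement precise by splitting $\mathcal{D}_t^{sig}$ into a component the learned manifold can absorb and a component it cannot. The whole argument rests on Chen's identity. If the realised increment $dX_t$ had been drawn from the law learned by the ARL, then the parallel-transported projection $\hat{\Phi}_{t|t-1}^{-1} \otimes \hat{\Phi}_{T|t-1}$ would be the best $\mathcal{A}_{t-1}$-measurable predictor of $\hat{\Phi}_{T|t}$. I would therefore first write
\begin{equation*}
\hat{\Phi}_{T|t} = S(\mathbf{X})_{t-1,t}^{-1} \otimes \hat{\Phi}^{(t-1)}_{T}\!\left(S(\mathbf{X})_{t-1,t}\right),
\end{equation*}
where the second factor is the conditional expected signature given the realised level-2 increment. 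I would then Taylor-expand in the tensor algebra around the anticipated increment $\hat{\Phi}_{t|t-1}$.

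The first-order term is the key object:
\begin{equation*}
\mathcal{D}_t^{sig} \approx \big\| J_t\big(S(\mathbf{X})_{t-1,t} - \hat{\Phi}_{t|t-1}\big)\big\|_{\mathcal{H}_{sig}},
\end{equation*}
where $J_t$ is the derivative of the conditional-signature map. Using the Geometric Risk Decomposition proposition, I would split the innovation $S(\mathbf{X})_{t-1,t}-\hat{\Phi}_{t|t-1}$ into a symmetric (variance) part and an anti-symmetric part $\Delta\mathbb{A}_t$, the Lévy-area surprise.

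The next step is to split $J_t$ along the tangent space of $\mathcal{S}_{sig}$ at the current point. The part of the innovation lying in the tangent directions spanned by the ARL flow is what the manifold absorbs by smooth re-anchoring. The normal component is a genuine departure from the manifold. I would bound this normal part below by $c\,\|\Delta\mathbb{A}_t\|$ times a curvature term. Under a spike in the Lévy area, this normal component then dominates, which is the precise meaning of ``geometric decoupling.'' For the TD-error version, I would apply linearity of $\langle \mathbf{w}_G,\cdot\rangle$ exactly as in Proposition~\ref{pro:linearity_decomposition_ses}, obtaining $|\delta_t^R| \lesssim |R| + \|\mathbf{w}_G\|\,\mathcal{D}_t^{sig} + (1-\gamma)\|\hat{\Phi}_{T|t}\|$. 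The spike therefore transfers to $\delta_t^R$ whenever $\mathbf{w}_G$ has non-zero loading on the anti-symmetric coordinates, in the spirit of Lemma~\ref{lem:levy_area_gap_risk_indicator}.

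For the early-warning claim, I would argue that a persistent normal component forces the ARL to recalibrate the conditional law. Under the re-anchored law, the symmetric level-2 coordinates of the future signature rise, so realised quadratic variation increases afterwards. The signal is model-free because $\mathcal{D}_t^{sig}$ uses only quantities the ARL has already computed together with the observed increment.

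The main obstacle is the lower bound on the normal component. It requires a regularity assumption: the learned manifold must have bounded curvature and must not already contain high-area directions. I would impose this explicitly as a hypothesis, namely local stability of the ARL flow as in Lemma~\ref{lem:conservation_geometric_risk}. The link to \emph{impending} volatility would be stated conditionally on that recalibration mechanism, not proved unconditionally.
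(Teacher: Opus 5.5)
Your skeleton matches the paper's proof: Chen factorisation, the parallel-transported projection as the $\mathcal{A}_{t-1}$-conditional expectation of $\hat\Phi_{T|t}$, a linear bound from $\mathcal{D}_t^{sig}$ to $\delta_t^R$, and a mechanism story for volatility. The paper simply asserts in its Step 2 that an anomalous increment makes $\hat\Phi_{T|t}$ jump. The gap is in the layer you add to make that jump precise.

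Put $H(s):=\mathbb{E}[S(\mathbf{X})_{t,T}\mid\mathcal{A}_{t-1},\,S(\mathbf{X})_{t-1,t}=s]$ and $\xi_t:=S(\mathbf{X})_{t-1,t}-\hat\Phi_{t|t-1}$. In your first display $s^{-1}\otimes s$ cancels, so $\hat\Phi_{T|t}=H(S(\mathbf{X})_{t-1,t})$ exactly and $J_t=DH$.

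First consequence: if $\mathcal{S}_{sig}$ is the set of forecasts the ARL produces, then $\hat\Phi_{T|t}$ lies on it by construction and $J_t\xi_t$ is tangent to it. There is no first-order normal component to bound from below. Leaving the manifold is either a second-order effect (curvature enters only through $\tfrac12 D^2H(\xi_t,\xi_t)$) or an extrapolation of $H$ off its training support. What you actually need is $\|DH\,\Delta\mathbb{A}_t\|\ge c\,\|\Delta\mathbb{A}_t\|$, i.e.\ that the learned forecast responds to realised L\'evy area. That is a memory or transversality hypothesis, not a curvature bound. Assuming ``the manifold contains no high-area directions'' essentially restates the conclusion.

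Second consequence: the prior you borrow is not free, and this affects the paper's Step 1 equally. By the tower property,
\begin{equation*}
\hat\Phi_{t|t-1}^{-1}\otimes\hat\Phi_{T|t-1}=\mathbb{E}\bigl[H(S(\mathbf{X})_{t-1,t})\mid\mathcal{A}_{t-1}\bigr]+b_t,\qquad b_t:=\hat\Phi_{t|t-1}^{-1}\otimes\mathbb{E}\bigl[\xi_t\otimes H(S(\mathbf{X})_{t-1,t})\mid\mathcal{A}_{t-1}\bigr].
\end{equation*}
If you justify $b_t=0$ by conditional independence of increment and future, then $H$ is constant and $\mathcal{D}_t^{sig}\equiv 0$ whatever the realised area. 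If instead $DH\neq 0$, then generically $b_t\neq 0$. So your mechanism ($J_t\neq 0$) and your ``best predictor'' claim pull against each other. The correct object is $\mathcal{D}_t^{sig}=\|H(S(\mathbf{X})_{t-1,t})-\mathbb{E}[H(S(\mathbf{X})_{t-1,t})\mid\mathcal{A}_{t-1}]-b_t\|_{\mathcal{H}_{sig}}$: a forecast surprise of size about $\|DH\,\xi_t\|$, plus a deterministic offset.

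On the TD-error, your bound is more careful than the paper's. With $P_t:=\hat\Phi_{t|t-1}^{-1}\otimes\hat\Phi_{T|t-1}$ one has $\langle\mathbf{w}_G,\gamma\hat\Phi_{T|t}-P_t\rangle=\langle\mathbf{w}_G,\hat\Phi_{T|t}-P_t\rangle-(1-\gamma)\langle\mathbf{w}_G,\hat\Phi_{T|t}\rangle$. So the paper's Cauchy--Schwarz step (after dropping $R$) gives $|\delta_t^R|\le\|\mathbf{w}_G\|\,\mathcal{D}_t^{sig}$ only when $\gamma=1$. Your last term is also missing a factor $\|\mathbf{w}_G\|$.

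However, an upper bound cannot show that a spike in $\mathcal{D}_t^{sig}$ transfers to $\delta_t^R$. You need the reverse triangle inequality together with an alignment condition $|\langle\mathbf{w}_G,DH\,\xi_t\rangle|\ge c\,\|\mathbf{w}_G\|\,\|DH\,\xi_t\|$. Non-zero loading on the anti-symmetric coordinates is not enough, since $DH$ can carry an area innovation into other signature levels.

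Finally, the volatility claim remains a mechanism story in both your plan and the paper. Stating it conditionally is the right form. Note, though, that ``the symmetric level-2 coordinates of the re-anchored law rise'' is a further assumption about what the ARL has learned; it does not follow from your earlier steps.
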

See proof in Appendix (\ref{app:proof_geometric_anomaly_signal}).

%------------------------------------------------------------------
\subsection{Geometric divergence vs. traditional VaR breaks}
%------------------------------------------------------------------

The transition from a VaR-based regulatory framework to the FRTB's Expected Shortfall (ES) highlights a growing institutional need for tail-risk sensitivity. However, both measures remain \textit{terminal statistics}, they are "lagging" indicators that only register a breach after the price distribution has already shifted. We propose that the Geometric Divergence $\mathcal{D}_t^{sig}$ provides a superior regulatory signal for the following reasons:

\begin{enumerate}
    \item \textbf{Causality and the "Lead-Lag" Warning:} A traditional VaR break occurs when the magnitude of a price return exceeds a historical percentile. In contrast, $\mathcal{D}_t^{sig}$ detects the \textit{topological decay} of the market. Because a breakdown in cross-asset lead-lag relationships (anti-symmetric signature components) often precedes the expansion of marginal variances (symmetric components), the Geometric Divergence acts as a "leading" indicator of systemic instability.
    
    \item \textbf{Invariance to Scaling:} While VaR and ES are sensitive to the absolute scale of market levels, $\mathcal{D}_t^{sig}$ is grounded in the algebraic structure of the path. It identifies regime shifts where the \textit{texture} of the market changes (e.g., from mean-reverting to trending or from decoupled to spiralling) even if the instantaneous volatility remains within "normal" historical bounds.
    
    \item \textbf{Resolution of the "Ghost" Anomaly:} Traditional risk engines often suffer from "ghost" breaks, statistical outliers in Gaussian space that carry no structural information. By mapping the market onto the signature manifold $\mathcal{S}_{sig}$, the ARL distinguishes between "high-variance noise" (which stays on the manifold) and "geometric anomalies" (which jump off the manifold). This reduces the false-positive rate for desk disqualification under the PLA test.
\end{enumerate}

\medskip
\noindent
In summary, whereas VaR measures how much the market has \textit{fallen}, the Geometric Divergence measures how much the market has \textit{broken}. For regulators and risk managers, this shift from magnitude-based monitoring to geometry-based monitoring offers a more robust defence against the "Flash Crash" scenarios and correlation breakdowns that characterise modern, high-frequency liquidity crises.

%------------------------------------------------------------------
\subsection{Dynamic margin and XVA optimisation}
%------------------------------------------------------------------

The use of the Temporal Exposure Profile (TEP) facilitates a more granular approach to Credit Valuation Adjustment (CVA) and Initial Margin (IM).

\begin{theorem}[Signature-Neutral Margin Compression]
\label{thm:signature_neutral_margin_compression}
Let $\Pi$ be a portfolio that is "signature-neutral" at level $r$, such that $\langle \mathbf{w}_{\Pi}, S^{(k)} \rangle = 0$ for all $k \leq r$. The variance of the TEP, $\text{Var}(\mathcal{E}_t)$, is minimised, and the associated capital charge for the Residual Risk Add-on (RRAO) is algebraically zero.
\end{theorem}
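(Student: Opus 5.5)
The plan is to make the statement precise by reading signature-neutrality at level $r$ as $\mathbf{w}_{\Pi}$ annihilating every tensor level $k\le r$. That is, $\langle \mathbf{w}_{\Pi}, \pi_k(\cdot)\rangle = 0$ for $k\le r$, where $\pi_k$ projects onto $(\mathbb{R}^{d+1})^{\otimes k}$. Once this is fixed, the theorem is a consequence of the linearity already used in Definition~\ref{def:signature_loss_functional} and Proposition~\ref{pro:linearity_decomposition_ses}.

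\textbf{Step 1: reduce the TEP to the unhedged levels.} Write $\mathbf{w}_{\Pi} = \sum_{k} \mathbf{w}^{(k)}$ by tensor level. Then for every $s\in[t,T]$,
\begin{equation}
\hat{V}_{s|t} = \langle \mathbf{w}_{\Pi}, \hat{\Phi}_{s|t}\rangle = w^{(0)} + \sum_{k>r} \langle \mathbf{w}^{(k)}, \pi_k(\hat{\Phi}_{s|t})\rangle .
\end{equation}
The level-$0$ term is a constant, since $\pi_0(\hat\Phi)=1$. So every fluctuation of $\mathcal{E}_t$, across $s$ and across scenarios from the generative ensemble, comes only from levels $k>r$.

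\textbf{Step 2: bound the variance.} First decompose $\text{Var}(\mathcal{E}_t)$ into the covariance contributions $\mathrm{Cov}(\langle \mathbf{w}^{(k)},\pi_k\rangle, \langle \mathbf{w}^{(l)},\pi_l\rangle)$. Every term with $k\le r$ or $l\le r$ vanishes, so only the tail $k,l>r$ survives. Next, use the factorial decay of the signature, $\|\pi_k(S(\mathbf{X})_{s,u})\|\le \|X\|_{1\text{-var}}^k/k!$, or its rough-path analogue in $p$-variation. This bounds the remaining variance by $C\,\omega^{r+1}/(r+1)!$, where $\omega$ controls the path's variation. I will interpret ``minimised'' in two ways: the variance is minimal within the affine class of portfolios sharing the same high-level weights, and it tends to $0$ as $r\to\infty$. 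Both follow because the low-level weights enter the variance only through a positive semidefinite quadratic form that is zero exactly when they are neutralised. The minimisation itself is then an elementary least-squares argument on the Gram matrix of the signature coordinates.

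\textbf{Step 3: the RRAO claim.} Under the framework of Theorem (Signature Observability), RRAO is charged on exposures not spanned by modellable risk factors. Here the modellable factors are the SigSwap coordinates up to the traded level. Since $\mathbf{w}_{\Pi}$ has zero projection onto those coordinates, the residual exposure measured in the SigSwap basis is identically zero, so the add-on, as a linear function of that exposure, is zero. Lemma~\ref{lem:conservation_geometric_risk} ensures that neutrality established under $\mathbb{Q}$ transfers to $\mathbb{P}$.

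\textbf{Main obstacle.} The hardest point is reconciling ``algebraically zero'' RRAO with the nonzero tail $k>r$ from Step 2. I would handle this by stating explicitly that the RRAO is measured relative to the truncated basis of level $\le r$. Under that reading the residual is, by definition, the truncation tail, which the regulator treats as modellable once SigSwaps up to level $r$ are liquid. If this identification is not acceptable, I would weaken the conclusion to an RRAO bounded by the factorial tail bound from Step 2.
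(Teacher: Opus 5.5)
\textbf{The gap: neutrality is not variance-minimal.} Your Step 1 and the RRAO reading you fall back on follow the paper's proof: graded expansion, levels $\le r$ killed, remainder of order $r+1$, and a residual that vanishes within the $r$-truncated basis. The new ingredient in Step 2 is the claim that neutrality \emph{minimises} $\text{Var}(\mathcal{E}_t)$ within the affine class of portfolios sharing the weights at levels $>r$. That claim is false. Split $\mathbf{w}_{\Pi}=\mathbf{w}_L+\mathbf{w}_H$ into levels $\le r$ and $>r$, and let $\Sigma$ be the covariance of the signature coordinates. Then
\[
\text{Var}\,\langle \mathbf{w}_{\Pi}, \hat{\Phi}_{s|t}\rangle = \mathbf{w}_L^{\top}\Sigma_{LL}\mathbf{w}_L + 2\,\mathbf{w}_L^{\top}\Sigma_{LH}\mathbf{w}_H + \mathbf{w}_H^{\top}\Sigma_{HH}\mathbf{w}_H .
\]
So the low-level weights also enter through a linear cross-covariance term, not only through a positive semidefinite form. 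The least-squares argument you invoke gives the minimiser $\mathbf{w}_L^{*}=-\Sigma_{LL}^{-1}\Sigma_{LH}\mathbf{w}_H$. This is zero only when the high-level payoff is uncorrelated with every coordinate of level $\le r$, and signature levels are generically correlated.

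A concrete counterexample: take the time-extended path $(u,W_u)$, with $e_0$ the time direction and $e_1$ a Brownian component. Set $r=1$ and $\mathbf{w}_H=e_{10}$, the Asian coordinate $\langle e_{10},S_{0,T}\rangle=\int_0^T W_u\,du$. It has covariance $T^2/2$ with $\langle e_1,S_{0,T}\rangle=W_T$. The neutral choice $\mathbf{w}^{(1)}=0$ has variance $T^3/3$. The choice $\mathbf{w}^{(1)}=-\tfrac{T}{2}e_1$ gives $T^3/3-T^3/4=T^3/12$. Repeating this at each $s$ with $T$ replaced by $s-t$ shows that a small nonzero level-$1$ weight strictly lowers the time-integrated TEP variance. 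In plain terms, a delta hedge reduces the variance of an Asian payoff, so signature-neutrality is not variance-optimal, and this reading of ``minimised'' cannot be proved.

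\textbf{What survives, and two further problems.} The paper's proof only establishes your second reading. After neutralisation, $\hat V_{s|t}=V_0+\mathcal{O}(\|\delta\mathbf{X}\|^{r+1})$, so the fluctuations are pushed into the higher-order tail, and ``minimised'' is meant only in that loose sense. Drop the affine-class optimality and keep the tail bound.

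Even the tail bound needs care. For a semimartingale, $\|X\|_{1\text{-var}}=\infty$, so the $1$-variation factorial bound does not apply. The $p$-variation extension bound $\omega^{k/p}/(k/p)!$ with $p>2$ does apply, but combined with Gaussian tails of the rough-path norm it only yields second-moment bounds of order $C^k\,\Gamma(k+1)/\Gamma(k/p+1)^2$. These grow rather than decay in $k$. To get an $L^2$ tail estimate that vanishes as $r\to\infty$, you need direct moment bounds on the signature levels (for instance via Wiener chaos for Brownian drivers) or a bounded-variation assumption.

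Finally, Step 3 is phrased backwards. If $\mathbf{w}_{\Pi}$ has zero projection onto the modellable coordinates of level $\le r$, then all of its exposure lies outside the modellable span. The residual is therefore zero only in the vacuous sense that the truncated portfolio is the zero functional. That is exactly the paper's definitional ``within the $r$-truncated manifold'' argument. It does not follow that the regulator treats the level-$>r$ tail as modellable because SigSwaps up to level $r$ are liquid. Your weaker fallback, an RRAO bounded by the tail estimate, is the defensible version.
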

See proof in Appendix (\ref{app:proof_signature_neutral_margin_compression}).

\medskip
\noindent
In practice, this implies that a desk hedging with SigSwaps can justify a significant reduction in Initial Margin. By aligning the portfolio with the market-implied expected signature, the "potential future exposure" (PFE) becomes a deterministic flow rather than a stochastic diffusion, compressing the risk envelope and optimising capital efficiency.

%------------------------------------------------------------------
\section{Regulatory alignment: FRTB and the SigSwap basis}
\label{sec:regulatory_alignment_frtb}
%------------------------------------------------------------------

The Basel III framework, specifically the Fundamental Review of the Trading Book (FRTB), introduces stringent requirements for the modellability of risk factors and the accuracy of profit and loss (P\&L) attribution. We demonstrate that the SigSwap basis provides a mathematically superior alignment with these regulatory mandates by linearising exotic risk.

%------------------------------------------------------------------
\subsection{Mitigating the residual risk add-on (RRAO)}
%------------------------------------------------------------------

Under the Standardised Approach (SA), the Residual Risk Add-on (RRAO) is a capital surcharge applied to instruments with "exotic" underlying risks that are not captured by delta, vega, or curvature sensitivities. 

\begin{lemma}[Signature Span of Exotic Risk]
\label{lem:signature_span_exotic_risk}
Let $\mathcal{H}_{exotic}$ be the space of path-dependent payoffs. By the Stone-Weierstrass theorem on path space, any $f \in \mathcal{H}_{exotic}$ can be approximated to arbitrary precision by a linear combination of signature coordinates:
\begin{equation}
    f(\mathbf{X}_{0,T}) \approx \langle \mathbf{w}_f, S(\mathbf{X})_{0,T} \rangle.
\end{equation}
Consequently, the SigSwap basis spans the risk factors traditionally relegated to the RRAO.
\end{lemma}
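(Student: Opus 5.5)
The plan is to reduce the statement to the classical universal approximation theorem for signatures, which is itself an application of Stone--Weierstrass. I would work on a compact set $\mathcal{K}$ of time-extended paths of bounded $p$-variation, $1\le p<2$. For rough/semimartingale paths, $\mathcal{K}$ would be replaced by a compact set of geometric $p$-rough paths with $2\le p<3$, lifted via Stratonovich integration. The time coordinate in $\mathbf{X}$ matters: it guarantees that the map $\mathbf{X}\mapsto S(\mathbf{X})_{0,T}$ is injective, with no tree-like equivalences, because the time component is strictly increasing. I would regard each $f\in\mathcal{H}_{exotic}$ as a continuous functional on $\mathcal{K}$. This is where the lemma needs a standing hypothesis. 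Barrier and digital payoffs are not continuous, so I would first approximate them in $L^2(\mathbb{Q})$ by continuous functionals, for example by smoothing the indicator, and then take the approximation in the $L^2$ sense in the final statement.

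The core step is to consider the set $\mathcal{A}=\{\mathbf{X}\mapsto\langle \mathbf{w}, S(\mathbf{X})_{0,T}\rangle : \mathbf{w}\in T(\mathbb{R}^{d+1}) \text{ finitely supported}\}$ and check the hypotheses of Stone--Weierstrass on $C(\mathcal{K})$.
\begin{enumerate}
\item \emph{Continuity.} Each element of $\mathcal{A}$ is continuous, because the truncated signature is continuous in $p$-variation, which follows from Lyons' extension theorem.
\item \emph{Constants.} $\mathcal{A}$ contains the constants, since the level-zero term of the signature equals $1$.
\item \emph{Algebra.} $\mathcal{A}$ is closed under multiplication by the shuffle identity $\langle \mathbf{u},S\rangle\langle \mathbf{v},S\rangle=\langle \mathbf{u}\shuffle \mathbf{v},S\rangle$, which holds for geometric (Stratonovich) signatures.
\item \emph{Separation of points.} $\mathcal{A}$ separates points of $\mathcal{K}$ because of the injectivity noted above.
\end{enumerate}
Stone--Weierstrass then gives density of $\mathcal{A}$ in $C(\mathcal{K})$, which is the displayed approximation $f\approx\langle \mathbf{w}_f,S(\mathbf{X})_{0,T}\rangle$ uniformly on $\mathcal{K}$.

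To pass from compact sets to the law of $\mathbf{X}$, I would use tightness of the law on the rough-path space. For any $\varepsilon>0$ there is a compact $\mathcal{K}_\varepsilon$ carrying $\mathbb{Q}$-mass at least $1-\varepsilon$. Combining this with the uniform bound on $\mathcal{K}_\varepsilon$ and a suitable integrability or truncation of $f$ gives approximation in $L^2(\mathbb{Q})$. Alternatively, I would invoke a robust version with a tensor normalisation (Chevyrev--Oberhauser) to obtain density on the whole non-compact space.

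The consequence is then a matter of linearity. Each coordinate $\langle e_{i_1}\otimes\cdots\otimes e_{i_k},S\rangle$ is the payoff of a SigSwap with zero strike. Hence $f$ is, up to $\varepsilon$, a finite portfolio of SigSwaps. Its residual sensitivity lies outside the span only through the approximation error, so the exotic risk factors are spanned by the SigSwap basis.

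I expect the main obstacle to be the mismatch between the topology in which Stone--Weierstrass applies (uniform convergence on compacts) and the regulatory notion of spanning risk (pricing error under $\mathbb{Q}$ for discontinuous, unbounded payoffs). My first attempt would be the tightness-plus-mollification argument above, stated explicitly as an $L^2(\mathbb{Q})$ approximation.
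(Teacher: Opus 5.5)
Your compact-set argument is the paper's proof. The paper also works on a compact $K$ of time-extended paths and takes closure under products from the shuffle identity and constants from the level-zero term. It gets point separation from Hambly--Lyons plus the strictly increasing time coordinate, applies Stone--Weierstrass in $C(K,\mathbb{R})$, and reads each coordinate $\langle e_I, S(\mathbf{X})_{0,T}\rangle$ as a SigSwap payoff.

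You are more careful than the paper in two places. First, you check that the functionals $\langle \mathbf{w}, S(\mathbf{X})_{0,T}\rangle$ are continuous on $\mathcal{K}$. Stone--Weierstrass needs the algebra to lie in $C(K,\mathbb{R})$, and the paper never verifies this. Second, you note that the shuffle identity requires the geometric (Stratonovich) lift.

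One caveat applies to both of you. The signature is invariant under translation of the path, so it separates points only among paths with a common starting point. You should take $\mathcal{K}\subset\{\mathbf{X}:\mathbf{X}_0=x_0\}$. This costs nothing here, since $X_0$ is observed, but it is needed for payoffs that depend on absolute levels such as barriers and strikes.

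The genuine gap is in the step you add beyond the paper: the passage from $\mathcal{K}$ to $L^2(\mathbb{Q})$. The paper avoids it by defining $\mathcal{H}_{exotic}$ as continuous payoffs on a compact $K$ and stopping at uniform approximation there. Mollifying barrier or digital payoffs into a bounded continuous $g$ is fine. Tightness then gives $\mathbb{Q}(\mathcal{K}_\varepsilon^c)\le\varepsilon$ and some $\ell_\varepsilon=\langle \mathbf{w}_\varepsilon, S(\mathbf{X})_{0,T}\rangle$ with $|g-\ell_\varepsilon|<\varepsilon$ on $\mathcal{K}_\varepsilon$.

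The trouble is that $\ell_\varepsilon$ is a polynomial in the path increments and is unbounded on path space. Stone--Weierstrass says nothing about its size off $\mathcal{K}_\varepsilon$, and the weights $\mathbf{w}_\varepsilon$ may grow as $\varepsilon\to 0$. So $\int_{\mathcal{K}_\varepsilon^c}|g-\ell_\varepsilon|^2\,d\mathbb{Q}$ need not be small. Truncating $f$ does not help, because the unbounded object is the approximant. Truncating $\ell_\varepsilon$ would destroy linearity in $S$.

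What you need is a separate ingredient guaranteeing that signature polynomials are dense in $L^2(\mathbb{Q})$. This is a moment/determinacy condition on the law of $S(\mathbf{X})_{0,T}$, the path-space analogue of the classical fact that polynomials are dense in $L^2(\mu)$ when $\mu$ has exponential moments. For time-extended Brownian motion it can be checked directly via the Wiener chaos decomposition.

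Your fallback does not rescue the conclusion either. The Chevyrev--Oberhauser universality result concerns $\langle \mathbf{w}, \lambda(S(\mathbf{X})_{0,T})\rangle$, where $\lambda$ is a path-dependent dilation. That functional is not linear in $S(\mathbf{X})_{0,T}$, so the approximant is no longer a static strip of SigSwaps with fixed notionals, and the \emph{consequently} part of the lemma is lost. Either supply such an $L^2$ density result, or state the lemma as the paper effectively does: uniform approximation of continuous payoffs on compacts, with discontinuous payoffs treated separately.
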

See proof in Appendix (\ref{app:proof_signature_span_exotic_risk}). \\

\medskip
\noindent
By representing an exotic portfolio as a linear combination of SigSwaps, the risk manager transforms "residual" path-dependent risk into "modellable" linear risk. This reduces the capital charge by shifting the portfolio from the RRAO bucket to the sensitivity-based correlation-trading framework.

%------------------------------------------------------------------
\subsection{PLA testing and signature linearity}
%------------------------------------------------------------------

The P\&L Attribution (PLA) test is the cornerstone of the Internal Models Approach (IMA), requiring a high correlation between the Risk-Theoretical P\&L (RTPL) and the Hypothetical P\&L (HPL). Traditional models often fail this test due to non-linear "Greeks" and the "spanning gap" of volatility surfaces.

\begin{proposition}[PLA Alignment via Geometric Pricing]
\label{pro:pla_alignment_geometric_pricing}
In the Algebraic Pricing Theory (APT) framework, the value of a desk's portfolio $\Pi$ is exactly $V_t = \langle \mathbf{w}_{\Pi}, \hat{\Phi}_{T|t} \rangle$. The RTPL over a horizon $\Delta t$ is given by:
\begin{equation}
    RTPL = \langle \mathbf{w}_{\Pi}, \hat{\Phi}_{T|t+\Delta t} - \hat{\Phi}_{T|t} \rangle.
\end{equation}
Since the HPL is the realised change in the portfolio value, and both are linear functions of the signature increments, the APT framework achieves near-perfect PLA alignment, minimising the risk of desk disqualification and the resulting "outlier" capital penalties.
\end{proposition}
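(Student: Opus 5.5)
The plan is to make precise what ``near-perfect PLA alignment'' means and then reduce it to the linearity of the pricing map $V_t = \langle \mathbf{w}_{\Pi}, \hat{\Phi}_{T|t} \rangle$. I will define the Hypothetical P\&L as the realised change in the desk's value when the positions are held fixed, $HPL = V_{t+\Delta t} - V_t$. Both quantities are evaluated with the same weight vector $\mathbf{w}_{\Pi}$, which is static because the position is frozen. By bilinearity of the pairing $\langle\cdot,\cdot\rangle$ on $T((\mathbb{R}^{d+1}))$, the desk's HPL is then exactly $\langle \mathbf{w}_{\Pi}, \hat{\Phi}_{T|t+\Delta t} - \hat{\Phi}_{T|t} \rangle$. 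This is the displayed RTPL, provided the risk engine and the front-office pricer use the same expected-signature projection.

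The first step is therefore to isolate where RTPL and HPL can actually differ. There are two sources:
\begin{enumerate}
\item the truncation of the signature at a level $r$ in the risk engine, versus the (in principle infinite) representation given by the Universal Pricing Theorem and Lemma~\ref{lem:signature_span_exotic_risk};
\item estimation error in $\hat{\Phi}_{T|t+\Delta t}$, coming either from the ARL ensemble or from SigSwap-implied quotes.
\end{enumerate}
I will write $HPL - RTPL = \langle \mathbf{w}_{\Pi}^{>r}, \Delta\hat{\Phi}^{>r}\rangle + \langle \mathbf{w}_{\Pi}, \varepsilon_{t+\Delta t} - \varepsilon_t\rangle$. To update the expected signature I will use Chen's identity, $\hat{\Phi}_{T|t+\Delta t} = S(\mathbf{X})_{t,t+\Delta t}^{-1}\otimes(\cdots)$. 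The increment $\Delta\hat{\Phi}$ is then a polynomial in the realised increment signature, so both P\&Ls are linear functionals of the same signature increment. Unlike a Taylor expansion in Greeks, this leaves no second-order residual for the truncation to miss.

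The second step bounds the residual. The factorial decay of signature terms, $\|S^{(k)}_{t,t+\Delta t}\| \le \|X\|_{1\text{-var}}^k/k!$ (or the analogous bound for rough paths), controls the tail term by something of order $C\,\omega(t,t+\Delta t)^{r+1}/(r+1)!$. The Signature Observability theorem makes the estimation error small when SigSwaps are liquid. From these I will derive
\[
\mathrm{Var}(HPL-RTPL) \le \eta\,\mathrm{Var}(HPL),
\]
with $\eta$ small. Since the Spearman correlation and the Kolmogorov--Smirnov distance used in the PLA test are continuous in this relative residual, they stay inside the green zone, and this is what I mean by ``near-perfect''.

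The main obstacle is that the statement's claim is partly definitional: if RTPL is defined with the same $\hat{\Phi}$ as HPL, the alignment is tautological. If instead HPL is the true market revaluation, the argument needs a genuine assumption that the market prices $\Pi$ according to the same signature functional. I will make this explicit as a hypothesis, namely that the front-office and risk pricing coincide with the APT representation. I will then prove a quantitative bound on the residual, rather than exact equality, and rely on the factorial decay to make the truncation error uniform over the test window.
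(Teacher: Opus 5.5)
Your first paragraph is exactly the paper's proof. The paper obtains $HPL=\langle\mathbf{w}_{\Pi},\hat{\Phi}_{T|t+\Delta t}-\hat{\Phi}_{T|t}\rangle$ by bilinearity. It then notes that $V$ is linear in $\hat{\Phi}$, so the first-order expansion $RTPL=\langle\Delta_{\Phi},\Delta\hat{\Phi}\rangle$ is exact, concludes $HPL\equiv RTPL$, and stops. You are right that this is definitional, and it is all the paper proves.

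The gap is in the quantitative refinement you present as the real content, specifically the truncation bound. The estimate $\|S^{(k)}_{t,t+\Delta t}\|\le\omega(t,t+\Delta t)^k/k!$ controls the signature of the \emph{increment}. But $\Delta\hat{\Phi}^{>r}$ is the change in the expected signature of the \emph{whole remaining horizon}. Chen's identity writes the level-$k$ part of $\Delta\hat{\Phi}$ as $\sum_{0\le j\le k}S^{(j)}_{t,t+\Delta t}\otimes\Psi^{(k-j)}$ minus its time-$t$ counterpart. Here $\Psi$ is the conditional expected signature of the path on $[t+\Delta t,T]$, up to a left factor $S(\mathbf{X})_{0,t}$ depending on your convention for $\hat{\Phi}_{T|t}$. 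The $j=1$ term and the revision of $\Psi$ itself ($j=0$) are first order in the increment. The $j=1$ term alone is of size roughly $\omega(t,t+\Delta t)\,\omega(t,T)^{k-1}/(k-1)!$. Only the $j=k$ term has the $\omega(t,t+\Delta t)^{k}/k!$ decay you invoke. Hence $\langle\mathbf{w}_{\Pi}^{>r},\Delta\hat{\Phi}^{>r}\rangle$ scales with the window exactly as $HPL$ does, and a short PLA horizon does nothing to make $\eta$ small.

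What would make $\eta$ small is a decay condition on the weights measured against the full-horizon factorials. For example, $\sum_{k>r}\|\mathbf{w}^{(k)}_{\Pi}\|\,\omega(t,T)^{k}/k!$ should be small relative to the lower-level exposure. Lemma~\ref{lem:signature_span_exotic_risk} cannot provide this. Stone--Weierstrass gives, for each $\epsilon$, some finite-level $\mathbf{w}_f$ on a compact set of paths. It gives no control on coefficient size, and no single infinite series whose truncations these approximants are. It also does not cover discontinuous payoffs such as barriers and digitals.

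Your other inputs have the same problem:
\begin{itemize}
\item The Signature Observability theorem is a qualitative MRF-eligibility statement about level-2 coordinates, not an error bound on $\varepsilon$. Since estimation noise does not shrink with $\Delta t$, a small ratio $\mathrm{Var}\langle\mathbf{w}_{\Pi},\varepsilon_{t+\Delta t}-\varepsilon_t\rangle/\mathrm{Var}(HPL)$ is an additional assumption.
\item $\Delta\hat{\Phi}$ is a polynomial in the realised increment signature only if the conditional expected signature of the remaining path is determined by the current state, e.g.\ under independent increments. In general the $(\cdots)$ in your Chen update is a fresh conditional expectation.
\end{itemize}
To obtain a genuine quantitative PLA statement, you must impose weight-decay and estimation-accuracy hypotheses explicitly; they do not follow from anything in the paper.
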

See proof in Appendix (\ref{app:proof_pla_alignment_geometric_pricing}). \\

%------------------------------------------------------------------
\subsection{Capital optimisation via geometric Greeks}
%------------------------------------------------------------------

We define a new class of sensitivities, the "Geometric Greeks", which measure the portfolio's exposure to the coordinates of the expected signature.

\begin{definition}[Geometric Delta]
The Geometric Delta, $\Delta_{\Phi}$, is the gradient of the portfolio value with respect to the risk-neutral expected signature $\hat{\Phi}^{\mathbb{Q}}$:
\begin{equation}
    \Delta_{\Phi} = \nabla_{\hat{\Phi}^{\mathbb{Q}}} V = \mathbf{w}_{\Pi}.
\end{equation}
\end{definition}

\begin{theorem}[Optimal Capital Hedge]
\label{thm:optimal_capital_hedge}
A portfolio $\Pi$ is optimally hedged under FRTB if its Geometric Delta $\Delta_{\Phi}$ is nullified using a strip of SigSwaps. Such a portfolio exhibits zero sensitivity to both terminal distribution shifts (Symmetric Delta) and path-texture shifts (Anti-Symmetric Delta), resulting in a minimal Weighted Sensitivity (WS) capital charge:
\begin{equation}
    K_{capital} \propto \| \text{WS} \cdot \Delta_{\Phi} \|_2 \to 0.
\end{equation}
\end{theorem}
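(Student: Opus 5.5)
The plan is to reduce the statement to linear algebra in the signature coordinates. By Proposition~\ref{pro:pla_alignment_geometric_pricing}, the desk value is $V_t = \langle \mathbf{w}_{\Pi}, \hat{\Phi}^{\mathbb{Q}}_{T|t} \rangle$. Hence any market move, viewed as a perturbation $\delta\hat{\Phi}$ of the risk-neutral expected signature, produces the P\&L
\begin{equation*}
\delta V = \langle \Delta_{\Phi}, \delta\hat{\Phi} \rangle
\end{equation*}
exactly, with no higher-order terms. So the FRTB sensitivities of the portfolio are the coordinates of $\Delta_{\Phi}=\mathbf{w}_{\Pi}$, truncated at the working level $r$ and projected onto the regulatory buckets.

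First I construct the hedge. Each SigSwap on the word $e_I$ has value $\langle e_I, \hat{\Phi}^{\mathbb{Q}}_{T|t}\rangle - K_I$ (up to discounting), so its Geometric Delta is the basis vector $e_I$. A strip with notionals $h_I = -\langle \mathbf{w}_{\Pi}, e_I\rangle$ for all words $|I|\le r$ therefore yields a combined delta $\Delta_{\Phi}^{\Pi+h} = \mathbf{w}_{\Pi} - \sum_I \langle \mathbf{w}_{\Pi}, e_I\rangle e_I = 0$ on $T^{(r)}$. This is possible because the SigSwaps span $T^{(r)}$ (Theorem on Signature Observability) and because any residual exotic exposure is captured by the span (Lemma~\ref{lem:signature_span_exotic_risk}).

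Second, I split the vanishing delta into its two parts. At level two, $\mathbb{R}^{(d+1)\times(d+1)} = \text{Sym}\oplus\text{Anti}$ is an orthogonal decomposition for the Frobenius inner product. Writing $\Delta_{\Phi}^{(2)} = \text{Sym}(\Delta_{\Phi}^{(2)}) + \text{Anti}(\Delta_{\Phi}^{(2)})$ and using the Geometric Risk Decomposition proposition, both projections vanish exactly when the total does. This gives zero Symmetric Delta (terminal law) and zero Anti-Symmetric Delta (L\'evy area texture).

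Third, I pass to capital. Under SA the delta charge is $K=\sqrt{\mathrm{WS}^{\top}\rho\,\mathrm{WS}}$ with $\mathrm{WS}_k = RW_k\, s_k$, where $s_k$ are the bucketed sensitivities, here linear images of $\Delta_{\Phi}$. Since $\rho$ is positive semidefinite, $K \le C\|\mathrm{diag}(RW)\,\Delta_{\Phi}\|_2$ for a constant $C$ depending only on $\rho$. Hence $K\to 0$ as $\Delta_{\Phi}\to 0$. Optimality follows because $K\ge 0$, so the minimum is attained at zero delta.

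The main obstacle is the phrase ``optimally hedged.'' Zero delta only gives a minimal charge if curvature and vega terms also vanish, and if there is no truncation error beyond level $r$. I would handle curvature by noting that $V$ is linear in $\hat{\Phi}$, so its second derivative in $\hat{\Phi}$ is zero. Any residual nonlinearity then lives in the tail $\sum_{|I|>r}$. I would bound that tail by the factorial decay of signature terms, $\|S^{(k)}\|\le \|X\|_{1\text{-var}}^k/k!$, and state the result as $K = O(\|X\|^{r+1}/(r+1)!)$, i.e.\ optimal up to truncation.
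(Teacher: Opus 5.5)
Your main line is the paper's proof. By linearity $\Delta_{\Phi}=\mathbf{w}_{\Pi}$. A strip of SigSwaps with weights $-\mathbf{w}_{\Pi}$ makes the net Geometric Delta zero, so its symmetric and anti-symmetric parts both vanish. Then $\mathrm{WS}=0$ and the charge is zero, which is minimal since $K\ge 0$. Your explicit SbM aggregation, and your restriction of the Sym/Anti split to level 2 (where it genuinely is a Frobenius-orthogonal decomposition), are harmless refinements of the paper's one-line $K\propto\|W\Delta_{\Phi}^*\|_2$. One small point: the FRTB scenario correlation matrices need not be positive semidefinite, but $|\rho_{kl}|\le 1$ still gives $K\le\sqrt{n}\,\|\mathrm{WS}\|_2$.

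Your final paragraph does not hold up, for three reasons.

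\textbf{The tail is self-inflicted.} The paper hedges the entire weight vector $-\mathbf{w}_{\Pi}$. If $\mathbf{w}_{\Pi}$ is the finite tensor in $\bigoplus_{k\le N}(\mathbb{R}^{d+1})^{\otimes k}$ supplied by the Signature Span lemma, take $r=N$ and nothing lies beyond level $r$. The only genuine residual is then the uniform approximation error $\epsilon$ of the original payoff on a compact set of paths, not a signature tail.

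\textbf{The pathwise bound is vacuous here.} In the paper's setting $\mathbf{X}$ is a continuous semimartingale, so $\|X\|_{1\text{-var}}=\infty$ almost surely. The estimate $\|S^{(k)}\|\le\|X\|_{1\text{-var}}^k/k!$ therefore says nothing. You would need the $p$-variation neo-classical estimate with $2<p<3$, which decays only like $1/(k/p)!$.

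\textbf{Capital is driven by weights, not path size.} A sensitivity-based charge after hedging only up to level $r$ comes from the unhedged sensitivities $w_I$, $|I|>r$, times their risk weights. No pathwise bound on $S^{(k)}$ controls these.

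So $K=O(\|X\|^{r+1}/(r+1)!)$ is not established. Either drop that paragraph, or restate the residual charge in terms of $\|\mathbf{w}_{\Pi}^{(>r)}\|$ and the corresponding risk weights.
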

See proof in Appendix (\ref{app:proof_optimal_capital_hedge}). \\

\medskip
\noindent
This framework allows banks to optimise capital not by reducing risk exposure, but by making that exposure "transparently modellable and linear", aligning perfectly with the regulatory objective of market stability.

%------------------------------------------------------------------
\section{Stress testing the measure bridge}
\label{sec:stress_testing_measure_bridge}
%------------------------------------------------------------------

In traditional risk management, stress testing involves the application of exogenous shocks to terminal spot prices or volatility surface parameters. Within the Geometric Pricing Theory (GPT), we redefine stress testing as the systematic perturbation of the Measure Bridge $\Lambda_{t,T}$. This allows for the simulation of "geometrical" crises where the path-texture of the market breaks down even if marginal distributions remain seemingly stable.

%------------------------------------------------------------------
\subsection{Geometrical stress scenarios via operator shifts}
%------------------------------------------------------------------

The Risk-Neutral Shift operator $\Lambda_{t,T}$ acts as the sufficient statistic for the market's risk-aversion toward path-geometry. Stress testing is conducted by applying a perturbation operator directly to this group-valued element.

\begin{definition}[Algebraic Stress Perturbation]
A geometrical stress scenario is defined by a perturbation tensor $\Delta\Lambda \in \mathcal{G}_{sig}$ applied to the base risk-neutral shift $\Lambda_{t,T}$ via the extension of the signature group multiplication:
\begin{equation}
    \Lambda_{t,T}^* = \Lambda_{t,T} \otimes \Delta\Lambda.
\end{equation}
The stressed risk-neutral expected signature is then given by $\hat{\Phi}^{\mathbb{Q}*}_{T|t} = \hat{\Phi}^{\mathbb{P}}_{T|t} \otimes \Lambda_{t,T}^*$.
\end{definition}

\medskip
\noindent
This approach allows risk managers to shock the "implied winding" or "implied causality" of the market independently of the diffusion magnitude.

%------------------------------------------------------------------
\subsection{Characterising market disruptions: Flash crashes and correlation breaks}
%------------------------------------------------------------------

We categorise market disruptions based on the algebraic symmetry of the perturbation $\Delta\Lambda$.

\begin{proposition}[Flash Crash as an Anti-Symmetric Shock]
\label{pro:flash_crash_anti_symmetric_shock}
A "Flash Crash" is a path-dependent event characterised by a massive spike in realised L\'evy Area with minimal impact on terminal variance. In our framework, this is represented by a perturbation $\Delta\Lambda$ concentrated in the anti-symmetric subspace $\mathcal{A}^k(V)$:
\begin{equation}
    \text{Sym}(\Delta\Lambda^{(2)}) \approx 0, \quad \text{Anti}(\Delta\Lambda^{(2)}) \gg 0.
\end{equation}
Under such a shock, portfolios with high "Geometric Gamma" (sensitivity to L\'evy Area) experience catastrophic losses despite being "Delta-Neutral" in a terminal sense.
\end{proposition}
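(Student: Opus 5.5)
We read the statement as having two parts. The first is a structural claim: a flash-crash shock lives in the anti-symmetric level-2 subspace. The second is a valuation claim: portfolios with large L\'evy-area exposure lose heavily even when they are terminally delta-neutral.

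\textbf{Structural part.} We model the shock as the signature of a short excursion. Over an interval of length $\eta$, the price (or a pair of coordinates) travels out along some loop and returns close to its starting point.

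We write the perturbation as a group-like element $\Delta\Lambda = \exp(\ell_1 + \ell_2 + \dots)$, with $\ell_1 \in V$ and $\ell_2 \in [V,V]$. At level two,
\[
\Delta\Lambda^{(2)} = \tfrac12 \ell_1 \otimes \ell_1 + \ell_2 .
\]
Since $\ell_2$ is a commutator, $\text{Sym}(\Delta\Lambda^{(2)}) = \tfrac12 \ell_1 \otimes \ell_1$ and $\text{Anti}(\Delta\Lambda^{(2)}) = \ell_2$. For an almost-closed loop the net increment $\ell_1 = X_{t+\eta} - X_t$ is small, so the symmetric part is $O(|\ell_1|^2) \approx 0$. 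The anti-symmetric part equals the enclosed signed area, i.e.\ the L\'evy area of the excursion, and this can be arbitrarily large relative to $|\ell_1|^2$.

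This is the defining feature of a flash crash: a large round trip with almost no terminal displacement. It gives exactly the displayed conditions. Through the Measure Bridge definition, $\hat{\Phi}^{\mathbb{Q}*}_{T|t} = \hat{\Phi}^{\mathbb{P}}_{T|t} \otimes \Lambda_{t,T} \otimes \Delta\Lambda$, and Chen's identity shows that this concatenation is how such an excursion enters the expected signature. Lemma~\ref{lem:conservation_geometric_risk} ensures that group-likeness, and hence the decomposition above, is preserved under the shift.

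\textbf{Valuation part.} We compute the stressed P\&L as a linear functional, using Definition~\ref{def:signature_loss_functional} and the linearity argument of Proposition~\ref{pro:linearity_decomposition_ses}:
\[
\Delta V = \langle \mathbf{w}_{\Pi}, \hat{\Phi}^{\mathbb{P}} \otimes \Lambda \otimes (\Delta\Lambda - \mathbf{1}) \rangle .
\]
To leading order, $\Delta\Lambda - \mathbf{1} \approx \ell_1 + \tfrac12\ell_1^{\otimes 2} + \ell_2$. We split $\mathbf{w}_{\Pi}$, after transporting it through left multiplication by $\hat{\Phi}^{\mathbb{P}} \otimes \Lambda$, into three pieces: level-one weights, the symmetric level-two block, and the anti-symmetric level-two block.

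We define terminal delta-neutrality as vanishing of the level-one piece and of the symmetric (variance) piece. Under that assumption, $\Delta V \approx \langle \mathbf{w}^{\text{Anti}}_{\Pi}, \ell_2 \rangle$, and the anti-symmetric weight is precisely what we call Geometric Gamma. The loss then scales linearly in the area $\|\ell_2\|$, which is unbounded while $\ell_1$ stays small. This is the "catastrophic loss despite delta-neutrality" claim. It also matches the texture-blindness remark: traditional ES cannot see the loss, since it depends only on the terminal value.

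\textbf{Main obstacle.} The hard step is making "$\approx 0$" and "$\gg 0$" precise. We also need to control the cross terms created by the transport of $\mathbf{w}_{\Pi}$: left multiplication mixes levels, and higher levels of $\Delta\Lambda$ contribute as well.

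Our first attempt is a scaling family of loops $\gamma_\lambda$ whose area is of order $\lambda$ and whose increment is of order $1$. We show $\|\Delta\Lambda^{(k)}\| = O(\lambda^{k/2})$. Provided the weights on levels three and above are either truncated at $r = 2$ or decay factorially, the level-two anti-symmetric term dominates. If we cannot obtain that uniformly, we will state the valuation claim only for level-2-truncated portfolios, which is the regime in which the paper's Geometric Greeks are defined.
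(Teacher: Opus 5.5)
Your argument is essentially the paper's proof. Both identify $\text{Sym}(\Delta\Lambda^{(2)})=\tfrac12\,\delta\mathbf{X}\otimes\delta\mathbf{X}$ (you via $\exp(\ell_1+\ell_2+\cdots)$, the paper via integration by parts) and $\text{Anti}(\Delta\Lambda^{(2)})$ with the L\'evy area. Both model the flash crash as $\delta\mathbf{X}\approx 0$ with large area, and both use terminal neutrality to reduce the P\&L to $\langle \text{Anti}(\mathbf{w}^{(2)}_{\Pi}), \text{Anti}(\Delta\Lambda^{(2)})\rangle$; your transport of the weights through $\hat{\Phi}^{\mathbb{P}}_{T|t}\otimes\Lambda_{t,T}$ is extra care the paper skips.

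One correction to your obstacle paragraph: factorial decay of the weights cannot make the level-2 term dominate as the area $\lambda$ grows. The reason is that $\|\Delta\Lambda^{(k)}\|\sim\lambda^{k/2}$ outgrows $\lambda$ for $k\ge 3$; already $\exp(\ell_2)$ has level-4 part $\tfrac12\,\ell_2\otimes\ell_2$ of size $\lambda^2$. So level-2 truncation is not a fallback but exactly the regime in which the paper's step $\Delta V\approx\langle\mathbf{w}^{(2)}_{\Pi},\Delta\Lambda^{(2)}\rangle$ is valid.
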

See proof in Appendix (\ref{app:proof_flash_crash_anti_symmetric_shock}). \\

\begin{proposition}[Correlation Breakdown]
\label{pro:correlation_breakdown}
A correlation breakdown is represented as a shock to the off-diagonal symmetric coordinates of the level-2 perturbation $\Delta\Lambda^{(2)}$. This shift decouples the joint path-geometry of the assets, forcing the Measure Bridge to rotate into a higher-entropy state where historical lead-lag relationships vanish.
\end{proposition}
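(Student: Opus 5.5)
We argue at level two of the tensor algebra, using the product structure of the Measure Bridge. Write $\Lambda^*_{t,T} = \Lambda_{t,T}\otimes\Delta\Lambda$. Since $\Delta\Lambda\in\mathcal{G}_{sig}$, its degree-zero term is $1$. Model a correlation shock as having zero level-one component, so that $\Delta\Lambda = 1 + \Delta\Lambda^{(2)} + \dots$. Truncating the tensor product at degree two then gives
\begin{equation*}
\hat{\Phi}^{\mathbb{Q}*,(2)}_{T|t} = \hat{\Phi}^{\mathbb{Q},(2)}_{T|t} + \Delta\Lambda^{(2)}.
\end{equation*}
This holds because the degree-two part of $a\otimes b$ is $a^{(2)}+a^{(1)}\otimes b^{(1)}+b^{(2)}$, and the cross term vanishes when $b^{(1)}=0$. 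So the perturbation enters level two additively. We then split it canonically, as in the Geometric Risk Decomposition proposition: $\Delta\Lambda^{(2)} = \text{Sym}(\Delta\Lambda^{(2)}) + \text{Anti}(\Delta\Lambda^{(2)})$.

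\textbf{Step 1: identify the correlation coordinates.} By the shuffle identity, $\text{Sym}(S^{(2)})_{ij} = \tfrac12 (X^i_T-X^i_0)(X^j_T-X^j_0)$. Hence the off-diagonal symmetric coordinates of the expected signature are exactly the terminal cross-moments $\mathbb{E}[\Delta X^i \Delta X^j]$. The diagonal coordinates are the marginal second moments, and the antisymmetric part is the expected L\'evy area $\mathbb{E}[\mathbb{A}^{ij}_{0,T}]$. A pure correlation break leaves the marginal variances fixed and changes the cross-moments. It is therefore represented by a $\Delta\Lambda^{(2)}$ that is symmetric with zero diagonal, $\text{Anti}(\Delta\Lambda^{(2)})=0$. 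This is a direct contrast to the Flash Crash proposition. That establishes the first claim.

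\textbf{Step 2: decoupling and vanishing lead-lag.} We make ``higher-entropy state'' precise as a move toward the product (independent) law. This is a modelling interpretation, not a derivation. Concretely, shrink the off-diagonal symmetric part toward zero and the antisymmetric part toward zero. For independent components, with independent martingale parts and deterministic time-drift, the expected cross-iterated integrals factorise. In particular $\mathbb{E}[\int X^i\,dX^j]$ reduces to products of lower-order terms. The expected L\'evy area then vanishes whenever one coordinate is a centred martingale.

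We then show that a shock cancelling the cross-covariance forces the expected lead-lag term to zero. We would use Cauchy--Schwarz on the quadratic covariation: $|\mathbb{E}\,\mathbb{A}^{ij}|$ is controlled by $\mathbb{E}\langle X^i,X^j\rangle$ together with the cross drift terms. This is the step I expect to be the main obstacle. In general the L\'evy area is not determined by the symmetric part, so a correlation shock alone does not kill it algebraically. The plan is to impose this as a structural assumption on the ARL-learned law. Under that assumption, lead-lag arises only through the correlated component of the dynamics, for example a lagged linear coupling $dX^j = \beta\,X^i_{\cdot-\tau}\,dt + dW^j$. The area is then proportional to the covariation, and the conclusion follows.

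\textbf{Step 3: stability.} By Lemma~\ref{lem:conservation_geometric_risk}, the bridge preserves the group structure. The stressed object therefore remains a valid expected signature, and the ``rotation'' is a group translation within $\mathcal{G}_{sig}$.
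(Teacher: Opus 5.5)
There is a genuine gap. Your Step 1 is a sound reading of the first claim: you identify the off-diagonal symmetric coordinates with terminal cross-moments via the shuffle identity. But that same representation makes the second claim false inside your own framework. With $\Delta\Lambda^{(1)}=0$ and $\text{Anti}(\Delta\Lambda^{(2)})=0$, your level-two product formula gives $\text{Anti}(\hat{\Phi}^{\mathbb{Q}*,(2)}_{T|t})=\text{Anti}(\hat{\Phi}^{\mathbb{Q},(2)}_{T|t})$. So the stressed expected L\'evy area coincides exactly with the historical one: lead-lag is preserved, not destroyed.

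The structural assumption of Step 2 cannot repair this. It constrains the law, whereas the stressed object is defined algebraically as $\hat{\Phi}^{\mathbb{Q}}_{T|t}\otimes\Delta\Lambda$, and nothing in your argument re-solves the law after the shock. Suppose you instead implement the break inside the law, say $\beta\to0$ in $dX^j=\beta X^i_{\cdot-\tau}\,dt+dW^j$ with $X^i=W^i$. Then $\mathbb{E}\int X^j\,dX^i=0$ while $\mathbb{E}\int X^i\,dX^j=\beta\int_0^T\mathbb{E}[X^i_sX^i_{s-\tau}]\,ds$. Hence the $(i,j)$ symmetric and antisymmetric coordinates are equal and move together. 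The induced $\Delta\Lambda^{(2)}$ then has a nonzero antisymmetric part, contradicting Step 1. As written, Steps 1 and 2 cannot both hold.

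What you are missing is a re-equilibration of the bridge after the shock, and this is exactly how the paper argues. It applies the shock so that $\text{Sym}(\Lambda^{(2)}+\Delta\Lambda^{(2)})_{1,2}\to0$, then invokes a maximum-entropy principle. With the correlation constraint removed, the updated bridge relaxes to the product of the marginal laws. For independent components the expected cross-iterated integrals factorise, giving $\mathbb{E}[\text{Anti}(S^{(2)})_{1,2}]=0$. The vanishing of lead-lag is therefore produced by this induced second update of $\Lambda_{t,T}$, not by the symmetric shock itself.

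That step is heuristic. Zero covariance does not give independence, and independence kills the expected area only under a condition such as the centred-martingale one you state. Still, you need either some such relaxation step, or a reformulation in which the off-diagonal symmetric coordinate merely parametrises a shock that also carries $\text{Anti}(\Delta\Lambda^{(2)})=-\text{Anti}(\hat{\Phi}^{\mathbb{Q},(2)}_{T|t})$.

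Separately, Step 3 fails. For $g\in\mathcal{G}_{sig}$, the same shuffle identity gives $\text{Sym}(g^{(2)})=\tfrac12\,g^{(1)}\otimes g^{(1)}$, whose diagonal is $\tfrac12(g^{(1)}_i)^2$. So a nonzero symmetric zero-diagonal $\Delta\Lambda^{(2)}$ is incompatible with $\Delta\Lambda\in\mathcal{G}_{sig}$. With $\Delta\Lambda^{(1)}=0$, so is any nonzero symmetric part at all. You should work in the unital truncated tensor algebra, where expected signatures actually live. Your additive level-two formula survives there, but the appeal to Lemma~\ref{lem:conservation_geometric_risk} does not.
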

See proof in Appendix (\ref{app:proof_correlation_breakdown}). \\

%------------------------------------------------------------------
\subsection{Stability of the self-consistent field (SCF) under stress}
%------------------------------------------------------------------

The robustness of the ARL generative engine depends on the stability of the Self-Consistent Field (SCF) constraint under the perturbed measure.

\begin{definition}[Stressed SCF Objective]
Under a stress scenario $\Lambda_{t,T}^*$, the ARL objective function $\mathcal{L}(\theta, \mathbf{w}_G)$ is evaluated against the stressed stationary point:
\begin{equation}
    \eta \| \bar{S}_{t,T} - (\hat{\Phi}^{\mathbb{P}}_{T|t} \otimes \Lambda_{t,T}^*) \|_{\mathcal{Q}_T}^2.
\end{equation}
\end{definition}

\begin{theorem}[SCF Resilience Threshold]
\label{thm:scf_resilience_threshold}
The generative manifold is said to be resilient to a stress scenario if there exists a weight vector $\theta^*$ such that the gradient of the stressed objective $\nabla_{\theta} \mathcal{L}^*$ vanishes and the SCF constraint is satisfied. If the perturbation $\Delta\Lambda$ exceeds the spectral radius of the Lie algebra associated with the signature group $\mathcal{S}_{sig}$, the SCF fails to converge, indicating a fundamental regime shift where the previous path-geometry is no longer a valid proxy for risk.
\end{theorem}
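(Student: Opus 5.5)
The plan is to read the statement as a first-order optimality condition combined with a local inverse-function argument on the truncated signature group, and to carry it out in three steps.

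\textbf{Step 1: Set up a finite-dimensional problem.} I would work at a fixed truncation level $r$, so that $\mathcal{G}_{sig}$ becomes the free nilpotent Lie group $G^{(r)}$. Its Lie algebra $\mathfrak{g}^{(r)}$ is finite-dimensional, and the exponential map is a global diffeomorphism onto $G^{(r)}$. I would then write the perturbation as $\Delta\Lambda = \exp(\xi)$ with $\xi \in \mathfrak{g}^{(r)}$. In these coordinates the stressed objective $\mathcal{L}^*(\theta)$ is the unstressed ARL loss plus $\eta \| \bar{S}_{t,T}(\theta) - \hat{\Phi}^{\mathbb{P}}_{T|t} \otimes \Lambda_{t,T} \otimes \exp(\xi) \|_{\mathcal{Q}_T}^2$. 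Resilience then means the joint system $\nabla_\theta \mathcal{L}^*(\theta,\xi)=0$ together with the SCF constraint has a solution $\theta^*(\xi)$.

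\textbf{Step 2: Existence for small perturbations.} At $\xi=0$ I would assume the unstressed SCF stationary point $\theta_0$ exists and is nondegenerate, meaning the Hessian $\nabla^2_\theta \mathcal{L}$ is invertible there. The key smoothness facts are:
\begin{itemize}
\item group multiplication and $\exp$ are polynomial on $G^{(r)}$;
\item hence the map $(\theta,\xi)\mapsto \nabla_\theta\mathcal{L}^*$ is $C^1$.
\end{itemize}
The implicit function theorem then gives a branch $\theta^*(\xi)$ for $\|\xi\|<\rho$. The radius $\rho$ comes from a quantitative version of the theorem, obtained via a Newton–Kantorovich or contraction estimate. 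Concretely, $\rho$ is the largest radius on which $\|(\nabla^2_\theta\mathcal{L})^{-1}\|\,\|\partial_\xi\nabla_\theta\mathcal{L}^*\|$ keeps the Newton map contractive. This first half of the statement is essentially rigorous once nondegeneracy is assumed.

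\textbf{Step 3: Identify $\rho$ with a spectral quantity.} Differentiating $\Lambda\otimes\exp(\xi)$ in $\xi$ brings in $d\exp_\xi = \frac{1-e^{-\mathrm{ad}_\xi}}{\mathrm{ad}_\xi}$, composed with left translation. This operator degenerates when $\mathrm{ad}_\xi$ has eigenvalues in $2\pi i\mathbb{Z}\setminus\{0\}$. I would therefore interpret ``the spectral radius of the Lie algebra'' as the threshold $\|\xi\|$ beyond which the spectral radius of $\mathrm{ad}_\xi$, suitably weighted by the Hessian conditioning, breaks the contraction estimate. Beyond this threshold the SCF fixed-point iteration, viewed as the Newton map, is no longer a contraction, and I would conclude non-convergence.

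\textbf{Main obstacle.} The converse direction is the hard part. Loss of the contraction estimate does not by itself prove that no stationary point exists. I would attempt it in two stages:
\begin{itemize}
\item first, show that the SCF constraint forces $\bar{S}_{t,T}(\theta)$ to remain in a bounded neighbourhood of the ARL manifold spanned by $\theta$, using the compactness of the generator's parameter set;
\item second, argue that a large $\xi$ pushes the target $\hat{\Phi}^{\mathbb{P}}\otimes\Lambda^*$ outside that neighbourhood, so the penalty term has no critical point consistent with the constraint.
\end{itemize}
Note that $\mathrm{ad}_\xi$ is nilpotent on the free nilpotent algebra, so the literal $2\pi i\mathbb{Z}$ degeneracy of $d\exp_\xi$ never occurs there. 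For that reason the genuine threshold must come from the reachable-set bound in the two stages above, not from any singularity of $d\exp_\xi$. If this fails in general, I would settle for proving that the Newton/SCF iteration diverges from $\theta_0$ for $\|\xi\|>\rho$. I would then state the theorem's failure clause in that algorithmic sense, which is how ``fails to converge'' is most naturally read.
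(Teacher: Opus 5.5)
There is a genuine gap, and it sits in the only substantive assertion of the statement, the failure clause; the first sentence merely defines resilience. In Step 3 you infer that the SCF iteration does not converge because the Newton map stops being a contraction beyond $\rho$. A Newton--Kantorovich or contraction estimate is only a sufficient condition, so its failure implies neither divergence nor the non-existence of $\theta^*$. Nor can a Lie-theoretic threshold be extracted. As you note yourself, $\mathrm{ad}_\xi$ is nilpotent on $\mathfrak{g}^{(r)}$, so $(1-e^{-\mathrm{ad}_\xi})/\mathrm{ad}_\xi$ is unipotent for every $\xi$ and the spectral radius of $\mathrm{ad}_\xi$ is $0$.

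The same observation shows that the paper's own proof does not close the gap either. It identifies $\rho(\mathfrak{g})$ with the convergence radius of the Baker--Campbell--Hausdorff series and of $\log$ on $\mathcal{G}$. It then asserts that beyond this radius eigenvalues cross the unit circle, the Hessian loses definiteness, and the stressed texture lies in a ``disjoint topological sector''. On the truncated group, however, BCH is a finite polynomial and $\log$ is a global polynomial inverse of $\exp$, so that radius is infinite. Moreover $G^{(r)}=\exp(\mathfrak{g}^{(r)})$ is connected, so there are no disjoint sectors.

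Your reachable-set fallback is the only route that yields a true statement, but it proves a different one. Suppose $\theta$ ranges over a compact set, so the image of $\bar S_{t,T}$ is compact. The map $\xi\mapsto\hat\Phi^{\mathbb{P}}_{T|t}\otimes\Lambda_{t,T}\otimes\exp(\xi)$ is proper: left multiplication by an element with unit scalar part is a linear isomorphism of the truncated algebra, and $\exp$ is a homeomorphism onto the closed set $G^{(r)}$. Hence exact SCF alignment is impossible for $\|\xi\|>R$, for some $R$. But $R$ is set by the generator's reachable set rather than by $\mathfrak{g}$, nothing is decided for $\rho\le\|\xi\|\le R$, and compactness is a hypothesis the paper never makes.

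Some such hypothesis is unavoidable. Suppose the objective reduces to the SCF penalty and $\bar S_{t,T}(\theta)=\exp(\theta)$ with $\theta\in\mathfrak{g}^{(r)}$, and take $\hat\Phi^{\mathbb{P}}_{T|t},\Lambda_{t,T}\in\mathcal{G}_{sig}$ as the paper assumes. Then $\theta^*=\log(\hat\Phi^{\mathbb{P}}_{T|t}\otimes\Lambda^*_{t,T})$ has zero penalty and zero gradient for every $\xi$. So the failure clause, read as non-existence of $\theta^*$, is false. Read algorithmically, it becomes a property of the solver and its starting point, which your last fallback would still have to establish by a genuine divergence argument.

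Step 2 also delivers less than you claim. The implicit function theorem applied to $\nabla_\theta\mathcal{L}^*=0$ alone gives a stationary branch, not exact alignment. At an aligned point the penalty gradient vanishes, so the remaining terms of $\nabla_\theta\mathcal{L}$ must vanish too, which is an overdetermined system. You would need to recast resilience as a KKT system and assume a constraint qualification, such as surjectivity of $D\bar S_{t,T}(\theta_0)$ onto the tangent space at the target. A provable version of the theorem must define the threshold through the generator (its reachable set or Hessian conditioning), not through the Lie algebra.
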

See proof in Appendix (\ref{app:proof_scf_resilience_threshold}). \\

\begin{remark}
This theorem provides a quantitative definition of "unmodellable" stress. If the ARL cannot find a self-consistent path-manifold under the shock $\Delta\Lambda$, the market has entered a state of pure path-chaos, requiring the risk manager to switch from signature-linear hedging to emergency liquidity-preservation protocols.
\end{remark}

%------------------------------------------------------------------
\section{Implementation: High-frequency geometric risk engines}
\label{sec:implementation_risk_engines}
%------------------------------------------------------------------

The transition from stochastic differential equation (SDE) based risk management to Algebraic Pricing Theory (APT) offers a paradigm shift in computational efficiency. By decoupling the path-generation process (via the ARL engine) from the valuation process (via tensor contraction), we enable real-time risk monitoring of complex exotic books.

%------------------------------------------------------------------
\subsection{Computational complexity: Tensor-contraction vs. nested simulation}
%------------------------------------------------------------------

Traditional risk engines for path-dependent derivatives typically require nested Monte Carlo simulations to estimate Greeks and Expected Shortfall, leading to a complexity of $O(N \times M)$ where $N$ is the number of primary paths and $M$ is the number of sub-simulations.

\begin{proposition}[Linear Complexity of the Geometric Risk Engine]
\label{pro:linear_complexity_geometric_risk_engine}
Given a pre-computed or market-implied expected signature $\hat{\Phi} \in \mathcal{H}_{sig}$, the valuation and risk sensitivity calculation for any signature-linear portfolio $\Pi$ is reduced to a single inner product:
\begin{equation}
    V(\Pi) = \langle \mathbf{w}_{\Pi}, \hat{\Phi} \rangle.
\end{equation}
The computational cost is $O(D)$, where $D$ is the dimension of the truncated signature space. This cost is invariant to the complexity of the underlying path-dependency, provided the weight vector $\mathbf{w}_{\Pi}$ is known.
\end{proposition}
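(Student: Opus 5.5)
The plan is to make the statement precise as a counting argument on the truncated tensor algebra, and then split it into two parts: valuation and sensitivities.

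\textbf{Step 1 (the space).} Fix the truncation level $r$. The truncated signature space $\mathcal{H}_{sig} = \bigoplus_{k=0}^{r} (\mathbb{R}^{d+1})^{\otimes k}$ has dimension
\begin{equation}
D = \sum_{k=0}^{r} (d+1)^k = \frac{(d+1)^{r+1}-1}{d}.
\end{equation}
Both $\mathbf{w}_{\Pi}$ and $\hat{\Phi}$ are represented as vectors in $\mathbb{R}^D$ in the canonical word basis $\{e_{i_1}\otimes\cdots\otimes e_{i_k}\}$.

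\textbf{Step 2 (valuation).} By Definition \ref{def:signature_loss_functional} and the Universal Pricing Theorem, the portfolio value is a linear functional of the signature. Taking expectations and using linearity, as in the proof of Proposition \ref{pro:linearity_decomposition_ses}, gives
\begin{equation}
V(\Pi) = \mathbb{E}[\langle \mathbf{w}_{\Pi}, S(\mathbf{X})\rangle] = \langle \mathbf{w}_{\Pi}, \hat{\Phi}\rangle = \sum_{|I|\le r} w_I \hat{\Phi}_I.
\end{equation}
Evaluating this sum takes $D$ multiplications and $D-1$ additions. The cost is therefore $O(D)$.

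\textbf{Step 3 (sensitivities).} The Geometric Delta is $\Delta_{\Phi}=\mathbf{w}_{\Pi}$, so it requires no computation at all.

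Sensitivities to a perturbation $\delta\Phi$ are directional derivatives $\langle \mathbf{w}_{\Pi},\delta\Phi\rangle$, and each is again $O(D)$. A stressed state of the form $\hat{\Phi}\otimes\Lambda$ must first be formed and then contracted with $\mathbf{w}_{\Pi}$. The same holds for a quantity such as $\mathcal{D}_t^{sig}$, which also involves forming a tensor product. I would handle these by precomputing the adjoint action: $\langle \mathbf{w}_{\Pi}, \hat{\Phi}\otimes\Lambda\rangle = \langle \mathbf{w}_{\Pi}\lhd \Lambda, \hat{\Phi}\rangle$, where $\mathbf{w}\lhd\Lambda$ is defined by this identity as the adjoint of right multiplication by $\Lambda$. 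Once $\mathbf{w}_{\Pi}\lhd\Lambda$ is cached, each subsequent evaluation is again a single $O(D)$ inner product.

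\textbf{Step 4 (invariance).} The complexity of the payoff enters only through the entries of $\mathbf{w}_{\Pi}$, which are given by hypothesis. It does not enter through $D$, which depends only on $(d,r)$. No path simulation, and in particular no nested simulation, is required at evaluation time. The cost is therefore invariant to the path-dependency.

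\textbf{Main obstacle.} The delicate point is being honest about which operations are actually $O(D)$. Full tensor multiplication in the truncated algebra costs $O(rD)$ rather than $O(D)$. The claim should therefore be stated for the contraction step alone, with group-operator shifts either precomputed or counted separately.

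It must also be said explicitly that the cost of obtaining $\mathbf{w}_{\Pi}$ is excluded, and that $D$ grows exponentially in $r$. Finally, the approximation error incurred when truncating an exotic payoff at level $r$ (Lemma \ref{lem:signature_span_exotic_risk}) must be kept separate from the complexity claim.
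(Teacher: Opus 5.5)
Your proof is correct and follows the paper's own argument essentially step for step: commuting the expectation through the contraction to get $V(\Pi)=\langle \mathbf{w}_{\Pi},\hat{\Phi}\rangle$, counting $D$ multiplications and $D-1$ additions, reading off $\Delta_{\Phi}=\mathbf{w}_{\Pi}$ at no cost, treating a stress P\&L as one more $O(D)$ inner product, and attributing invariance to all payoff complexity being compiled into $\mathbf{w}_{\Pi}$ (your $D=\sum_{k\le r}(d+1)^k$ uses the time-extended dimension where the paper writes $d^k$, which is immaterial). Your handling of multiplicative stresses $\hat{\Phi}\otimes\Lambda$ via a cached adjoint $R_{\Lambda}^{*}\mathbf{w}_{\Pi}$ (the adjoint the paper uses for the Conservation of Geometric Risk lemma) and your caveat that truncated tensor multiplication costs $O(rD)$ are sound refinements the paper omits, since it only treats additive shocks $\hat{\Phi}+\Delta\hat{\Phi}$; note, however, that this trick covers contractions $\langle \mathbf{w},\cdot\rangle$ and not the norm $\mathcal{D}_t^{sig}$, which still requires forming the product.
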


%------------------------------------------------------------------
\subsection{The real-time geometric trace: Online updates via Chen's identity}
%------------------------------------------------------------------

For high-frequency risk monitoring, the realised signature of the market price path must be updated at every tick. We leverage the algebraic structure of the signature group to perform these updates in constant time.

\begin{lemma}[Incremental Signature Update]
Let $S(\mathbf{X})_{0,t}$ be the realised signature of the market path up to time $t$. For a new price increment $\delta\mathbf{X}$ over the interval $[t, t+\Delta t]$, the updated signature is computed via Chen's Identity:
\begin{equation}
    S(\mathbf{X})_{0,t+\Delta t} = S(\mathbf{X})_{0,t} \otimes S(\mathbf{X})_{t,t+\Delta t}.
\end{equation}
In a high-frequency environment, $S(\mathbf{X})_{t,t+\Delta t}$ can be approximated by the horizontal lift of the increment $\delta\mathbf{X}$, allowing the risk engine to maintain a "Geometric Trace" of the portfolio with sub-microsecond latency.
\end{lemma}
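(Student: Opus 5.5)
The plan is to prove the lemma in two parts. The first part is the exact algebraic identity, Chen's identity. The second part is the quality and cost of the high-frequency approximation by the horizontal lift of the increment.

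\emph{Chen's identity.} I take $\mathbf{X}$ to be a continuous semimartingale and read its signature in the Stratonovich sense, so that the iterated integrals obey the ordinary chain rule. For each multi-index $(i_1,\dots,i_k)$, the iterated integral over $[0,t+\Delta t]$ is an integral over the simplex $0<u_1<\dots<u_k<t+\Delta t$. I split this simplex according to how many of the times $u_j$ lie in $[0,t]$. Since the times are ordered, the only possible configurations are these: $u_1,\dots,u_m\le t$ and $u_{m+1},\dots,u_k>t$, for $m=0,\dots,k$.

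On each such piece, Fubini, applied to the stochastic iterated integrals by induction on $k$, factorises the integral as
\begin{equation*}
\langle e_{i_1}\cdots e_{i_m}, S(\mathbf{X})_{0,t}\rangle \,\langle e_{i_{m+1}}\cdots e_{i_k}, S(\mathbf{X})_{t,t+\Delta t}\rangle .
\end{equation*}
Summing over $m$ gives exactly the level-$k$ coordinate of the tensor product $S(\mathbf{X})_{0,t}\otimes S(\mathbf{X})_{t,t+\Delta t}$, which establishes the identity in $T((\mathbb{R}^{d+1}))$. An alternative route is to show that both sides solve the same linear SDE $dS_{0,s}=S_{0,s}\otimes d\mathbf{X}_s$ on $[t,t+\Delta t]$ with the same initial value $S_{0,t}$, and then conclude by uniqueness.

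\emph{Approximation by the horizontal lift.} The horizontal lift of the increment is the signature of the straight line joining the endpoints,
\begin{equation*}
\exp_{\otimes}(\delta\mathbf{X})=\sum_{k\ge 0}\frac{\delta\mathbf{X}^{\otimes k}}{k!}.
\end{equation*}
I will compare it with $S(\mathbf{X})_{t,t+\Delta t}$ level by level. At level 1 the two agree exactly. At level 2 the discrepancy is the Lévy area of the increment together with the bracket term. For a semimartingale with bounded local characteristics this discrepancy is $O(\Delta t)$ in $L^2$. Level-$k$ terms are of order $\Delta t^{k/2}$.

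Because the tensor product is bilinear and continuous on the truncated space $T^{(r)}$, the one-step error propagates to the updated signature only multiplied by $\|S(\mathbf{X})_{0,t}\|$. Summing over $n=T/\Delta t$ ticks, the accumulated error in the mean-square sense tends to zero as the mesh shrinks. This is the standard Wong--Zakai-type convergence of piecewise-linear approximations to the Stratonovich signature.

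\emph{Cost.} Both $\exp_{\otimes}(\delta\mathbf{X})$ and the truncated product $S_{0,t}\otimes S_{t,t+\Delta t}$ cost $O(r\,D)$ operations, where $D$ is the dimension of the truncated signature space and $r$ the truncation level. Crucially, this cost is independent of $t$, so each tick update takes constant time. The ``sub-microsecond'' latency claim is an engineering statement about small $D$ and GPU contraction, not a mathematical one, and I would state it as such.

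\emph{Main obstacle.} I expect the hardest step to be the error control for the horizontal-lift approximation. For a general semimartingale the level-2 area of a single increment is not negligible relative to $\Delta t$; it is only zero in mean. Moreover, with jumps or a flash-crash texture, which is exactly the regime this paper targets, the lift fails to capture the Lévy area. I would first try assuming continuity and uniformly bounded quadratic-variation density, and prove a mean-square bound on the accumulated error. I would then remark that in lead-lag settings the lift must be replaced by a piecewise-linear lift of the interleaved path in order to retain the area term.
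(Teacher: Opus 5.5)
The paper gives no proof of this lemma. It is stated as a direct appeal to Chen's identity, and the horizontal-lift approximation and latency claims are left informal, so your proposal has to stand on its own. The Chen's identity part does. Of your two routes, the second is the rigorous one. Level by level, $S^{(k)}_{0,s}=S^{(k)}_{0,t}+\int_t^s S^{(k-1)}_{0,u}\otimes\circ\, d\mathbf{X}_u$ for $s\ge t$. Induction on $k$, pulling the $\mathcal{A}_t$-measurable factors $S^{(m)}_{0,t}$ out of the integral over $[t,s]$, then gives $S^{(k)}_{0,s}=\sum_{m=0}^k S^{(m)}_{0,t}\otimes S^{(k-m)}_{t,s}$. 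The simplex/Fubini picture is justified for stochastic integrals only through exactly this induction. The argument works verbatim for the It\^o signature, so the Stratonovich choice is not needed here; it is needed in the second part. Your $O(rD)$, $t$-independent cost count is also fine (a Horner-type fused multiply--exponentiate even gives $O(D)$).

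The gap is in the accumulation step of the approximation. Iterating the update with $\exp_{\otimes}(\delta\mathbf{X}_j)$ over ticks $t_0<\dots<t_n$ produces exactly the signature of the piecewise-linear interpolation. Chen's identity at level 2 then shows that its error is exactly $\sum_j\mathbb{A}_{t_j,t_{j+1}}$, the sum of the per-tick L\'evy areas. Each term is $O(\Delta t)$ in $L^2$. So propagating the one-step errors with a bounded multiplier and summing $n=T/\Delta t$ of them, as you describe, only gives $O(T)$, which does not vanish.

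What closes the argument is orthogonality. Under antisymmetrisation the Stratonovich corrections cancel, so each area splits into two pieces:
an It\^o integral against the martingale part, which is conditionally centred given $\mathcal{A}_{t_j}$;
and a drift piece of size $O(\Delta t^{3/2})$, which sums to $O(T\sqrt{\Delta t})$ even by the triangle inequality.
The centred pieces add in quadrature to $O(\sqrt{n}\,\Delta t)=O(\sqrt{T\Delta t})$. The same martingale-transform argument handles the higher-level cross terms such as $\sum_j S^{(m)}_{0,t_j}\otimes\mathbb{A}_{t_j,t_{j+1}}$. So the fact you list under `main obstacle'---that the area is only zero in mean---is the mechanism of the proof, not an obstruction to it. Alternatively, cite the Wong--Zakai theorem for the Stratonovich signature outright rather than re-deriving it by summation.

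Relatedly, your level-2 discrepancy should contain no bracket term: for the Stratonovich signature, $\mathrm{Sym}(S^{(2)}_{t,t+\Delta t})=\tfrac12\,\delta\mathbf{X}^{\otimes 2}$ exactly. Were a bracket term present (It\^o signature), it would have nonzero mean. Summed over ticks it would give $-\tfrac12[\mathbf{X},\mathbf{X}]_{0,T}$ rather than zero, because the lifts converge to the Stratonovich signature, never to the It\^o one. This is exactly why the Stratonovich reading is essential.

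Finally, if the market path is taken, as is usual with tick data, to be the linear interpolation of observed prices, the lift is exact and no approximation argument is needed.
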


%------------------------------------------------------------------
\subsection{Parallelised risk reporting and GPU acceleration}
%------------------------------------------------------------------

The linearity of the APT framework makes it inherently suitable for hardware acceleration. Since the risk calculation is a series of independent tensor contractions across different truncation levels, the workload can be distributed across thousands of cores.

\begin{remark}[GPU-Optimised Sensitivity Analysis]
The calculation of the Geometric Greeks $\Delta_{\Phi} = \nabla_{\hat{\Phi}} V$ for a multi-asset book involves high-dimensional tensor algebra that maps directly onto the architecture of Tensor Processing Units (TPUs) and GPUs. This enables the simultaneous stress testing of thousands of "Geometric Scenarios" (as defined in Section \ref{sec:stress_testing_measure_bridge}) in the time traditional engines take to complete a single path-wise simulation.
\end{remark}

\begin{definition}[The Geometric Risk Dashboard]
We define the Geometric Risk Dashboard as a real-time visualisation of the projection of $S(\mathbf{X})_{0,t}$ onto the $(\text{Sym}, \text{Anti})$ plane. A breach is signaled when the realised signature enters a forbidden region of the Lie algebra $\mathfrak{g}_{sig}$ associated with the S-ES tail-set $\mathcal{T}_{\alpha}$, allowing for instantaneous automated hedging via SigSwaps.
\end{definition}

%------------------------------------------------------------------
\section{Conclusion}
\label{sec:conclusion}
%------------------------------------------------------------------

This paper has developed a unified framework for the management of path-dependent risk by synthesising Algebraic Pricing Theory (APT) with Anticipatory Reinforcement Learning (ARL). By introducing the \textbf{SigSwap} as a regulatory primitive, we have demonstrated that the previously "non-modellable" textures of market dynamics, specifically the lead-lag structures and L\'evy Area, can be internalised into a linear, modellable basis. This transition from stochastic diffusion models to a geometric manifold approach effectively bridges the gap between physical tail-risk assessment and risk-neutral market hedging.

\medskip
\noindent
Our introduction of \textbf{Signature Expected Shortfall (S-ES)} and the \textbf{Temporal Exposure Profile (TEP)} provides risk managers with a "texture-aware" toolkit. Unlike traditional metrics that remain blind to intra-horizon path-winding, the S-ES identifies toxic geometric clusters, while the TEP enables the preemptive detection of liquidity traps and "geometric decoupling" via the anticipatory TD-error. These tools do not merely describe risk; they provide a causal, model-free early warning system that precedes realised volatility spikes.

\medskip
\noindent
Furthermore, we have shown that this geometric shift aligns perfectly with the evolving regulatory landscape. By mapping exotic payoffs to the SigSwap basis, financial institutions can significantly reduce Residual Risk Add-ons (RRAO), ensure high-fidelity P\&L Attribution (PLA), and optimise capital charges through the nullification of "Geometric Greeks." The shift toward tensor-contraction-based valuation engines further ensures that these complex calculations can be performed with sub-microsecond latency, making real-time, high-frequency risk monitoring a computational reality.

\medskip
\noindent
In summary, the geometry of risk management is no longer a theoretical abstraction but a practical necessity for the modern financial ecosystem. The integration of path-signatures into the core of pricing and monitoring architectures allows for a more resilient, transparent, and capital-efficient market, where the "chaos" of path-dependency is finally rendered as a controllable algebraic flow.

%-------------------------------------------------------
%\begin{comment}
%-------------------------------------------------------

%--------------------------------------------------------------
\newpage
%--------------------------------------------------------------

%-----------------------------------------------------------------------------
%\appendix
\section*{Appendix}\thispagestyle{plain}
%\addcontentsline{toc}{section}{Annexes}
%-----------------------------------------------------------------------------

%--------------------------------------------------------------
\section{Proofs of the main results}
\label{sec:proofs_main_results}
%--------------------------------------------------------------

We refer the readers to Kiraly et al. ~\cite{KiralyEtAl19}, Cuchiero et al. ~\cite{CuchieroEtAl25} for the Universal Approximation Theorem for signatures and the density property.  

%-------------------------------------------------------------- 
\subsection{Proof of the conservation of geometric risk}
\label{app:proof_conservation_geometric_risk}
%-------------------------------------------------------------- 

In this appendix we prove Lemma (\ref{lem:conservation_geometric_risk}).

\begin{proof}
The proof proceeds in two parts: establishing the algebraic structure preservation and demonstrating the topological duality of the hedging functionals.

\textbf{Step 1: Preservation of the Algebraic Structure.} \\
Let $T((\mathbb{R}^d))$ denote the formal tensor algebra and $\mathcal{G}_{sig} \subset T((\mathbb{R}^d))$ be the Lie group of group-like elements, which defines the signature manifold. By definition, the physical expected signature $\hat{\Phi}^{\mathbb{P}}_{T|t}$ and the measure bridge operator $\Lambda_{t,T}$ both belong to $\mathcal{G}_{sig}$. 

\medskip
\noindent
Because $\mathcal{G}_{sig}$ is closed under the tensor product $\otimes$ (which corresponds algebraically to the concatenation of paths), the risk-neutral expected signature:
\begin{equation}
    \hat{\Phi}^{\mathbb{Q}}_{T|t} = \hat{\Phi}^{\mathbb{P}}_{T|t} \otimes \Lambda_{t,T}
\end{equation}
is guaranteed to reside within $\mathcal{G}_{sig}$. Consequently, $\hat{\Phi}^{\mathbb{Q}}_{T|t}$ strictly satisfies Chen's identity and the shuffle product relations. The transformation $T_{\Lambda}(\cdot) = \cdot \otimes \Lambda_{t,T}$ acts as an automorphism on the tensor algebra, seamlessly preserving the geometric manifold of the path space.

\textbf{Step 2: Duality of the Risk-Minimising Strategy.} \\
Consider a portfolio $\Pi$ with a signature-payoff weight vector $\mathbf{w}_{\Pi}$ in the dual space $\mathcal{H}_{sig}^*$. A physical risk-minimising strategy seeks a SigSwap hedge $\mathbf{h} \in \mathcal{H}_{sig}^*$ that minimises a physical risk functional (e.g., Variance or S-ES) over the physical variations $\delta \hat{\Phi}^{\mathbb{P}}_{T|t}$.

\medskip
\noindent
Under the $\mathbb{Q}$-measure, an optimal hedge eliminates sensitivity to the risk-neutral expected signature, meaning the Geometric Delta must vanish:
\begin{equation}
    \Delta_{\Phi}^{\mathbb{Q}} = \nabla_{\hat{\Phi}^{\mathbb{Q}}} \langle \mathbf{w}_{\Pi} - \mathbf{h}, \hat{\Phi}^{\mathbb{Q}}_{T|t} \rangle = 0 \implies \mathbf{h} = \mathbf{w}_{\Pi}.
\end{equation}

\medskip
\noindent
To establish the duality, we substitute the Measure Bridge definition into the portfolio valuation functional:
\begin{equation}
    V_t(\Pi) = \langle \mathbf{w}_{\Pi}, \hat{\Phi}^{\mathbb{P}}_{T|t} \otimes \Lambda_{t,T} \rangle.
\end{equation}
Let $R_{\Lambda}^*$ be the adjoint operator of right-multiplication by $\Lambda_{t,T}$. We can pull back the risk-neutral valuation directly to the physical measure:
\begin{equation}
    V_t(\Pi) = \langle R_{\Lambda}^* (\mathbf{w}_{\Pi}), \hat{\Phi}^{\mathbb{P}}_{T|t} \rangle.
\end{equation}

\medskip
\noindent
By the premise of the lemma, $\Lambda_{t,T}$ is locally stable with respect to the signature topology. This implies that the pullback map $R_{\Lambda}^* : \mathcal{H}_{sig}^* \to \mathcal{H}_{sig}^*$ is a continuous, bounded isomorphism. Because the topologies are equivalent under this mapping, the gradient $\nabla_{\hat{\Phi}^{\mathbb{P}}} V_t$ is a continuous linear transformation of $\nabla_{\hat{\Phi}^{\mathbb{Q}}} V_t$.

\medskip
\noindent
Therefore, the condition $\mathbf{h} = \mathbf{w}_{\Pi}$ that perfectly nullifies the $\mathbb{Q}$-measure exposure also completely nullifies the physical geometric exposure $\nabla_{\hat{\Phi}^{\mathbb{P}}} V_t$ via the stability of $R_{\Lambda}^*$. The $\mathbb{Q}$-hedge constructed with market-cleared SigSwaps is thus algebraically and topologically dual to the $\mathbb{P}$-measure risk-minimising strategy.
\end{proof}

%-------------------------------------------------------------- 
\subsection{Proof of the linearity and decomposition of S-ES}
\label{app:proof_linearity_decomposition_ses}
%-------------------------------------------------------------- 

In this appendix we prove Proposition (\ref{pro:linearity_decomposition_ses}).

\begin{proof}
By Definition \ref{def:signature_expected_shortfall}, the Signature Expected Shortfall at confidence level $\alpha$ is given by the conditional expectation of the loss functional over the geometric tail-set:
\begin{equation}
    S\text{-}ES_{\alpha} = \mathbb{E}_{\mathbb{P}} \left[ L(S(\mathbf{X})_{0,T}) \mid S(\mathbf{X})_{0,T} \in \mathcal{T}_{\alpha} \right].
\end{equation}

\medskip
\noindent
Recall from Definition \ref{def:signature_loss_functional} that the loss functional is defined as the difference between the initial portfolio value and its terminal path-dependent value, which by the Universal Pricing Theorem is a linear functional of the signature:
\begin{equation}
    L(S(\mathbf{X})_{0,T}) = V_0(\Pi) - \langle \mathbf{w}_{\Pi}, S(\mathbf{X})_{0,T} \rangle,
\end{equation}
where $V_0(\Pi)$ is a deterministic scalar known at time $t=0$, and $\mathbf{w}_{\Pi} \in \mathcal{H}_{sig}^*$ is the fixed dual vector representing the portfolio's signature weights.

\medskip
\noindent
Substituting the linear loss functional into the definition of $S\text{-}ES_{\alpha}$, we obtain:
\begin{align}
    S\text{-}ES_{\alpha} &= \mathbb{E}_{\mathbb{P}} \left[ V_0(\Pi) - \langle \mathbf{w}_{\Pi}, S(\mathbf{X})_{0,T} \rangle \mathrel{\Big|} S(\mathbf{X})_{0,T} \in \mathcal{T}_{\alpha} \right] \nonumber \\
    &= \mathbb{E}_{\mathbb{P}} \left[ V_0(\Pi) \mid S(\mathbf{X})_{0,T} \in \mathcal{T}_{\alpha} \right] - \mathbb{E}_{\mathbb{P}} \left[ \langle \mathbf{w}_{\Pi}, S(\mathbf{X})_{0,T} \rangle \mathrel{\Big|} S(\mathbf{X})_{0,T} \in \mathcal{T}_{\alpha} \right].
\end{align}
Since $V_0(\Pi)$ is a constant with respect to the terminal measure $\mathbb{P}$, its conditional expectation is simply $V_0(\Pi)$. 

\medskip
\noindent
For the second term, we rely on the properties of the Bochner integral for random variables taking values in the topological vector space $\mathcal{H}_{sig}$. Because the inner product $\langle \mathbf{w}_{\Pi}, \cdot \rangle$ is a continuous linear functional, it commutes with the conditional expectation operator. Thus, we can pass the expectation inside the inner product:
\begin{equation}
    \mathbb{E}_{\mathbb{P}} \left[ \langle \mathbf{w}_{\Pi}, S(\mathbf{X})_{0,T} \rangle \mathrel{\Big|} S(\mathbf{X})_{0,T} \in \mathcal{T}_{\alpha} \right] = \left\langle \mathbf{w}_{\Pi}, \mathbb{E}_{\mathbb{P}} \left[ S(\mathbf{X})_{0,T} \mid S(\mathbf{X})_{0,T} \in \mathcal{T}_{\alpha} \right] \right\rangle.
\end{equation}
By definition, the term $\mathbb{E}_{\mathbb{P}} [ S(\mathbf{X})_{0,T} \mid S(\mathbf{X})_{0,T} \in \mathcal{T}_{\alpha} ]$ is exactly the Tail-Conditional Expected Signature, denoted as $\hat{\Phi}^{\mathbb{P}}_{T \mid \mathcal{T}_{\alpha}}$. 

\medskip
\noindent
Substituting this back yields the final decomposed form:
\begin{equation}
    S\text{-}ES_{\alpha} = V_0(\Pi) - \langle \mathbf{w}_{\Pi}, \hat{\Phi}^{\mathbb{P}}_{T \mid \mathcal{T}_{\alpha}} \rangle.
\end{equation}
This completes the proof, demonstrating that the expected shortfall of an arbitrarily complex path-dependent portfolio can be reduced to a single algebraic contraction between its signature weights and the geometrically defined tail-signature.
\end{proof}

%-------------------------------------------------------------- 
\subsection{Proof of the L\'evy area as a gap-risk indicator}
\label{app:proof_levy_area_gap_risk_indicator}
%-------------------------------------------------------------- 

In this appendix we prove Lemma (\ref{lem:levy_area_gap_risk_indicator}).

\begin{proof}
The proof establishes the relationship between the algebraic decomposition of the level-2 signature and the sensitivity of the $S\text{-}ES_{\alpha}$ metric to non-conservative path dynamics.

\textbf{Step 1: Decomposition of the Second-Order Variation.} \\
Recall that the level-2 signature $S^{(2)}_{0,T}$ is a rank-2 tensor that decomposes into its symmetric and anti-symmetric parts:
\begin{equation}
    S^{(2)}_{0,T} = \text{Sym}(S^{(2)}_{0,T}) + \text{Anti}(S^{(2)}_{0,T}).
\end{equation}
The symmetric part corresponds to the terminal covariance: $\text{Sym}(S^{(2)}_{0,T}) = \frac{1}{2}(X_T - X_0)(X_T - X_0)^\top$. The anti-symmetric part defines the L\'evy Area matrix $\mathbb{A}_{0,T}$, where the $(i,j)$-th entry is $\frac{1}{2} \int_0^T [(X^i_s - X^i_0) dX^j_s - (X^j_s - X^j_0) dX^i_s]$.

\textbf{Step 2: Characterisation of Gap-Risk.} \\
In a "Flash Crash" or liquidity crisis, the path $X_{0,T}$ exhibits high "rotational energy" or "winding" due to lagged cross-asset responses and predatory high-frequency feedback loops. Mathematically, this implies that while the terminal displacement $\|X_T - X_0\|$ (and thus the symmetric variance) may be bounded, the path integral $\oint X dX$ accumulates significantly. We define the gap-risk discrepancy $\mathcal{D}$ as:
\begin{equation}
    \mathcal{D} = \| \text{Anti}(S^{(2)}_{0,T}) \|_F^2 - \| \text{Sym}(S^{(2)}_{0,T}) \|_F^2,
\end{equation}
where $\|\cdot\|_F$ is the Frobenius norm. In efficient markets, $\mathcal{D}$ is small; during a gap-risk event, $\mathcal{D}$ increases as the path becomes increasingly non-rectifiable and non-conservative.

\textbf{Step 3: Impact on the Tail-Conditional Expected Signature.} \\
From Proposition \ref{pro:linearity_decomposition_ses}, the $S\text{-}ES_{\alpha}$ is given by:
\begin{equation}
    S\text{-}ES_{\alpha} = V_0(\Pi) - \langle \mathbf{w}_{\Pi}, \hat{\Phi}^{\mathbb{P}}_{T \mid \mathcal{T}_{\alpha}} \rangle.
\end{equation}
Let $\mathbf{w}_{\Pi}^{(2, A)}$ be the subset of portfolio weights sensitive to the anti-symmetric level-2 coordinates (lead-lag sensitivities). The sensitivity of $S\text{-}ES_{\alpha}$ to the L\'evy Area is:
\begin{equation}
    \frac{\partial S\text{-}ES_{\alpha}}{\partial \text{Anti}(\hat{\Phi}^{(2)})_{T \mid \mathcal{T}_{\alpha}}} = -\mathbf{w}_{\Pi}^{(2, A)}.
\end{equation}

\medskip
\noindent
As the discrepancy $\mathcal{D}$ increases in the geometric tail-set $\mathcal{T}_{\alpha}$, the tail-conditional expected signature $\hat{\Phi}^{\mathbb{P}}_{T \mid \mathcal{T}_{\alpha}}$ shifts its mass toward the anti-symmetric coordinates. For a portfolio with non-zero lead-lag exposure (e.g., cross-gamma or barrier options), the $S\text{-}ES_{\alpha}$ magnitude increases proportionally to the expected L\'evy Area of the crash paths. This formally captures the risk of "gamma bleed" and "delta-hedging slippage" that traditional variance-based ES fails to detect, completing the proof.
\end{proof}

%-------------------------------------------------------------- 
\subsection{Proof of the tail-signature estimator}
\label{app:proof_tail_signature_estimator}
%-------------------------------------------------------------- 

In this appendix we prove Lemma (\ref{lem:tail_signature_estimator}).

\begin{proof}
The proof follows from the application of the Law of Large Numbers (LLN) to the conditional expectation on the signature manifold.

\textbf{Step 1: Definition of the Tail-Conditional Expectation.} \\
By the definition of conditional expectation for a random variable $S$ taking values in the Hilbert space $\mathcal{H}_{sig}$ (or a truncated Banach subspace), the tail-conditional expected signature is:
\begin{equation}
    \hat{\Phi}^{\mathbb{P}}_{T \mid \mathcal{T}_{\alpha}} = \mathbb{E}_{\mathbb{P}} [ S \mid S \in \mathcal{T}_{\alpha} ] = \frac{\mathbb{E}_{\mathbb{P}} [ S \cdot \mathbb{I}_{\{S \in \mathcal{T}_{\alpha}\}} ]}{\mathbb{P}(S \in \mathcal{T}_{\alpha})},
\end{equation}
where $\mathbb{I}_{\{\cdot\}}$ is the indicator function and $\mathbb{P}(S \in \mathcal{T}_{\alpha}) = 1 - \alpha$ by the definition of the geometric tail-set for a continuous distribution.

\textbf{Step 2: Empirical Measure Approximation.} \\
Let $\{\bar{S}^{(i)}\}_{i=1}^N$ be $N$ independent and identically distributed (i.i.d.) samples generated from the ARL manifold $\mathbb{P}$. The empirical measure $\mu_N$ is defined as $\mu_N = \frac{1}{N} \sum_{i=1}^N \delta_{\bar{S}^{(i)}}$. As $N \to \infty$, $\mu_N$ converges weakly to $\mathbb{P}$. The empirical estimate of the $(1-\alpha)$-quantile of the loss functional $L$ is given by the order statistic $L^{(\pi(\lfloor N(1-\alpha) \rfloor))}$.

\textbf{Step 3: Construction of the Summation.} \\
The empirical counterpart to the indicator function $\mathbb{I}_{\{S \in \mathcal{T}_{\alpha}\}}$ is the set of indices $k$ that satisfy the loss threshold condition. By sorting the indices via the permutation $\pi$ such that $L(\bar{S}^{(\pi(1))}) \geq L(\bar{S}^{(\pi(2))}) \dots$, the first $K = \lfloor N(1-\alpha) \rfloor$ samples correspond exactly to the realisations within the empirical tail-set $\hat{\mathcal{T}}_{\alpha}$.

\textbf{Step 4: Convergence.} \\
Substituting the empirical measure into the conditional expectation formula:
\begin{equation}
    \hat{\Phi}^{\mathbb{P}}_{T \mid \mathcal{T}_{\alpha}} \approx \frac{\frac{1}{N} \sum_{i=1}^N \bar{S}^{(i)} \mathbb{I}_{\{\bar{S}^{(i)} \in \hat{\mathcal{T}}_{\alpha}\}}}{\frac{1}{N} \sum_{i=1}^N \mathbb{I}_{\{\bar{S}^{(i)} \in \hat{\mathcal{T}}_{\alpha}\}}} = \frac{\sum_{k=1}^{K} \bar{S}^{(\pi(k))}}{K}.
\end{equation}
Since $S$ is assumed to have a finite first moment in $\mathcal{H}_{sig}$, the Strong Law of Large Numbers ensures that this sample mean converges almost surely to the true tail-conditional expectation as $N \to \infty$. This confirms that the centroid of the top $(1-\alpha)$ loss-producing signatures is the consistent estimator for the tail-signature.
\end{proof}

%-------------------------------------------------------------- 
\subsection{Proof of the convergence of S-ES}
\label{app:proof_convergence_ses}
%-------------------------------------------------------------- 

In this appendix we prove Proposition (\ref{pro:convergence_ses}).

\begin{proof}
The proof establishes the consistency of the Signature Expected Shortfall estimator by demonstrating the convergence of the tail-conditional expected signature within the tensor algebra $(\mathcal{H}_{sig}, \otimes)$.

\textbf{Step 1: Convergence of the Empirical Measure.} \\
Let $\mathbb{P}$ be the law of the signature process $S(\mathbf{X})_{0,T}$ on the signature manifold $\mathcal{G}_{sig}$. Let $\mu_N = \frac{1}{N} \sum_{i=1}^N \delta_{\bar{S}^{(i)}}$ be the empirical measure associated with $N$ i.i.d. samples from the ARL generative engine. By Varadarajan's Theorem, $\mu_N$ converges weakly to $\mathbb{P}$ almost surely as $N \to \infty$ in the topology of the signature space.

\textbf{Step 2: Stability of the Geometric Tail-Set.} \\
The loss functional $L(S) = V_0(\Pi) - \langle \mathbf{w}_{\Pi}, S \rangle$ is a continuous linear functional on $\mathcal{H}_{sig}$. Under the assumption that the distribution of $L(S)$ is atomless, the empirical quantile $q_{\alpha, N} = L^{(\pi(\lfloor N(1-\alpha) \rfloor))}$ converges almost surely to the true Signature VaR, $q_{\alpha} = \inf \{ l : \mathbb{P}(L(S) \leq l) \geq \alpha \}$. Consequently, the empirical indicator function $\mathbb{I}_{\{L(\bar{S}) \geq q_{\alpha, N}\}}$ converges pointwise $\mathbb{P}$-a.e. to the characteristic function of the geometric tail-set $\mathbb{I}_{\mathcal{T}_{\alpha}}$.

\textbf{Step 3: Convergence in Signature Topology.} \\
The estimator $\hat{\Phi}^{\mathbb{P}}_{T \mid \mathcal{T}_{\alpha}}$ is defined as the Bochner integral of the signature with respect to the normalised empirical tail-measure $\mu_N \mid_{\hat{\mathcal{T}}_{\alpha}}$. For any truncation level $m$, the coordinates of the signature $S^{(m)}$ are bounded by the path length (or assumed to have finite moments under the ARL law). By the Strong Law of Large Numbers for Banach-valued random variables:
\begin{equation}
    \lim_{N \to \infty} \frac{1}{\lfloor N(1-\alpha) \rfloor} \sum_{k=1}^{\lfloor N(1-\alpha) \rfloor} \bar{S}^{(\pi(k))} = \frac{1}{1-\alpha} \int_{\mathcal{T}_{\alpha}} S \, d\mathbb{P} = \mathbb{E}_{\mathbb{P}} [ S \mid S \in \mathcal{T}_{\alpha} ]
\end{equation}
The convergence holds in the $p$-variation topology (and thus the standard subspace topology of $\mathcal{H}_{sig}$), ensuring that $\hat{\Phi}^{\mathbb{P}}_{T \mid \mathcal{T}_{\alpha}}$ converges to the centroid of the tail-set.

\textbf{Step 4: Simulation Consistency of S-ES.} \\
Finally, we consider the risk metric $S\text{-}ES_{\alpha}^{(N)} = V_0(\Pi) - \langle \mathbf{w}_{\Pi}, \hat{\Phi}^{\mathbb{P}}_{T \mid \mathcal{T}_{\alpha}, N} \rangle$. Since the inner product $\langle \mathbf{w}_{\Pi}, \cdot \rangle$ is a continuous linear operator:
\begin{equation}
    \lim_{N \to \infty} (V_0(\Pi) - \langle \mathbf{w}_{\Pi}, \hat{\Phi}^{\mathbb{P}}_{T \mid \mathcal{T}_{\alpha}, N} \rangle) = V_0(\Pi) - \langle \mathbf{w}_{\Pi}, \lim_{N \to \infty} \hat{\Phi}^{\mathbb{P}}_{T \mid \mathcal{T}_{\alpha}, N} \rangle = S\text{-}ES_{\alpha}.
\end{equation}
The estimator is therefore simulation-consistent, meaning the risk measured via the ARL ensemble asymptotically recovers the true expected path-loss during tail events.
\end{proof}

%-------------------------------------------------------------- 
\subsection{Proof of the path-wise solvency constraint}
\label{app:proof_path_wise_solvency_constraint}
%-------------------------------------------------------------- 

In this appendix we prove Proposition (\ref{pro:path_wise_solvency_constraint}).

\begin{proof}
The proof relies on formalising the event space of the Temporal Exposure Profile (TEP) under the generative measure of the Anticipatory Reinforcement Learning (ARL) engine.

\textbf{Step 1: Construction of the Path-Space Event.} \\
Let $(\Omega, \mathcal{F}, \mathbb{P}_{\theta})$ be the probability space of future market trajectories generated by the ARL engine, where $\omega \in \Omega$ represents a specific realisation of the signature flow $\{ \hat{\Phi}_{s|t}(\omega) \}_{s=t}^T$. By definition, the value of the portfolio at any intermediate time $s \in [t, T]$ along the trajectory $\omega$ is given by the linear contraction:
\begin{equation}
    \hat{V}_{s|t}(\omega) = \langle \mathbf{w}_{\Pi}, \hat{\Phi}_{s|t}(\omega) \rangle.
\end{equation}
Assuming the underlying asset paths are continuous (or right-continuous with left limits in a jump-diffusion setting), the generated signature manifold ensures that the mapping $s \mapsto \hat{V}_{s|t}(\omega)$ is sufficiently regular such that its minimum over the compact interval $[t, T]$ is well-defined.

\textbf{Step 2: Defining the Liquidity Trap.} \\
Traditional terminal solvency only requires that the terminal value satisfies $\hat{V}_{T|t}(\omega) \geq Z$. However, a "liquidity trap" occurs when a portfolio path is terminally solvent but breaches the margin threshold $Z$ at some intermediate time $\tau \in [t, T]$. We define the path-wise solvency event $E_{solvent}$ as the set of all trajectories where no such $\tau$ exists:
\begin{equation}
    E_{solvent} = \left\{ \omega \in \Omega : \inf_{s \in [t, T]} \hat{V}_{s|t}(\omega) \geq Z \right\}.
\end{equation}
Because the minimum is attained, we can write this equivalently as $E_{solvent} = \left\{ \min_{s \in [t, T]} \hat{V}_{s|t} \geq Z \right\}$.

\textbf{Step 3: Probabilistic Evaluation.} \\
To satisfy the risk constraint at the confidence level $\epsilon$, the generative measure of the event $E_{solvent}$ must be at least $1 - \epsilon$. Applying the probability measure $\mathbb{P}_{\theta}$ to the set $E_{solvent}$ directly yields the stated condition:
\begin{equation}
    \mathbb{P}_{\theta} (E_{solvent}) = \mathbb{P}_{\theta} \left( \min_{s \in [t, T]} \hat{V}_{s|t} \geq Z \right) \geq 1 - \epsilon.
\end{equation}

\medskip
\noindent
By evaluating the infimum over the continuous flow rather than evaluating exclusively at $T$, this metric explicitly accounts for the maximum intra-horizon drawdown. The ARL's generation of non-trivial path-winding (L\'evy Area) directly impacts the intermediate coordinates of $\hat{\Phi}_{s|t}$, ensuring that geometrically toxic paths that temporarily dip below $Z$ are appropriately penalised in the probability mass $\mathbb{P}_{\theta}$, thereby proving the proposition's capacity to detect preemptive liquidity traps.
\end{proof}

%-------------------------------------------------------------- 
\subsection{Proof of the geometric anomaly signal}
\label{app:proof_geometric_anomaly_signal}
%-------------------------------------------------------------- 

In this appendix, we provide the formal proof for Lemma (\ref{lem:geometric_anomaly_signal}), grounding the anomaly detection in the algebraic discrepancy between the re-anchored terminal projection and the parallel-transported expectation.

\begin{proof}
The proof demonstrates that the Geometric Divergence $\mathcal{D}_t^{sig}$ (and by extension the weighted error $\delta_t^R$) quantifies the "Manifold Jump" caused by path-textures that violate the learned generative law.

\textbf{Step 1: Decomposition of the Terminal Projection.} \\
At time $t-1$, the ARL engine provides a conditional expectation of the path signature for the entire remaining horizon $[t-1, T]$, denoted $\hat{\Phi}_{T|t-1}$. By the factorisation property of the signature (Chen's Identity), this expectation can be decomposed into the expected increment over the next step and the subsequent path:
\begin{equation}
    \hat{\Phi}_{T|t-1} = \mathbb{E}_{\mathbb{P}} [ S(X)_{t-1,t} \otimes S(X)_{t,T} \mid \mathcal{A}_{t-1} ].
\end{equation}
The "prior expectation" for the terminal state, viewed from the grounding point $t$, is the parallel-transported slice:
\begin{equation}
    \mathbb{E}_{\mathbb{P}} [ \hat{\Phi}_{T|t} \mid \mathcal{A}_{t-1} ] = \hat{\Phi}_{t|t-1}^{-1} \otimes \hat{\Phi}_{T|t-1}.
\end{equation}

\textbf{Step 2: The Realised Manifold Jump.} \\
At time $t$, the market realises a specific path increment $S(X)_{t-1,t}$. The ARL engine immediately re-calculates the terminal expectation $\hat{\Phi}_{T|t}$ based on this new grounding point and the updated filtration $\mathcal{A}_t$. The Geometric Divergence is defined as:
\begin{equation}
    \mathcal{D}_t^{sig} = \left\| \hat{\Phi}_{T|t} - \mathbb{E}_{\mathbb{P}} [ \hat{\Phi}_{T|t} \mid \mathcal{A}_{t-1} ] \right\|_{\mathcal{H}_{sig}}.
\end{equation}
If the realised increment $S(X)_{t-1,t}$ contains an anomalous path-texture (e.g., a massive spike in L\'evy Area $\text{Anti}(S^{(2)}) \gg 0$), the resulting terminal projection $\hat{\Phi}_{T|t}$ will jump significantly away from the previous manifold-constrained expectation. 

\textbf{Step 3: Relationship to the Realised TD-Error.} \\
The Realised Anticipatory TD-error $\delta_t^R$ is the projection of this geometric jump onto the portfolio's value weights $\mathbf{w}_G$. Neglecting the instantaneous reward $R$, we have:
\begin{equation}
    |\delta_t^R| \approx \left| \langle \mathbf{w}_G, \gamma \hat{\Phi}_{T|t} - \mathbb{E}_{\mathbb{P}} [ \hat{\Phi}_{T|t} \mid \mathcal{A}_{t-1} ] \rangle \right|.
\end{equation}
By the Cauchy-Schwarz inequality, $|\delta_t^R| \leq \|\mathbf{w}_G\| \cdot \mathcal{D}_t^{sig}$. Thus, a spike in the geometric divergence of the path-texture directly scales into a spike in the TD-error, weighted by the portfolio's sensitivity to those specific signature coordinates.

\textbf{Step 4: Causal Mapping to Volatility.} \\
A large $\mathcal{D}_t^{sig}$ indicates that the market has entered a regime where historical lead-lag and covariance relationships (the "geometric law") have dissolved. In the Algebraic Pricing Theory framework, such a decoupling creates "unmodellable" risk for market participants. The subsequent withdrawal of liquidity and aggressive re-hedging of these path-dependent exposures (the "Geometric Gamma") manifests as a surge in directional variance. 

\medskip
\noindent
Since the algebraic signature captures the anti-symmetric shift (the cause) before it fully translates into symmetric terminal variance (the effect), the magnitude of $\mathcal{D}_t^{sig}$ and $\delta_t^R$ serves as a strictly causal, model-free Early Warning System (EWS).
\end{proof}

%-------------------------------------------------------------- 
\subsection{Proof of the signature-neutral margin compression}
\label{app:proof_signature_neutral_margin_compression}
%-------------------------------------------------------------- 

In this appendix we prove Theorem (\ref{thm:signature_neutral_margin_compression}).

\begin{proof}
The proof proceeds in two main parts: first, establishing the minimisation of the Temporal Exposure Profile (TEP) variance, and second, demonstrating the algebraic nullification of the Residual Risk Add-on (RRAO).

\textbf{Part 1: Minimisation of TEP Variance} \\
Let the portfolio value along the generated signature manifold be given by the linear contraction $\hat{V}_{s|t} = \langle \mathbf{w}_{\Pi}, \hat{\Phi}_{s|t} \rangle$. The variance of the TEP, $\mathcal{E}_t = \{ \hat{V}_{s|t} \}_{s=t}^T$, under the generative measure $\mathbb{P}_{\theta}$ is:
\begin{equation}
    \text{Var}(\mathcal{E}_t) = \mathbb{E}_{\mathbb{P}_{\theta}} \left[ \int_t^T \left( \hat{V}_{s|t} - \mathbb{E}_{\mathbb{P}_{\theta}}[\hat{V}_{s|t}] \right)^2 ds \right].
\end{equation}
By the universal approximation property of the signature (Signature Taylor Expansion), the path-dependent value function can be expanded into a graded sum:
\begin{equation}
    \hat{V}_{s|t} = V_0 + \sum_{k=1}^{\infty} \langle \mathbf{w}_{\Pi}^{(k)}, S^{(k)}_{t,s} \rangle.
\end{equation}
By the premise of the theorem, the portfolio is signature-neutral at level $r$, meaning $\langle \mathbf{w}_{\Pi}^{(k)}, S^{(k)} \rangle = 0$ for all $k \leq r$. Therefore, the value process reduces to:
\begin{equation}
    \hat{V}_{s|t} = V_0 + \sum_{k=r+1}^{\infty} \langle \mathbf{w}_{\Pi}^{(k)}, S^{(k)}_{t,s} \rangle = V_0 + \mathcal{O}\left(\|\delta \mathbf{X}\|^{r+1}\right).
\end{equation}
Since the lower-order terms (which strictly dominate the variance scale) are algebraically eliminated, the fluctuation of $\hat{V}_{s|t}$ is suppressed to order $r+1$. Thus, the path-wise variance $\text{Var}(\mathcal{E}_t)$ is minimised relative to the truncated topology of the signature space.

\textbf{Part 2: Nullification of the RRAO} \\
Under the Fundamental Review of the Trading Book (FRTB), the RRAO is calculated based on the gross notional of underlying exotic risk factors that cannot be mapped to modellable linear sensitivities (such as standard Delta or Vega). 

\medskip
\noindent
In the Algebraic Pricing Theory (APT) framework, the fundamental risk factors are the coordinates of the expected signature $\hat{\Phi}$. Because the portfolio is constructed such that its exposure to the first $r$ signature levels is perfectly offset (e.g., via a replicating strip of modellable SigSwaps), the unhedged residual payoff up to the truncation level is exactly zero:
\begin{equation}
    R_{residual} = V(\Pi) - \sum_{k=1}^r \langle \mathbf{w}_{\Pi}^{(k)}, S^{(k)} \rangle = 0.
\end{equation}
Given that the regulatory basis recognises liquid, cleared SigSwaps as Modellable Risk Factors (MRF), and the portfolio's geometric sensitivities (Geometric Greeks) up to level $r$ are entirely neutralised, there remains no unmodellable residual path-risk to penalise within the $r$-truncated manifold. Consequently, the associated capital surcharge for residual exotic risk algebraically vanishes:
\begin{equation}
    K_{RRAO} \propto \| \mathbf{w}_{\Pi, \text{residual}} \| \to 0.
\end{equation}
This completes the proof.
\end{proof}

%-------------------------------------------------------------- 
\subsection{Proof of the signature span of exotic risk}
\label{app:proof_signature_span_exotic_risk}
%-------------------------------------------------------------- 

In this appendix we prove Lemma (\ref{lem:signature_span_exotic_risk}).

\begin{proof}
The proof leverages the algebraic properties of the path signature and the classical Stone-Weierstrass theorem applied to the functional space of rough paths.

\textbf{Step 1: The Algebra of Signature Coordinates.} \\
Let $\mathcal{X}$ be the space of continuous paths $\mathbf{X}: [0,T] \to \mathbb{R}^{d+1}$ of bounded variation (or geometric rough paths with finite $p$-variation), where the state space includes time to ensure strict monotonicity. The signature $S(\mathbf{X})_{0,T}$ is an element of the formal tensor algebra $T((\mathbb{R}^{d+1}))$. 

\medskip
\noindent
Crucially, the product of any two signature coordinates can be expressed as a linear combination of higher-order coordinates via the shuffle product $\shuffle$:
\begin{equation}
    \langle e_I, S(\mathbf{X})_{0,T} \rangle \langle e_J, S(\mathbf{X})_{0,T} \rangle = \sum_{K \in I \shuffle J} \langle e_K, S(\mathbf{X})_{0,T} \rangle,
\end{equation}
where $I$ and $J$ are multi-indices. Because the set is closed under pointwise multiplication and addition, the set of linear functionals on the signature forms an algebra over the real numbers.

\textbf{Step 2: Point Separation on Path Space.} \\
By the Hambly-Lyons theorem, the signature map uniquely identifies a path up to tree-like equivalence. Because the time-extended path $\mathbf{X}$ includes a strictly monotonically increasing time dimension, tree-like excursions are impossible. Thus, the mapping $\mathbf{X} \mapsto S(\mathbf{X})_{0,T}$ is strictly injective. Consequently, the algebra of signature coordinates separates points in the path space $\mathcal{X}$.

\textbf{Step 3: Universal Approximation via Stone-Weierstrass.} \\
Let $\mathcal{H}_{exotic}$ denote the space of continuous, real-valued path-dependent payoff functions evaluated on a compact subset of paths $K \subset \mathcal{X}$. The algebra of signature coordinates separates points on $K$ and contains the constant function (since the $0$-th level of the signature is identically $1$). 

\medskip
\noindent
By the Stone-Weierstrass theorem, this algebra is dense in $C(K, \mathbb{R})$ under the uniform topology. Therefore, for any exotic payoff $f \in \mathcal{H}_{exotic}$ and any arbitrary precision $\epsilon > 0$, there exists a finite truncation level $N$ and a dual tensor (weight vector) $\mathbf{w}_f \in \bigoplus_{k=0}^N (\mathbb{R}^{d+1})^{\otimes k}$ such that:
\begin{equation}
    \sup_{\mathbf{X} \in K} \left| f(\mathbf{X}_{0,T}) - \langle \mathbf{w}_f, S(\mathbf{X})_{0,T} \rangle \right| < \epsilon.
\end{equation}

\textbf{Step 4: Mapping to the RRAO Framework.} \\
In the context of the Fundamental Review of the Trading Book (FRTB), the Residual Risk Add-on (RRAO) is designed to penalise unmodellable path-dependent risks (e.g., barriers, Asian options, correlation products). The equation above demonstrates that the payoff of any such exotic instrument can be arbitrarily well-approximated by a linear combination of the coordinates of $S(\mathbf{X})_{0,T}$. 

\medskip
\noindent
Since the SigSwap is defined as the financial primitive whose payoff is exactly a coordinate of the realised signature, the exotic risk profile $f$ is transformed into a linear portfolio of SigSwaps with weights $\mathbf{w}_f$. As the approximation error $\epsilon$ is driven to zero, the unspanned residual risk vanishes, proving that the SigSwap basis completely spans the risk factors traditionally relegated to the RRAO.
\end{proof}

%-------------------------------------------------------------- 
\subsection{Proof of the PLA alignment via geometric pricing}
\label{app:proof_pla_alignment_geometric_pricing}
%-------------------------------------------------------------- 

In this appendix we prove Proposition (\ref{pro:pla_alignment_geometric_pricing}).

\begin{proof}
The Fundamental Review of the Trading Book (FRTB) mandates the P\&L Attribution (PLA) test to assess the alignment between the front-office pricing models and the risk management models. The test compares the Hypothetical P\&L (HPL) with the Risk-Theoretical P\&L (RTPL).

\textbf{Step 1: The Hypothetical P\&L (HPL)} \\
The HPL is defined as the change in the portfolio's value over a time horizon $\Delta t$, assuming the portfolio composition remains static but is revalued using the actual, realised market data at $t + \Delta t$. 
In the Algebraic Pricing Theory (APT) framework, the value of the portfolio at any time $s$ is the linear projection of the portfolio weights $\mathbf{w}_{\Pi}$ onto the risk-neutral expected signature $\hat{\Phi}_{T|s}$. Therefore, the HPL is:
\begin{align}
    HPL &= V_{t+\Delta t} - V_t \\
        &= \langle \mathbf{w}_{\Pi}, \hat{\Phi}_{T|t+\Delta t} \rangle - \langle \mathbf{w}_{\Pi}, \hat{\Phi}_{T|t} \rangle.
\end{align}
By the bilinearity of the tensor inner product, this simplifies exactly to:
\begin{equation}
    HPL = \langle \mathbf{w}_{\Pi}, \hat{\Phi}_{T|t+\Delta t} - \hat{\Phi}_{T|t} \rangle.
\end{equation}

\textbf{Step 2: The Risk-Theoretical P\&L (RTPL)} \\
The RTPL is the P\&L generated by the risk engine, typically computed using a Taylor expansion of the portfolio's sensitivities (Greeks) to changes in underlying risk factors. In traditional frameworks, this introduces truncation errors (e.g., ignoring higher-order cross-gamma).

\medskip
\noindent
In the APT framework, the risk factors are the coordinates of the expected signature itself. The primary risk sensitivity, the Geometric Delta, is exactly the weight tensor: $\Delta_{\Phi} = \nabla_{\hat{\Phi}} V = \mathbf{w}_{\Pi}$. 
Because the pricing function $V_t = \langle \mathbf{w}_{\Pi}, \hat{\Phi}_{T|t} \rangle$ is strictly linear with respect to the signature basis, the Taylor expansion is exact at the first order, with all higher-order derivatives being identically zero. Thus, the RTPL evaluates to:
\begin{equation}
    RTPL = \langle \Delta_{\Phi}, \Delta \hat{\Phi} \rangle = \langle \mathbf{w}_{\Pi}, \hat{\Phi}_{T|t+\Delta t} - \hat{\Phi}_{T|t} \rangle.
\end{equation}

\textbf{Step 3: PLA Test Alignment} \\
Comparing the two derivations yields a precise algebraic equivalence:
\begin{equation}
    HPL \equiv RTPL.
\end{equation}
In practice, this equivalence ensures that the unexplained P\&L (the variance between HPL and RTPL) approaches zero. Consequently, the statistical metrics required by FRTB?specifically the Spearman correlation coefficient and the Kolmogorov-Smirnov (KS) test distance?will optimally pass the regulatory thresholds. This robust, structural alignment prevents the trading desk from slipping into the regulatory "amber" or "red" zones, thereby averting disqualification from the Internal Models Approach (IMA) and the resultant punitive standard-rules capital surcharges.
\end{proof}

%-------------------------------------------------------------- 
\subsection{Proof of the optimal capital hedge}
\label{app:proof_optimal_capital_hedge}
%-------------------------------------------------------------- 

In this appendix we prove Theorem (\ref{thm:optimal_capital_hedge}).

\begin{proof}
The proof proceeds by decomposing the Geometric Delta into its symmetric and anti-symmetric components and applying the aggregation rules of the Fundamental Review of the Trading Book (FRTB) Sensitivities-Based Method (SbM).

\textbf{Step 1: Decomposition of the Geometric Delta} \\
Let the portfolio value $V(\Pi)$ be defined in the Algebraic Pricing Theory (APT) framework as a linear functional on the expected signature $\hat{\Phi}_{0,T}$:
\begin{equation}
    V(\Pi) = \langle \mathbf{w}_{\Pi}, \hat{\Phi}_{0,T} \rangle.
\end{equation}
The Geometric Delta, representing the sensitivity of the portfolio to the underlying signature risk factors, is exactly the weight tensor: $\Delta_{\Phi} = \nabla_{\hat{\Phi}} V = \mathbf{w}_{\Pi}$.
By the algebraic structure of the tensor space, the signature coordinates can be partitioned into symmetric elements (which map to the terminal distribution and classical payoff profiles) and anti-symmetric elements (which capture path-texture, such as L\'evy Area and lead-lag dynamics). Thus, $\Delta_{\Phi}$ uniquely decomposes as:
\begin{equation}
    \Delta_{\Phi} = \Delta_{sym} \oplus \Delta_{anti}.
\end{equation}

\textbf{Step 2: Nullification via a Strip of SigSwaps} \\
A SigSwap is defined as a financial primitive whose payoff is exactly a specific coordinate of the realised signature. By constructing a hedging portfolio $H$ composed of a strip of SigSwaps with weights identically equal to $-\mathbf{w}_{\Pi}$, the combined portfolio $\Pi^* = \Pi \cup H$ achieves a net Geometric Delta of zero:
\begin{equation}
    \Delta_{\Phi}^* = \Delta_{\Phi}(\Pi) + \Delta_{\Phi}(H) = \mathbf{w}_{\Pi} - \mathbf{w}_{\Pi} = 0.
\end{equation}
This operation simultaneously guarantees that the portfolio has zero sensitivity to terminal distribution shifts ($\Delta_{sym}^* = 0$) and zero sensitivity to variations in path-texture ($\Delta_{anti}^* = 0$).

\textbf{Step 3: Minimisation of the Capital Charge} \\
Under the FRTB SbM framework, the primary capital charge is computed by aggregating the sensitivities of the portfolio to all modellable risk factors, scaled by regulatory risk weights. Let $W$ denote the diagonal matrix of these risk weights. The Weighted Sensitivity (WS) vector is given by $WS = W \Delta_{\Phi}^*$.
Because the strip of SigSwaps effectively transforms all complex path-dependent exposures into linear, modellable risk factors that perfectly offset $\Pi$, the net sensitivity vector is the zero vector. Consequently, the aggregated capital charge evaluates to:
\begin{equation}
    K_{capital} \propto \left\| W \Delta_{\Phi}^* \right\|_2 = \| \text{WS} \cdot 0 \|_2 \to 0.
\end{equation}
This confirms that nullifying the Geometric Delta using the SigSwap basis provides an optimal capital hedge, effectively eliminating both the standard directional capital requirements and the exotic path-dependent penalties.
\end{proof}

%-------------------------------------------------------------- 
\subsection{Proof of the flash crash as an anti-symmetric shock}
\label{app:proof_flash_crash_anti_symmetric_shock}
%-------------------------------------------------------------- 

In this appendix we prove Proposition (\ref{pro:flash_crash_anti_symmetric_shock}).

\begin{proof}
The proof formalises the geometric properties of a "Flash Crash" and evaluates its impact on the Taylor expansion of a portfolio's value mapped onto the signature manifold.

\textbf{Step 1: Geometric Decomposition of the Second-Order Signature} \\
Let $\mathbf{X}_t \in \mathbb{R}^d$ represent a multidimensional market state process during a short time horizon $[t, t+\epsilon]$ encompassing the crash. The second level of the path signature, $S^{(2)}_{t, t+\epsilon} = \int_{t}^{t+\epsilon} \int_{t}^{s} d\mathbf{X}_u \otimes d\mathbf{X}_s$, can be canonically decomposed into a symmetric and an anti-symmetric tensor space:
\begin{equation}
    S^{(2)} = \text{Sym}(S^{(2)}) + \text{Anti}(S^{(2)}).
\end{equation}
By integration by parts (Chen's identity), the symmetric component is entirely determined by the terminal increment: $\text{Sym}(S^{(2)}) = \frac{1}{2} (\delta \mathbf{X} \otimes \delta \mathbf{X})$, where $\delta \mathbf{X} = \mathbf{X}_{t+\epsilon} - \mathbf{X}_t$. The anti-symmetric component represents the L\'evy Area, mathematically capturing the chordal area enclosed by the multidimensional trajectory.

\textbf{Step 2: The Flash Crash Perturbation $\Delta\Lambda$} \\
A defining characteristic of a flash crash is a violent intra-period price excursion followed by a rapid mean-reversion, such that the net change in terminal price is negligible, $\delta \mathbf{X} \approx 0$. 
Consequently, the perturbation to the expected signature operator, $\Delta\Lambda^{(2)}$, exhibits minimal terminal variance:
\begin{equation}
    \text{Sym}(\Delta\Lambda^{(2)}) = \frac{1}{2} (\delta \mathbf{X} \otimes \delta \mathbf{X}) \approx 0.
\end{equation}
However, the extreme intra-horizon trajectory generates massive path-length and rotational area (e.g., cross-asset lead-lag decoupling or price-volume spiralling). Thus, the L\'evy Area experiences a massive shock:
\begin{equation}
    \text{Anti}(\Delta\Lambda^{(2)}) \gg 0.
\end{equation}

\textbf{Step 3: Impact on a Terminally "Delta-Neutral" Portfolio} \\
Consider a portfolio $\Pi$ whose value is $V = \langle \mathbf{w}_{\Pi}, S \rangle$. A traditional hedging strategy only nullifies sensitivities to the terminal increment (Delta) and terminal variance (Vega). In our algebraic framework, this terminal neutrality implies $\mathbf{w}_{\Pi}^{(1)} \approx 0$ and $\text{Sym}(\mathbf{w}_{\Pi}^{(2)}) \approx 0$.

\medskip
\noindent
When the flash crash perturbation $\Delta\Lambda$ occurs, the instantaneous change in portfolio value is given by the tensor contraction:
\begin{align}
    \Delta V &\approx \langle \mathbf{w}_{\Pi}^{(2)}, \Delta\Lambda^{(2)} \rangle \\
             &= \langle \text{Sym}(\mathbf{w}_{\Pi}^{(2)}), \text{Sym}(\Delta\Lambda^{(2)}) \rangle + \langle \text{Anti}(\mathbf{w}_{\Pi}^{(2)}), \text{Anti}(\Delta\Lambda^{(2)}) \rangle.
\end{align}
Applying the terminal neutrality conditions and the specific geometry of the flash crash, the symmetric term vanishes. We are left exclusively with:
\begin{equation}
    \Delta V \approx \langle \text{Anti}(\mathbf{w}_{\Pi}^{(2)}), \text{Anti}(\Delta\Lambda^{(2)}) \rangle.
\end{equation}
Here, the tensor $\text{Anti}(\mathbf{w}_{\Pi}^{(2)})$ acts as the "Geometric Gamma"---the structural sensitivity to L\'evy Area. Because the shock $\text{Anti}(\Delta\Lambda^{(2)})$ is anomalously large, a portfolio with unhedged, negative Geometric Gamma will experience catastrophic, unpredicted losses ($\Delta V \ll 0$). This proves that traditional terminal-neutrality is entirely blind to anti-symmetric geometric shocks.
\end{proof}

%-------------------------------------------------------------- 
\subsection{Proof of the correlation breakdown}
\label{app:proof_correlation_breakdown}
%-------------------------------------------------------------- 

In this appendix we prove Proposition (\ref{pro:correlation_breakdown}).

\begin{proof}
The proof establishes the relationship between the symmetric tensor components of the signature, the cross-asset correlation structure, and the resulting entropy state of the Measure Bridge.

\textbf{Step 1: Tensor Structure of Joint Variation} \\
Consider a two-asset subsystem $\mathbf{X}_t = (X^1_t, X^2_t)$ within the broader market. The second level of the expected signature under the generative measure (represented by the operator $\Lambda^{(2)}$) captures the joint path-geometry. 
This tensor natively decomposes into symmetric and anti-symmetric subspaces:
\begin{equation}
    \Lambda^{(2)} = \text{Sym}(\Lambda^{(2)}) \oplus \text{Anti}(\Lambda^{(2)}).
\end{equation}
The off-diagonal elements of the symmetric tensor $\text{Sym}(\Lambda^{(2)})$ map directly to the quadratic covariation $\langle X^1, X^2 \rangle_T$ (representing classical correlation), while the anti-symmetric components $\text{Anti}(\Lambda^{(2)})$ encode the expected L\'evy Area (representing directional lead-lag relationships).

\textbf{Step 2: The Correlation Breakdown Shock} \\
A correlation breakdown is defined as a rapid dissolution of linear dependence between the assets. Mathematically, this is formalised as a perturbation $\Delta\Lambda^{(2)}$ applied to the Measure Bridge, such that the off-diagonal symmetric coordinates are neutralised:
\begin{equation}
    \text{Sym}(\Lambda^{(2)} + \Delta\Lambda^{(2)})_{1,2} \to 0.
\end{equation}
This perturbation forces the joint covariance matrix into a diagonal configuration, effectively decoupling the terminal geometry of the asset pair.

\textbf{Step 3: Entropy Maximisation under the Measure Bridge} \\
The Measure Bridge $\Lambda_{t,T}$ serves as the sufficient statistic mapping the physical distribution of paths to the pricing measure. By the Principle of Maximum Entropy, when the correlation constraints (off-diagonal symmetric terms) are removed by the shock $\Delta\Lambda^{(2)}$, the joint measure relaxes into the product of its marginals. 
Let $H(\cdot)$ denote the Shannon entropy of the path-measure. The decoupling forces the joint entropy to its theoretical upper bound for fixed marginal variances:
\begin{equation}
    H(\mathbb{P}_{X^1, X^2}) \to H(\mathbb{P}_{X^1}) + H(\mathbb{P}_{X^2}).
\end{equation}
Thus, the Measure Bridge necessarily rotates into a fundamentally higher-entropy state.

\textbf{Step 4: Vanishing of Path-Texture (Lead-Lag)} \\
Finally, we assess the impact of this higher-entropy state on the path-texture. If the updated Measure Bridge treats the asset paths as independent (due to the diagonalised symmetric tensor), the expectation of their cross-iterated integrals evaluates to zero. Specifically, for independent continuous paths, the expected anti-symmetric area vanishes:
\begin{equation}
    \mathbb{E}_{\Lambda_{updated}}[\text{Anti}(S^{(2)})_{1,2}] = 0.
\end{equation}
Consequently, the structural shock to the symmetric correlation coordinates geometrically forces the anti-symmetric coordinates to zero, wiping out the historical lead-lag relationships and leaving the assets purely idiosyncratic.
\end{proof}

%-------------------------------------------------------------- 
\subsection{Proof of the SCF resilience threshold}
\label{app:proof_scf_resilience_threshold}
%-------------------------------------------------------------- 

In this appendix we prove Theorem (\ref{thm:scf_resilience_threshold}).

\begin{proof}
The proof proceeds by analysing the topological fixed-point conditions of the Self-Consistent Field (SCF) under parametric stress, utilising the exponential mapping of the signature Lie algebra.

\textbf{Step 1: The Stressed Objective and Fixed-Point Condition} \\
Under normal market conditions, the generative manifold $\mathcal{S}_{sig}$ is parameterised by $\theta$, with the SCF constraint enforcing alignment between the generative path-law and the projective manifold: $\| \bar{S}_{t,T}(\theta) - \hat{\Phi}_{T|t} \| = 0$. 
Let a stress scenario be defined as a perturbation $\Delta\Lambda$ to the underlying measure. The stressed objective becomes $\mathcal{L}^*(\theta) = \mathcal{L}(\theta; \Lambda + \Delta\Lambda)$. Resilience requires the existence of a new equilibrium state $\theta^*$ such that the gradient vanishes:
\begin{equation}
    \nabla_{\theta} \mathcal{L}^*(\theta^*) = 0,
\end{equation}
while strictly maintaining the SCF alignment constraint.

\textbf{Step 2: Algebraic Generator of the Perturbation} \\
The expected signature takes values in the tensor algebra, specifically within the group of group-like elements (the signature group) $\mathcal{G} \subset T((\mathbb{R}^d))$. Any continuous perturbation to the path-measure induces a transformation on the signature manifold that can be represented via the exponential map of its associated Lie algebra $\mathfrak{g}$. 
Thus, the stressed signature state $\hat{\Phi}^*$ can be expressed as:
\begin{equation}
    \hat{\Phi}^* = \exp(\xi) \otimes \hat{\Phi}, \quad \text{for some } \xi \in \mathfrak{g},
\end{equation}
where $\xi$ is the Lie algebraic generator directly proportional to the physical perturbation $\Delta\Lambda$.

\textbf{Step 3: Contraction Mapping and the Spectral Radius} \\
The ARL optimisation solves for $\theta^*$ via an iterative fixed-point algorithm (the SCF iteration). For this iterative mapping to converge, the Jacobian of the gradient field (the Hessian $H = \nabla_{\theta}^2 \mathcal{L}^*$) must remain positive definite, and the update operator must be a strict contraction. 
The stability of this contraction in the Lie group is governed by the adjoint representation $\text{ad}_{\xi}$. Specifically, convergence is guaranteed only if the magnitude of the perturbation falls within the radius of convergence of the Baker-Campbell-Hausdorff (BCH) formula and the logarithmic map on $\mathcal{G}$. Let this critical boundary be denoted by the spectral radius $\rho(\mathfrak{g})$.

\textbf{Step 4: Convergence Failure and Regime Shift} \\
Assume the perturbation is exceedingly violent, such that its algebraic generator exceeds this threshold:
\begin{equation}
    \|\Delta\Lambda\| \propto \|\xi\| > \rho(\mathfrak{g}).
\end{equation}
In this domain, the eigenvalues of the updated operator cross the unit circle, destroying the contraction property. The Hessian $H$ loses its positive-definite structure, and the iterative sequence for $\theta$ diverges. 

\medskip
\noindent
Geometrically, this implies that the stressed path-measure cannot be reached via a smooth, continuous deformation (a flow) along the existing manifold $\mathcal{S}_{sig}$. The required path-texture belongs to a disjoint topological sector of the signature space. Consequently, no $\theta^*$ can satisfy the SCF constraint, proving that the perturbation represents a fundamental regime shift rather than a modellable parametric shock.
\end{proof}

%-------------------------------------------------------------- 
\subsection{Proof of the linear complexity of the geometric risk engine}
\label{app:proof_linear_complexity_geometric_risk_engine}
%-------------------------------------------------------------- 

In this appendix we prove Proposition (\ref{pro:linear_complexity_geometric_risk_engine}).

\begin{proof}
The proof relies on the algebraic separation of the asset's dynamics from the portfolio's contract specifications, a core property of the Algebraic Pricing Theory (APT) framework.

\textbf{Step 1: Algebraic Separation of Expectation and Payoff} \\
Let $S(X)_{0,T}$ denote the signature of the underlying market path. By the Universal Approximation Theorem for signatures, the path-dependent payoff of any sufficiently regular portfolio $\Pi$ can be uniformly approximated by a linear functional on the signature space, parameterised by a weight tensor $\mathbf{w}_{\Pi}$:
\begin{equation}
    \text{Payoff}(\Pi) \approx \langle \mathbf{w}_{\Pi}, S(X)_{0,T} \rangle.
\end{equation}
Under the pricing measure $\mathbb{Q}$, the value of the portfolio at time $t=0$ is the expected discounted payoff. Because the inner product is a bounded linear operator, the expectation commutes with the tensor contraction:
\begin{align}
    V(\Pi) &= \mathbb{E}^{\mathbb{Q}} \left[ \langle \mathbf{w}_{\Pi}, S(X)_{0,T} \rangle \right] \\
           &= \langle \mathbf{w}_{\Pi}, \mathbb{E}^{\mathbb{Q}} [S(X)_{0,T}] \rangle \\
           &= \langle \mathbf{w}_{\Pi}, \hat{\Phi} \rangle.
\end{align}

\textbf{Step 2: Computational Complexity of Valuation} \\
In practice, the infinite-dimensional signature is truncated at a finite level $N$. For an underlying market path in $\mathbb{R}^d$, the dimension of the truncated tensor algebra is $D = \sum_{k=0}^N d^k$. 
Both the weight tensor $\mathbf{w}_{\Pi}$ and the expected signature $\hat{\Phi}$ are represented as $D$-dimensional real-valued vectors. The valuation reduces to a standard Euclidean inner product in $\mathbb{R}^D$:
\begin{equation}
    \langle \mathbf{w}_{\Pi}, \hat{\Phi} \rangle = \sum_{i=1}^D w_i \hat{\phi}_i.
\end{equation}
This operation requires exactly $D$ scalar multiplications and $D-1$ additions, yielding a strict computational complexity of $\mathcal{O}(D)$.

\textbf{Step 3: Complexity of Risk Sensitivities} \\
Within the geometric risk engine, the primary risk factors are the coordinates of the expected signature itself. The sensitivity (Geometric Delta) of the portfolio value to these risk factors is the gradient:
\begin{equation}
    \nabla_{\hat{\Phi}} V(\Pi) = \mathbf{w}_{\Pi}.
\end{equation}
Assuming $\mathbf{w}_{\Pi}$ is known (compiled offline), the calculation of the sensitivities requires zero additional operations ($\mathcal{O}(1)$). Computing the hypothetical P\&L for a stressed market state $\hat{\Phi} + \Delta\hat{\Phi}$ is simply $\langle \mathbf{w}_{\Pi}, \Delta\hat{\Phi} \rangle$, which is again exactly $\mathcal{O}(D)$.

\textbf{Step 4: Invariance to Path-Dependency} \\
In traditional quantitative models (e.g., Monte Carlo or Finite Difference methods), the computational cost scales with the complexity of the exotic payoff (e.g., the number of barrier observation dates, or the nesting of Asian averaging periods). 
In the geometric framework, all topological complexity of the contract is entirely absorbed into the offline compilation of the static weight vector $\mathbf{w}_{\Pi}$. Once compiled, evaluating a continuously monitored path-dependent exotic requires the exact same $\mathcal{O}(D)$ inner product as evaluating a simple vanilla terminal payoff. Thus, the real-time computational cost is structurally invariant to the path-dependency of the instrument.
\end{proof}

%-------------------------------------------------------
%\end{comment}
%-------------------------------------------------------

%-------------------------------------------------------
\begin{comment}
%-------------------------------------------------------

%-------------------------------------------------------
\end{comment}
%-------------------------------------------------------

%-----------------------------------------------------------------------------

%\bibliographystyle{plain}
%\nocite{*}
%\bibliography{varianceRiskbis}

\begin{thebibliography}{99}
%-----------------------------------------------------------------------------

\bibitem[2018]{ISDA18}
~ISDA, ~AFME, ~IIF,
Joint Industry Response to the BCBS Consultation on Revisions to the Minimum Capital Requirements for Market Risk.

\bibitem[2019]{BCBS19}
~Basel Committee on Banking Supervision,
Minimum capital requirements for market risk. 
Bank for International Settlements (BIS).

\bibitem[2015]{Bergomi15}
~Bergomi L.,
Stochastic volatility modeling. 
Chapman and Hall (CRC Press).

\bibitem[2026a]{Bloch26a}
~Bloch D.,
Anticipatory reinforcement learning: From generative path-laws to distributional value functions.
Working Paper, SSRN $id=6357078$, University of Paris 6 Pierre et Marie Curie.

\bibitem[2026b]{Bloch26b}
~Bloch D.,
Geometric measure transformation: A model-free bridge for path-dependent derivative pricing via signature manifolds.
Working Paper, SSRN $id=6411158$, University of Paris 6 Pierre et Marie Curie.

\bibitem[2026c]{Bloch26c}
~Bloch D.,
The SigSwap: A unified framework for path-geometry trading and algebraic pricing theory.
Working Paper, SSRN $id=6426058$, University of Paris 6 Pierre et Marie Curie.

\bibitem[2016]{ChevyrevEtAl16}
~Chevyrev I., ~Lyons T.,
Characteristic functions of measures on geometric rough paths.
{\small\it Annals of Probability}, {\small\bf 44}, (6), pp 4049--4091. Also Working Paper, arXiv:1307.3580.

\bibitem[2006]{Cont06}
~Cont R.,
Model risk in pricing and hedging of derivative instruments. 
{\small\it Mathematical Finance}, {\small\bf 16}, (3), pp 519--547.

\bibitem[2023]{CuchieroEtAl23}
~Cuchiero C., ~Gazzani G., ~Svaluto-Ferro S.,
Signature-based models: Theory and calibration.
{\small\it SIAM Journal on Financial Mathematics}, {\small\bf 14}, (3). Also Working Paper in arXiv:2207.13136.

\bibitem[2025]{CuchieroEtAl25}
~Cuchiero C., ~Primavera F., ~Svaluto-Ferro S.,
Universal approximation theorems for continuous functions of c\`adl\`ag paths and L\'evy-type signature models.
{\small\it Finance Stoch}, {\small\bf 29}, pp 289--342.  

\bibitem[2025b]{CuchieroEtAl25b}
~Cuchiero C., ~Gazzani G., ~M\"oller J., ~Svaluto-Ferro S.,
Joint calibration to SPX and VIX options with signature-based models.
{\small\it Mathematical Finance}, {\small\bf 35}, (1), pp 161--213. Working Paper, arXiv:2301.13235.

\bibitem[2004]{DanielssonEtAl04}
~Danielsson J., ~Shin H.S., ~Zigrand J.P.,
The impact of risk regulation on price dynamics. 
{\small\it Journal of Banking \& Finance}, {\small\bf 28}, (5), pp 1069--1087.

\bibitem[1994]{Dupire94}
~Dupire B., 
Pricing with a smile.
{\small\it Risk}, {\small\bf 7}, pp 18--20.

\bibitem[2002]{HaganEtAl02}
~Hagan P.S., ~Kumar D., ~Lesniewski A.S., ~Woodward D.E.,
Managing smile risk.
{\small\it Wilmott Magazine}, pp 84--108.

\bibitem[1993]{Heston93}
~Heston S., 
A closed-form solution for options with stochastic volatility with applications to bond and currency options.
{\small\it Review of Financial Studies}, {\small\bf 6}, pp 327--343.

\bibitem[2020]{KidgerEtAl20}
~Kidger P., ~Morrill J., ~Foster J., ~Lyons T.,
Neural controlled differential equations for irregular time series.
Working Paper, arXiv:2005.08926.

\bibitem[2021]{KidgerEtAl21}
~Kidger P., ~Foster J., ~Li X., ~Lyons T.,
Neural SDEs as infinite-dimensional GANs. 
In {\small\it International Conference on Machine Learning (ICML)}, and also arXiv:2102.03657.

\bibitem[2019]{KiralyEtAl19}
~Kiraly F.J., ~Oberhauser H.,
Kernels for sequentially ordered data.
{\small\it JMLR}, {\small\bf 20}, (31), pp 1--45. Also in arXiv:1601.08169.

\bibitem[2013]{LevinEtAl13}
~Levin D., ~Lyons T., ~Ni H., 
Learning from the past, predicting the statistics for the future, learning an evolving system.
Working Paper, arXiv:1309.0260.

\bibitem[2007]{LyonsEtAl07}
~Lyons T.J., ~Caruana M., ~Levy T.,
Differential equations driven by rough paths.
volume 1908 of Lecture Notes in Mathematics, Springer, Berlin.

\bibitem[2011]{LyonsEtAl11}
~Lyons T., ~Ni H.,
Expected signature of two dimensional Brownian motion up to the first exit time of the domain. 
Working Paper, arXiv:1101.5902v4.

\bibitem[2020]{LyonsEtAl20}
~Lyons T.J., ~Nejad S., ~Perez Arribas I.P.,
Non-parametric pricing and hedging of exotic derivatives.
{\small\it Applied Mathematical Finance}, {\small\bf 27}, (6), pp 457--494.

\bibitem[2022]{LyonsEtAl22}
~Lyons T., ~McLeod A.D.,
Signature methods in machine learning.
Working Paper, arXiv:2206.14674.

\bibitem[1976]{Merton76}
~Merton R.C., 
Option pricing when underlying stock returns are discontinuous. 
{\small\it Journal of Financial Economics}, {\small\bf 3}, pp 125--144.




%------------------------------------------------------------
\end{thebibliography}

%-----------------------------------------------------------------------------
\newpage
%\begin{thebibliography}{FMW98}

%-------------------------------------------------------------

%-------------------------------------------------------------------

\end{document}